\newtheorem{theorem}{Theorem}[]
\newtheorem{lemma}[theorem]{Lemma}
\newtheorem{corollary}[theorem]{Corollary}
\newtheorem{proposition}[theorem]{Proposition}
\newtheorem{conjecture}{Conjecture}
\newcommand{\ceil}[1]{\ensuremath{\left\lceil{#1}\right\rceil}}%
\newcommand{\ignore}[1]{}%
\newcommand{\ProblemFormat}[1]{{\sc #1}}
\newcommand{\ProblemName}[1]{\ProblemFormat{#1}\xspace}
\newcommand{\probTSAT}[0]{\ProblemName{$3$-SAT}}
\newcommand{\probSteinerRC}{\ProblemName{Steiner Rainbow $k$-Coloring}}
\newcommand{\probPartRC}{\ProblemName{Subset Rainbow $k$-Coloring}}
\newcommand{\probPartRTwoCExt}{\ProblemName{Subset Rainbow $2$-Coloring Extension}}
\newcommand{\probPartRCExt}{\ProblemName{Subset Rainbow $k$-Coloring Extension}}
\newcommand{\probPartRTwoC}{\ProblemName{Subset Rainbow $2$-Coloring}}
\newcommand{\probMaxPartRC}{\ProblemName{Maximum Subset Rainbow $k$-Coloring}}
\newcommand{\probMaxRC}{\ProblemName{Maximum Rainbow $k$-Coloring}}
\newcommand{\probMaxRTwoC}{\ProblemName{Maximum Rainbow $2$-Coloring}}
\newcommand{\probRC}{\ProblemName{Rainbow $k$-Coloring}}
\newcommand{\probRTwoC}{\ProblemName{Rainbow $2$-Coloring}}
\newcommand{\eps}{\varepsilon}
\newcommand{\Var}{{\mathrm{Var}}}
\newcommand{\Cl}{{\mathrm{Cl}}}
\renewcommand{\middle}{{\mathrm{mid}}}
\newcommand{\lay}{{\mathrm{lay}}}
\newcommand{\up}{{\mathrm{up}}}
\newcommand{\low}{{\mathrm{low}}}
\newcommand{\bc}{{\mathrm{bc}}}
\newcommand{\Dom}{{\mathrm{Dom}}}
\newcommand{\clone}{{\mathrm{clone}}}
\newcommand{\twin}{{\mathrm{twin}}}
\newcommand{\opp}{{\mathrm{opposite}}}
\newcommand{\prob}[1]{\mathrm{Pr}[#1]}
\newcommand{\heading}[1]{\medskip\noindent{\bf #1.\ }}%
\newcommand{\barE}{{\ensuremath{\bar{E}}}}
\newcommand{\barG}{{\ensuremath{\bar{G}}}}
\newcommand{\bibarG}{{\ensuremath{\hat{G}}}}
\newcommand{\Pp}{{\ensuremath{\mathcal{P}}}}
\def\id{{\rm id}}
\def\prj{{\rm prj}}
\def\dist{{\rm dist}}
\begin{document}
\title{On the fine-grained complexity of rainbow coloring\thanks{Work partially supported by the National Science Centre of Poland, grant number 2013/09/B/ST6/03136 (\L{}.K., A.S.), and by the Emil Aaltonen Foundation (J.L.). } 
}

\date{}

\author{\L ukasz Kowalik\thanks{University of Warsaw, Poland} \and Juho Lauri\thanks{Tampere University of Technology, Finland} \and Arkadiusz Soca\l a\thanks{University of Warsaw, Poland}}

\maketitle

\begin{abstract}
The \probRC problem asks whether the edges of a given graph can be colored in $k$ colors so that every pair of vertices is connected by a rainbow path, i.e., a path with all edges of different colors. 
Our main result states that for any $k\ge 2$, there is no algorithm for \probRC running in time $2^{o(n^{3/2})}$, unless ETH fails.
Motivated by this negative result we consider two parameterized variants of the problem.
In \probPartRC problem, introduced by Chakraborty \emph{et al.}~[STACS 2009, J. Comb. Opt. 2009], we are additionally given a set $S$ of pairs of vertices and we ask if there is a coloring in which all the pairs in $S$ are connected by rainbow paths. 
We show that \probPartRC is FPT when parameterized by $|S|$.
We also study \probMaxRC problem, where we are additionally given an integer $q$ and we ask if there is a coloring in which at least $q$ anti-edges are connected by rainbow paths. 
We show that the problem is FPT when parameterized by $q$ and has a kernel of size $O(q)$ for every $k\ge 2$ (thus proving that the problem is FPT), extending the result of Ananth \emph{et al.}~[FSTTCS 2011].
\end{abstract}

\section{Introduction}

The \probRC problem asks whether the edges of a given graph can be colored in $k$ colors so that every pair of vertices is connected by a rainbow path, i.e., a path with all edges of different colors.
Minimum such $k$, called the {\em rainbow connection number} can be viewed as yet another measure of graph connectivity.
The concept of rainbow coloring was introduced by Chartrand, Johns, McKeon, and Zhang~\cite{Chartrand2008:mb} in 2008, while also featured in an earlier book of Chartrand and Zhang~\cite{Chartrand-book}.
Chakraborty, Fischer, Matsliah, and Yuster~\cite{Chakraborty:JCombOpt} describe an interesting application of rainbow coloring in telecommunications.
The problem is intensively studied from the combinatorial perspective, with over 100 papers published by now (see the survey of Li, Shi, and Sun~\cite{Li2012-survey} for an overview).
However, computational complexity of the problem seems less explored.
It was conjectured by Caro, Lev, Roditty, Tuza, and Yuster~\cite{Caro2008:ejc} that the \probRC problem is NP-complete for $k=2$.
This conjecture was confirmed by Chakraborty \emph{et al.}~\cite{Chakraborty:JCombOpt}.
Ananth, Nasre, and Sarpatwar~\cite{Ananth:fsttcs11} noticed that the proof of Chakraborty \emph{et al.} in fact proves NP-completeness for every even $k>1$, and complemented this by showing NP-completeness of the odd cases as well. 
An alternative hardness proof for every $k > 1$ was provided by Le and Tuza~\cite{Le:tech}. 
For complexity results on restricted graph classes, see e.g.,~\cite{Chandran2012:coc,Chandran:fsttcs13,Chandran2015,Eiben:iwoca15}.

For many NP-complete graph problems there are algorithms running in time $2^{O(n)}$ for an $n$-vertex graph.
This is obviously the case for problems asking for a set of vertices, like {\sc Clique} or {\sc Vertex Cover}, or more generally, for problems which admit polynomially (or even subexponentially) checkable  $O(n)$-bit certificates.
However, there are $2^{O(n)}$-time algorithms also for some problems for which such certificates are not known, including e.g., {\sc Hamiltonicity}~\cite{HeldKarp62} and {\sc Vertex Coloring}~\cite{Lawler76}.
Unfortunately it seems that the best known worst-case running time bound for \probRC is $k^m 2^n n^{O(1)}$, where $m$ is the number of edges, which is obtained by checking each of the $k^m$ colorings by a simple $2^n n^{O(1)}$-time dynamic programming algorithm~\cite{Uchizawa-cocoon11}.
Even in the simplest variant of just two colors, i.e., $k=2$, this algorithm takes $2^{O(n^2)}$ time if the input graph is dense.
It raises a natural question: is this problem really much harder than, say, {\sc Hamiltonicity}, or have we just not found the right approach yet?
Questions of this kind have received considerable attention recently.
In particular, it was shown that unless the Exponential Time Hypothesis fails, there is no algorithm running in time $2^{o(n\log n)}$ for {\sc Channel Assignment}~\cite{Socala15}, {\sc Subgraph Homomorphism}, and {\sc Subgraph Isomorphism}~\cite{CyganFGKMPS16}. 
Let us recall the precise statement of the Exponential Time Hypothesis (ETH).

\begin{conjecture}[Exponential Time Hypothesis~\cite{eth}]
There exists a constant $c > 0$, such that 
there is no algorithm solving \probTSAT in time $O^*(2^{cn})$.
\end{conjecture}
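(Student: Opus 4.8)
The final displayed statement is the \emph{Exponential Time Hypothesis}, which is a conjecture rather than a theorem, so strictly speaking there is no proof to propose: ETH is an unproven hypothesis that the paper introduces precisely so that it may be \emph{assumed} in the conditional lower bounds that constitute the paper's main results. Any genuine proof would be a landmark far outside the scope of this work. Indeed, an unconditional bound of the form ``no $O^*(2^{cn})$ algorithm for \probTSAT'' already implies $\mathsf{P} \ne \mathsf{NP}$ (since $\mathsf{P} = \mathsf{NP}$ would yield a polynomial-time, hence $2^{o(n)}$-time, algorithm for \probTSAT), and in fact rules out subexponential algorithms entirely; so a proof of ETH would resolve, and considerably strengthen, the central open problem of the field. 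The ``plan'', then, is really an account of why no such proof can currently be given, and of how the statement is used instead.

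A hypothetical attack would have to produce an \emph{unconditional} super-polynomial time lower bound for \probTSAT against general algorithms, and the plan would be to fix some lower-bound technique and push it to an exponential bound. Circuit-based approaches would try to show that $3$-CNF satisfiability requires circuits of size $2^{\Omega(n)}$, but the best known size lower bounds for explicit functions in $\mathsf{NP}$ are only linear, and the natural-proofs barrier of Razborov and Rudich indicates that the most plausible combinatorial arguments cannot even reach superpolynomial bounds (under standard cryptographic assumptions). Diagonalization-style approaches are obstructed by the relativization barrier of Baker, Gill, and Solovay and the algebrization barrier of Aaronson and Wigderson: relative to suitable oracles the (relativized) problem does admit fast algorithms, so any proof must be non-relativizing and non-algebrizing. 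Reductions from communication complexity or proof complexity can transfer hardness but cannot manufacture an unconditional bottom layer of hardness. The step that is not merely the main obstacle but an outright wall is therefore establishing \emph{any} tight unconditional time lower bound against uniform algorithms — a feat no known technique accomplishes even for much larger classes.

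For these reasons the paper does not attempt to prove the conjecture; it uses it as a working hypothesis, as is standard in fine-grained complexity. Concretely, assuming ETH and invoking the sparsification lemma of Impagliazzo, Paturi, and Zane, one may further assume that the \probTSAT instance has a linear number of clauses, and one then derives ETH-hardness of a target problem by a reduction from \probTSAT that inflates the instance size only mildly (here, roughly quadratically, to obtain the claimed $2^{o(n^{3/2})}$ lower bound for \probRC). Thus the statement above should be read not as something to be established in the sequel, but as the axiom under which all subsequent lower bounds are proved; its resolution, in either direction, would be a major advance in complexity theory and is explicitly beyond what this paper claims.
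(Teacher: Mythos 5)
You are right: the statement is the Exponential Time Hypothesis, which the paper states only as a conjecture and never proves, using it solely as the assumption underlying its conditional lower bounds (via the Sparsification Lemma and the subsequent reductions). Your treatment matches the paper's exactly, so there is nothing further to compare.
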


Note that some kind of a complexity assumption, like ETH, is hard to avoid when we prove exponential lower bounds, unless one aims at proving ${\rm P} \ne {\rm NP}$.

\heading{Main Result}
Our main result states that for any $k \ge 2$ there is no algorithm for \probRC running in time $2^{o(n^{3/2})}$, unless the Exponential Time Hypothesis fails.
To our best knowledge this is the first NP-complete graph problem for which the existence of a $2^{o(n^{1+\epsilon})}$-time algorithm is excluded (under reasonable complexity assumptions), for an $\epsilon>0$.

\heading{Remaining Lower Bounds}
The proof of our main result implies a few corollaries, which may be of independent interest.
First, we show that ETH implies that for any $k\ge 2$, \probRC has no algorithm running in time $2^{o(m/\log m)}$, where $m$ is the number of edges. 
This shows that the best known algorithm, running in time $2^{m\log k+n}n^{O(1)}$, is not far from being optimal if we consider the problem as parameterized by the number of edges.
Second, we study a generalized problem, called \probPartRC, introduced by Chakraborty \emph{et al.}~\cite{Chakraborty:JCombOpt} as a natural intermediate step in reductions from 3-SAT to \probRC.
In \probPartRC, we are given a connected graph $G$, and a set of pairs of vertices $S\subseteq {V(G) \choose 2}$.
Elements of $S$ are called {\em requests}.
For a given coloring of $E(G)$ we say that a request $\{u,v\}$ is {\em satisfied} if $u$ and $v$ are connected by a rainbow path.
The goal in \probPartRC is to determine whether there is a $k$-coloring of $E(G)$ such that every pair in $S$ is satisfied.
Our main result implies that \probPartRC admits no algorithm running in time $2^{o(n^{3/2})}$, under ETH.
Moreover, we show that ETH implies that this problem admits neither $2^{o(m)}$ nor $2^{o(|S|)}$ running time.
An interesting feature here is that for $k=2$ these bounds are {\em tight} up to a polynomial factor (a $2^mn^{O(1)}$ algorithm is immediate, and a $2^{|S|}n^{O(1)}$-time algorithm is discussed in the next paragraph).

\heading{New Algorithms}
In the context of the hardness results mentioned above it is natural to ask for FPT algorithms for \probPartRC.
We show that for every fixed $k$, \probPartRC parameterized by $|S|$ is FPT: we show an algorithm running in time $|S|^{O(|S|)}n^{O(1)}$.
For the 2 color case we are able to show a different, faster algorithm running in time $2^{|S|}n^{O(1)}$, which is tight up to a polynomial factor.
We also study the \probMaxRC problem, introduced by Ananth, Nasre, and Sarpatwar~\cite{Ananth:fsttcs11}.
Intuitively, the idea is to parameterize the problem by the number of pairs to satisfy.
However, all pairs of adjacent vertices are trivially satisfied by any edge-coloring.
Hence, we parameterize by the number of anti-edges to satisfy.
More formally, in \probMaxRC we are given a graph $G=(V,E)$, an integer $q$, and asked whether there is a coloring of $E$ that satisfies at least $q$ anti-edges.
First, we show that the maximization version of the problem (find maximum such $q$) admits a constant factor approximation algorithm for every fixed value of $k$. 
Second, we show that \probMaxRC is FPT for every $k\ge 2$, which generalizes the result of Ananth \emph{et al.}~\cite{Ananth:fsttcs11} who showed this claim for the $k=2$ case.
Our algorithm runs in time $2^{q\log q}n^{O(1)}$ for any $k$, which is faster than the algorithm of Ananth \emph{et al.} for 2 colors.
For 2 colors we give an even faster algorithm, running in time $8^qn^{O(1)}$.
We also show that the problem admits a kernel size $O(q)$, i.e., that there is a polynomial-time algorithm that returns an equivalent instance with $O(q)$ vertices.
(For more background on kernelization see e.g.,~\cite{fptbook}.)
Before, this was known only for $k=2$ (due to Ananth \emph{et al.}~\cite{Ananth:fsttcs11}).

\subsection{Notation}
For standard graph-theoretic notions, we refer the reader to~\cite{Diestel-book}.
All graphs we consider in this paper are simple and undirected.
We denote $\Delta_1(G)=\max\{\Delta(G),1\}$.

A \emph{rainbow walk} is a walk with all edges of different colors.
By $\barE$ we denote the set of anti-edges, i.e., $\barE = {V \choose 2} \setminus E$.
When $G=(V,E)$ is a graph then $\barG=(V,\barE)$ is its {\em complement graph}.
By $x^{\underline{k}}$ we denote the falling factorial, i.e., $x^{\underline{k}} = x(x-1)\cdots(x-k+1)$.

If $I$ and $J$ are instances of decision problems $P$ and $R$, respectively, then we say that $I$ and $J$ are {\em equivalent}, when either both $I$ and $J$ are YES-instances or both are NO-instances.

\subsection{Organization of the paper}
In Section~\ref{sec:hardness} we present our hardness results.
Sections~\ref{sec:alg-subset} and~\ref{sec:alg:max} contain our algorithms for \probPartRC and \probMaxRC, respectively.
Finally, in Section~\ref{sec:further} we discuss some directions of further work.

\section{Hardness of rainbow coloring}
\label{sec:hardness}

The main goal of this section is to show that for any $k\ge 2$ \probRC does not admit an algorithm running in time $2^{o(n^{3/2})}$, unless the Exponential Time Hypothesis fails.
Let us give a high-level overview of our proof.
A natural idea would be to begin with a \probTSAT formula $\phi$ with $n$ variables and then transform it in time $2^{o(n)}$ to an equivalent instance $G=(V,E)$ of \probRC with $O(n^{2/3})$ vertices.
Then indeed a $2^{o(|V|^{3/2})}$-time algorithm that solves \probRTwoC can be used to decide \probTSAT in time $2^{o(n)}$.
Note that in a typical NP-hardness reduction, we observe some polynomial blow-up of the instance size.
For example, one can verify that in the reduction of Chakraborty \emph{et al.}~\cite{Chakraborty:JCombOpt}, the initial \probTSAT formula with $n$ variables and $m$ clauses is transformed into a graph with $\Theta(n^4+m^4)$ vertices and edges. 
In our case, instead of a blow-up we aim at {\em compression}: the number of vertices needs to be much smaller than the number of variables in the input formula $\phi$.
As usual in reductions, variables and clauses in $\phi$ are going to correspond to some structures in $G$, called gadgets.
The compression requirement means that our gadgets need to share vertices.
The more clauses we have the harder this task is. 
For that reason, we apply the following well-known Sparsification Lemma, which allows for assuming that the number of clauses is $O(n)$.

\begin{lemma}[Sparsification Lemma~\cite{seth}]
\label{lem_sparsification}
For each $\eps > 0$ there exist a constants $c_{\eps}$,
such that any \probTSAT formula $\varphi$ with $n$ variables
can be expressed as $\varphi = \vee_{i=1}^t \psi_i$, where $t \le 2^{\eps n}$ 
and each $\psi_i$ is a \probTSAT formula with the same variable set
as $\varphi$, but contains at most $c_{\eps}n$ clauses.
Moreover, this disjunction can be computed in time $O^*(2^{\eps n})$.
\end{lemma}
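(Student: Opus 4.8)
\heading{Proof idea}
This is the Sparsification Lemma of Impagliazzo, Paturi and Zane, so the plan is to reconstruct their branching argument. I would build a \emph{branching tree} whose root is labelled by $\varphi$ and in which every node is labelled by a $3$-CNF formula over the same variable set as $\varphi$; every branching rule below is an exhaustive case distinction on the truth values of a few literals, so the set of satisfying assignments of a node is exactly the disjoint union of those of its children. Hence, if $\psi_1,\dots,\psi_t$ are the leaves, then $\varphi\equiv\bigvee_{i=1}^t\psi_i$ in the required sense, and in particular $\varphi$ is satisfiable iff some $\psi_i$ is. A node becomes a leaf once, for each $i\in\{1,2,3\}$, it has at most $c'n$ clauses of size exactly $i$; a leaf then has at most $3c'n=:c_\eps n$ clauses. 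The threshold $c'$ and the auxiliary constants below are fixed at the very end as functions of $\eps$.

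The branching step at an internal node $F$ (from which we first delete tautological and duplicated clauses): let $i$ be the largest size for which $F$ has more than $c'n$ clauses of size $i$ (necessarily $i\in\{2,3\}$ once $c'\ge 2$, as there are at most $2n$ distinct unit clauses). Viewing these size-$i$ clauses as $i$-element sets over the $2n$ literals, an iterated counting argument together with the Erd\H{o}s--Rado sunflower lemma shows that, provided $c'$ is a large enough constant, this family contains a \emph{sunflower}: a core $Y$ of literals with $1\le|Y|\le i-1$ and clauses $C_1,\dots,C_h\supseteq Y$ whose petals $P_m:=C_m\setminus Y$ are nonempty and pairwise disjoint, where $h$ is any prescribed constant (taken large, depending on $\eps$). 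Using $\bigwedge_m(Y\vee P_m)\equiv Y\vee\bigwedge_m P_m$, I branch according to which literal of $Y$ (if any) is the first to be satisfied: for each $r=1,\dots,|Y|$ a child in which the first $r-1$ literals of $Y$ are false and the $r$-th is true, and one child in which every literal of $Y$ is false. In a child of the first kind all $h$ clauses $C_m$ become satisfied and vanish while $r\ge 1$ variables get fixed; in the last child each $C_m$ shrinks to the shorter clause $P_m$, so we obtain $h$ pairwise disjoint clauses of size $i-|Y|<i$ (and when $i-|Y|=1$ these are $h$ unit clauses on distinct variables, which get unit-propagated, fixing $h$ further variables), while $|Y|\ge 1$ variables get fixed. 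In every child we substitute the fixed values into $F$, simplify, drop the child if an empty clause appears (it is unsatisfiable), and keep the fixed variables (recorded by unit clauses) so the variable set is preserved. Since each child fixes at least one new variable, the tree has depth at most $n$ and the process terminates.

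The substantial point, and the step I expect to be the main obstacle, is the accounting showing that this tree has at most $2^{\eps n}$ leaves while the leaves stay $O(n)$-sparse; the two demands are in tension, since large thresholds $h,c'$ force sparsity but threaten a deep tree. The standard resolution is an amortized/potential argument: one shows that along every root-to-leaf path the number of branchings is at most $\eps'n$, with $\eps'$ arbitrarily small once $h$ and $c'$ are large enough. Intuitively, a branching either fixes variables outright, or (in the all-false child) manufactures $h$ pairwise disjoint \emph{short} clauses that must themselves be consumed by later branchings; weighting clauses by a rapidly decreasing function of their size lets one charge the future work against present progress, and it is precisely the disjointness of the petals that makes the charge large enough. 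Since every branching produces at most $i\le 3$ children, the leaf count is then at most $3^{\eps'n}\le 2^{\eps n}$ for $\eps'=\eps/\log_2 3$. Pinning down the exact potential and the constants is the delicate part; the rest is bookkeeping.

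For the running time, the tree has at most $2^{\eps n}$ leaves and hence $O(2^{\eps n})$ nodes, and at each node we only need to find the heaviest clause size, extract a sunflower (greedily, in polynomial time), and perform substitutions — all in time polynomial in $n$, since there are at most $O(n^3)$ distinct clauses. Therefore the whole decomposition, i.e.\ the list $\psi_1,\dots,\psi_t$, is produced in time $O^*(2^{\eps n})$, as claimed.
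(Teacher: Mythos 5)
The paper does not prove this statement at all: it is imported verbatim as a black box from Impagliazzo, Paturi and Zane~\cite{seth}, so there is no in-paper argument to compare against. Your proposal is an attempt to reconstruct the original proof, and while the scaffolding you describe (a branching tree over formulas on the same variable set, case analysis on the literals of a common core $Y$ using $\bigwedge_m(Y\vee P_m)\equiv Y\vee\bigwedge_m P_m$, leaves declared once every clause width is $O(n)$-sparse) is the right shape, the proof is not complete where it matters. The entire content of the Sparsification Lemma is the quantitative claim that the tree has at most $2^{\eps n}$ leaves while every leaf is $c_\eps n$-sparse, and your treatment of this is an acknowledgment that ``the standard resolution is an amortized/potential argument'' whose potential function and constants you do not exhibit. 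Without that accounting there is no proof: the naive bound on the number of branchings along a root-to-leaf path is $n$ (one variable fixed per branching), giving $3^n$ leaves, and nothing you have written rules this out. The IPZ argument resolves the tension via a carefully tuned hierarchy of thresholds $\theta_1\ge\theta_2\ge\dots$ (one per core size), branching preferentially on the largest admissible core, and a weighting of clauses by size that shows each ``all-false'' branch creates enough strictly shorter clauses to pay for itself; reproducing that hierarchy and verifying the charging inequality is the theorem, not bookkeeping.

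Two secondary points. First, the Erd\H{o}s--Rado sunflower lemma does not by itself deliver a sunflower with \emph{nonempty} core: a family of pairwise disjoint sets is a sunflower with empty core, so ``more than $c'n$ clauses of size $i$ implies a sunflower with $1\le|Y|\le i-1$'' needs a separate counting argument (e.g., that $h$ pairwise disjoint size-$i$ clauses use $hi$ distinct literals, combined with a bound on how many clauses can avoid sharing a popular core). Second, your branching (split on the first true literal of $Y$) differs from IPZ's two-way step (replace $C_1,\dots,C_h$ by the clause $Y$ versus set $Y$ false and keep the petals); your variant is logically sound and guarantees depth at most $n$, but it changes the accounting you would need to carry out, so you cannot simply cite their analysis for your tree. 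Since the lemma is anyway a citation in this paper, the honest course is to cite~\cite{seth} rather than to sketch a proof whose decisive step is deferred.
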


Note that by using the Sparsification Lemma we modify our general plan a bit: instead of creating one equivalent instance, we are going to create $2^{\eps n}$ instances (for arbitrarily small $\epsilon$), each with $O(n^{2/3})$ vertices. The following lemma further simplifies the instance. 

\begin{lemma}[\cite{Tovey84}]
\label{lem_transformation}
Given a \probTSAT formula $\varphi$ with $m$ clauses one can transform
it in polynomial time into a formula $\varphi'$ with $O(m)$ 
variables and $O(m)$ clauses, such that $\varphi'$ is
satisfiable iff $\varphi'$ is satisfiable, and
moreover each clause of $\varphi'$ contains exactly three different
variables and each variable occurs in at most $4$ clauses
of $\varphi'$.
\end{lemma}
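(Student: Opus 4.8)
The plan is to follow the classical two-stage construction behind Tovey's result~\cite{Tovey84}. Throughout, "equivalent" means "equisatisfiable", and each stage will be a polynomial-time transformation that multiplies the number of variables and clauses by only a constant factor, so that composing them yields $O(m)$ variables and $O(m)$ clauses.

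\emph{Stage 1: making every clause have exactly three distinct variables.} First I would clean up the input: delete every tautological clause (one containing a literal and its negation), remove repeated literals inside each clause, and discard variables that occur in no clause; this is clearly equivalence-preserving and leaves $n\le 3m$ variables and at most $m$ clauses, each on one, two, or three distinct variables. A clause $(\ell_1\vee\ell_2)$ of size two is replaced by $(\ell_1\vee\ell_2\vee u)$ and $(\ell_1\vee\ell_2\vee\neg u)$ for a fresh variable $u$, and a clause $(\ell_1)$ of size one is replaced by the four clauses $(\ell_1\vee\pm u\vee\pm u')$ for fresh $u,u'$; in each case the conjunction of the new clauses is logically equivalent to the old one, so satisfiability is preserved and the fresh variables are unconstrained. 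This at most quadruples the clause count and adds at most $2m$ variables, so we still have $O(m)$ of each, and now every clause has exactly three distinct variables — though a variable may still occur in arbitrarily many clauses.

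\emph{Stage 2: bounding the number of occurrences.} Fix a variable $v$ occurring in clauses $C_1,\dots,C_t$. Introduce fresh variables $v[1],\dots,v[t]$, replace the occurrence of $v$ in $C_j$ by $v[j]$ keeping its sign, and add the $t$ consistency clauses $(\neg v[i]\vee v[i+1])$ for $i=1,\dots,t$, indices taken cyclically. Reading these as implications $v[1]\Rightarrow v[2]\Rightarrow\cdots\Rightarrow v[t]\Rightarrow v[1]$ shows that in any satisfying assignment all $v[j]$ share one value, so the new formula is equivalent to the old one (extend an assignment by giving each $v[j]$ the value of $v$; conversely collapse the copies to their common value). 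Doing this for every variable, each variable now lies in exactly one "old" clause and in two consistency clauses; since the consistency clauses have only two literals, I would finally re-apply the size-two padding of Stage 1 to each of them. The number of consistency clauses is at most the number of literal occurrences after Stage 1, which is $O(m)$, so after padding we still have $O(m)$ variables and clauses and every clause again has exactly three distinct variables.

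\emph{Occurrence count, and the constant $4$.} A purely-padding variable occurs in exactly two clauses, and each split variable $v[j]$ occurs in one three-literal clause plus the clauses produced by padding its two consistency clauses, so a direct count bounds every occurrence number by a constant; obtaining the optimal value $4$ requires a little care in choosing which variables to split and how to pad, which is exactly the optimization carried out in~\cite{Tovey84}. (The value $4$ is best possible: if each clause has exactly three literals and each variable occurs in at most three clauses, a Hall/SDR argument — assign each clause a private satisfying variable — shows the formula is always satisfiable.) I expect this bookkeeping to be the only delicate point: the equivalence of the two stages is straightforward, and the $O(m)$ bound on the size of $\varphi'$ is immediate from the constant-factor blow-ups.
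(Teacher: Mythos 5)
The paper does not prove this lemma at all --- it is imported verbatim from Tovey's paper --- so the only question is whether your reconstruction actually establishes the stated bounds. Your overall architecture (pad short clauses to exactly three distinct variables, then split each high-occurrence variable into copies tied together by a cyclic chain of implications) is indeed the standard Tovey construction, and your equisatisfiability arguments for both stages are correct: the doubled padding clauses are logically equivalent to the original short clause, and the cyclic implications $v[1]\Rightarrow\cdots\Rightarrow v[t]\Rightarrow v[1]$ do force all copies to agree. The $O(m)$ size bounds are also fine.

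The genuine gap is the constant $4$, and you have put your finger on it yourself without closing it. Count the occurrences of a split copy $v[j]$ after your final re-padding step: it lies in one original three-literal clause, and in two consistency clauses $(\neg v[j-1]\vee v[j])$ and $(\neg v[j]\vee v[j+1])$, each of which your padding replaces by \emph{two} three-literal clauses containing $v[j]$. That is $1+2+2=5$ occurrences, not $4$. Since you correctly observe that $3$ is impossible (every $(3,3)$-formula is satisfiable by the SDR argument), the entire nontrivial content of the lemma is precisely the value $4$, and the sentence ``obtaining the optimal value $4$ requires a little care \ldots which is exactly the optimization carried out in Tovey'' defers the one step that cannot be read off from your construction; a naive attempt to force the padding literal false (e.g.\ four clauses $(\neg u\vee\pm a\vee\pm b)$) only pushes the problem around, since $u$ then occurs five times. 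To be fair, for this paper's purposes any absolute constant would do --- Lemma~\ref{lem:from-TSAT-to-PartRTwoCExt} would go through with ``at most $5$'' at the cost of slightly larger constants in the degree bounds --- but as a proof of the lemma as stated, the argument is incomplete at exactly its delicate point, so either the bound in the statement should be relaxed to a constant or the specific gadget achieving $4$ must be supplied rather than cited.
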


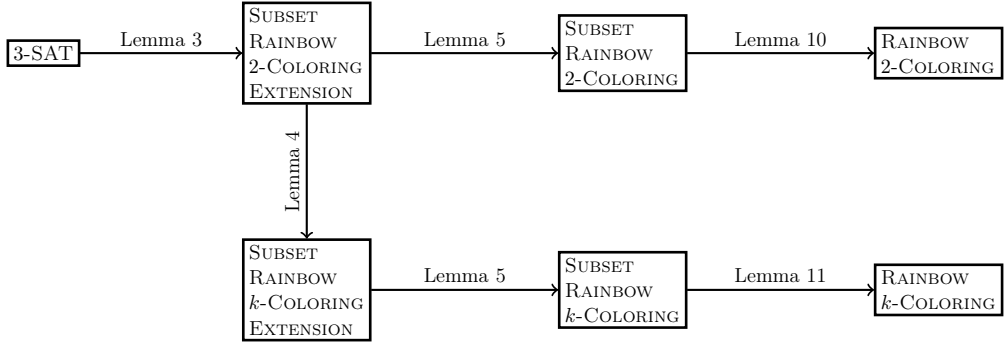
\begin{figure}
\begin{center}
 \begin{tikzpicture}[scale=0.7, every node/.style={scale=0.7}]
\tikzset{prob/.style={draw, rectangle, line width=1pt, inner sep = 3pt}}

\node [prob] (sat) at (0,0) {\probTSAT};
\node [text width=2.2cm,prob] (sr2cext) at (5,0) {\probPartRTwoCExt};
\node [text width=2.2cm,prob] (srkcext) at (5,-4.5) {\probPartRCExt};
\node [text width=2.2cm,prob] (sr2c) at (11,0) {\probPartRTwoC};
\node [text width=2.2cm,prob] (srkc) at (11,-4.5) {\probPartRC};
\node [text width=2.2cm,prob] (r2c) at (17,0) {\probRTwoC};
\node [text width=2.2cm,prob] (rkc) at (17,-4.5) {\probRC};

\draw [->,thick] (sat) -- node [above] {Lemma~\ref{lem:from-TSAT-to-PartRTwoCExt}} (sr2cext);
\draw [->,thick] (sr2cext) -- node [sloped,above] {Lemma~\ref{lem:from-PartRTwoCExt-to-PartRCExt}} (srkcext);
\draw [->,thick] (sr2cext) -- node [sloped,above] {Lemma~\ref{lem:from-PartRCExt-to-PartRC}} (sr2c);
\draw [->,thick] (srkcext) -- node [sloped,above] {Lemma~\ref{lem:from-PartRCExt-to-PartRC}} (srkc);
\draw [->,thick] (sr2c) -- node [sloped,above] {Lemma~\ref{lem:from-PartRTwoC-to-RTwoC}} (r2c);
\draw [->,thick] (srkc) -- node [sloped,above] {Lemma~\ref{lem:from-PartRC-to-RC}} (rkc);
\end{tikzpicture}
\caption{\label{fig:roadmap}The road map of our reductions.}
\end{center}
\end{figure}

Now our goal is to transform a \probTSAT formula $\phi$ with $n$ variables such that every variable occurs in at most 4 clauses, to a graph with $O(n^{2/3})$ vertices --- an equivalent instance of \probRC. We do it in four steps (see Fig~\ref{fig:roadmap}).

In the first step we transform $\phi$ to an instance $I=(G,S,c_0)$ of \probPartRTwoCExt, which is a generalization of \probPartRTwoC, where $c_0$, called a {\em precoloring}, is a partial coloring of the edges of $G$ into two colors and the goal is to determine if there is an edge-coloring of $E(G)$ which extends $c_0$ and such that all pairs of $S$ are satisfied.
The first step is crucial, because here the compression takes place: $|V(G)|=O(n^{2/3})$ and $E(G)=O(n)$.
The major challenge in the construction is avoiding interference between gadgets that share a vertex: to this end we define various conflict graphs and we show that they can be vertex-colored in a few colors. This reduction is described in Section~\ref{sec:TSAT->PartRTwoCExt}.

In the second step (Section~\ref{sec:PartRTwoCExt->PartRCExt}) we reduce \probPartRTwoCExt to \probPartRCExt, for every $k\ge 3$.
In the third step (Section~\ref{sec:PartRCExt->PartRC}) an instance of \probPartRCExt is transformed to an instance of \probPartRC, for every $k\ge 3$. 
The number of the vertices in the resulting instance does not increase more than by a constant factor.
These steps are rather standard, though some technicalities appear because we need to guarantee additional properties of the output instance, which are needed by the reduction in the fourth step. 

The last step (Section~\ref{sec:PartRC->RC}), where we reduce an instance $(G=(V,E),S)$ of \probPartRC to an instance $G'$ of \probRC, is another challenge.
We would like to get rid of the set of requests somehow.
For simplicity, let us focus on the $k=2$ case now.
Here, the natural idea, used actually by Chakraborty \emph{et al.}~\cite{Chakraborty:JCombOpt} is to create, for every $\{u,v\}\not\in S$, a path $(u,x_{uv},v)$ through a new vertex $x_{uv}$. Such a path cannot help any of the requests from $S$ to get satisfied, and by coloring it into two different colors we can satisfy $\{u,v\}$. 
Unfortunately, in our case we cannot afford for creating a new vertex for every such $\{u,v\}$, because that would result in a quadratic blow up in the number of vertices.
However, one can observe that for any biclique (a complete bipartite subgraph) in the graph $(V,{V \choose 2} \setminus S)$ it is sufficient to use just one such vertex $x$ (connected to all the vertices of the biclique).
By applying a result of Jukna~\cite{Jukna-on-set-intersection-representations} we can show that in our specific instance of \probPartRTwoC which results from a \probTSAT formula, the number of bicliques needed to cover all the pairs in ${V \choose 2} \setminus S$ is small enough. 
We show a $2^{|V(G)|}|V(G)|^{O(1)}$-time algorithm to find such a cover.
Although this algorithm does not seem fast, in our case $|V(G)|=O(n^{2/3})$, so this complexity is {\em subexponential} in the number of variables of the input formula, which is enough for our goal. 
The case of $k\ge 3$ is similar, i.e., we also use the biclique cover. 
However, the details are much more technical because for each biclique we need to introduce a much more complex gadget.

\subsection{From \probTSAT to \probPartRTwoCExt}
\label{sec:TSAT->PartRTwoCExt}

Let \probPartRCExt be a generalization of \probPartRC,
where $c_0$ is a partial $k$-coloring of the edges of $G$
and the goal is to determine if there is an edge-coloring of $E(G)$
which extends $c_0$ and such that all pairs of $S$ are satisfied.
For an instance $I=(G,S,c_0)$ of \probPartRCExt (for any $k\ge 2$), let us define a {\em precoloring conflict graph} $CG_I$.
Its vertex set is the set of colored edges, i.e., $V(CG_I)=\Dom(c_0)$.
Two different colored edges $e_1$ and $e_2$ are {\em adjacent} in $CG_I$ when they are incident in $G$ or there is a pair of endpoints $u\in e_1$ and $v\in e_2$ such that $uv \in E(G)\cup S$.

In what follows the reduction in Lemma~\ref{lem:from-TSAT-to-PartRTwoCExt} is going to be pipelined with three further reductions going through \probPartRCExt and \probPartRC to \probRC.
In these three reductions we need to keep the instance small. 
To this end, the instance of \probPartRTwoCExt resulting in Lemma~\ref{lem:from-TSAT-to-PartRTwoCExt} has to satisfy some additional properties, which are formulated in the claim of Lemma~\ref{lem:from-TSAT-to-PartRTwoCExt}.
Their role will become more clear later on.

\begin{lemma} \label{lem:from-TSAT-to-PartRTwoCExt}
Given a \probTSAT formula $\varphi$ with $n$ variables such that each clause of $\varphi$ contains exactly three
variables and each variable occurs in at most four clauses, one can construct in polynomial time an equivalent instance $(G,S,c_0)$ of \probPartRTwoCExt such that $G$ has $O(n^{2/3})$ vertices and $O(n)$ edges.
Moreover, $\Delta(G) = O(n^{1/3})$, $\Delta(V(G),S)=O(n^{1/3})$, $|\Dom(c_0)| = O(n^{2/3})$ and along with the instance $I=(G,S,c_0)$ the algorithm constructs a proper vertex $4$-coloring of $(V(G),E \cup S)$ (so also of $(V(G), S)$) and a proper vertex $O(n^{1/3})$-coloring of the precoloring conflict graph $CG_I$.
\end{lemma}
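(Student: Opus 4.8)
The plan is to combine a sparsification step with a ``tripartite grid'' compression. Since every variable of $\varphi$ occurs in at most four clauses and each clause has three variables, $\varphi$ has $m\le 4n/3 = O(n)$ clauses, so the total number of \emph{objects} --- variables and clauses together --- is $N_0 = O(n)$. Put $N = \ceil{N_0^{1/3}} = O(n^{1/3})$ and fix, once and for all, an injection that assigns to each object a distinct triple $(i,j,\ell)\in[N]^3$. The graph $G$ is built on three ``coordinate layers'' $V_1,V_2,V_3$, each a copy of the grid $[N]\times[N]$, and a fourth class $W$ of $O(N^2)=O(n^{2/3})$ auxiliary vertices shared along rows and columns; in total $|V(G)| = O(n^{2/3})$. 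The gadget realizing an object with triple $(i,j,\ell)$ lives on the three coordinate vertices $v^1_{i,j}\in V_1$, $v^2_{j,\ell}\in V_2$, $v^3_{i,\ell}\in V_3$ (and a bounded number of the vertices of $W$ attached to them), and consists of $O(1)$ edges and $O(1)$ requests. Because a coordinate vertex such as $v^1_{i,j}$ is shared by exactly the $N$ objects whose triple begins with $(i,j)$, it receives only $O(N)=O(n^{1/3})$ gadget edges and requests; this yields $\Delta(G) = O(n^{1/3})$, $\Delta(V(G),S)=O(n^{1/3})$, $|E(G)| = O(n)$ and $|S| = O(n)$. The precoloring $c_0$ is a global ``scaffold'' of $O(1)$ precolored edges per coordinate vertex, so $|\Dom(c_0)| = O(n^{2/3})$.

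Next I would design the variable and clause gadgets in the spirit of the reductions of Chakraborty \emph{et al.}~\cite{Chakraborty:JCombOpt} and Ananth \emph{et al.}~\cite{Ananth:fsttcs11}, but adapted to the two-color, vertex-sharing setting. With two colors the only rainbow paths have length at most two, so a variable gadget contains one free edge whose color (out of the two) encodes the truth value of the variable; the precolored scaffold together with a few requests forces a rainbow path of length two through this edge, so that the associated request is satisfied precisely for one of the two colorings of the free edge. A clause gadget contains $O(1)$ requests that can be simultaneously satisfied through length-two paths iff at least one of its three literals is true. The literal-to-variable wiring is implemented by requests joining coordinate vertices of the variable gadget to coordinate vertices of the clause gadget (routed so that they run between distinct classes), and the bound of four occurrences per variable is exactly what keeps all of these degrees $O(n^{1/3})$. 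The equivalence --- that $\varphi$ is satisfiable iff $c_0$ extends to a $2$-coloring of $E(G)$ satisfying all of $S$ --- then follows from the gadget specifications, provided one checks that gadgets sharing a vertex do not interfere: a coloring forced inside one gadget must never accidentally create a length-two rainbow path satisfying a request of another gadget, and must never over-constrain a shared edge. This interference analysis, carried out through the auxiliary conflict graphs of the overview and the fact that they admit proper colorings with few colors, is the main obstacle and is where essentially all of the work lies; in particular the grid assignment of triples must be arranged so that these conflict graphs have bounded chromatic number, which is what makes a clash-free layout of the gadgets possible.

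Finally I would exhibit the two required vertex colorings. The graph $(V(G), E\cup S)$ is $4$-partite by construction: take the partition of $V(G)$ into the four types $V_1,V_2,V_3,W$, and observe that every gadget edge, every scaffold edge, every request, and every literal-wiring request runs between two different classes (edges and requests go between distinct coordinate layers or between a coordinate layer and $W$, and no edge or request joins two vertices of $W$), so this partition is a proper $4$-coloring of $(V(G), E\cup S)$, hence also of $(V(G),S)$. For the precoloring conflict graph $CG_I$, recall that two precolored edges are adjacent only if they are incident in $G$ or there is an $E\cup S$-edge between an endpoint of one and an endpoint of the other; since $\Delta(G)=O(n^{1/3})$, $\Delta(V(G),S)=O(n^{1/3})$, and every vertex is incident to only $O(1)$ precolored edges, every vertex of $CG_I$ has degree $O(n^{1/3})$, so a greedy algorithm (or, more directly, a coloring of the scaffold edges by one of their grid coordinates) produces a proper $O(n^{1/3})$-coloring. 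All of these objects are clearly computable in polynomial time, which completes the construction.
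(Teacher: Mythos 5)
Your proposal is an architectural sketch, not a proof: the two steps that carry all of the mathematical content of the lemma are explicitly deferred. You write that you ``would design the variable and clause gadgets in the spirit of'' earlier reductions, and that the equivalence ``follows from the gadget specifications, provided one checks that gadgets sharing a vertex do not interfere,'' while conceding that this interference analysis ``is the main obstacle and is where essentially all of the work lies.'' But the lemma \emph{is} that obstacle. Without a concrete gadget one cannot verify the forward direction (that a satisfying assignment yields a consistent extension of $c_0$ in which every request has a length-$2$ rainbow path), and the backward direction is strictly harder: one must show that a satisfying coloring cannot exploit a \emph{spurious} length-$2$ path created by two gadgets sharing a coordinate vertex. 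In the paper this is exactly what properties $(P_1)$--$(P_6)$ guarantee, and they are obtained not from an arbitrary injection of objects into $[N]^3$ but from a staged assignment driven by three different conflict graphs (on variables, on clauses, and within each cluster), each of which must be properly colored \emph{before} the next stage of the assignment is fixed. Your remark that ``the grid assignment of triples must be arranged so that these conflict graphs have bounded chromatic number'' is also off the mark: in the actual construction only the variable conflict graph has bounded degree (at most $8$, since each variable is in at most $4$ clauses); the clause-level and cluster-level conflict graphs have maximum degree $\Theta(n^{1/3})$ and are colored with $\Theta(n^{1/3})$ colors, which is tolerable only because there are $\Theta(n^{1/3})$ clusters to distribute the clauses into. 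Asserting bounded chromatic number hides the place where the $n^{1/3}$ and $n^{2/3}$ parameters have to be balanced.

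Two further points where the sketch would need repair even as a plan. First, a single ``free edge'' per variable does not suffice with two colors: the clause gadget must be able to read the variable's value through a length-$2$ path, which is why the paper gives each variable a $2$-path $u\,\middle(x)\,l$ together with the request $\{u,l\}$ forcing its two edges to get opposite colors; the middle vertex is then the attachment point for the clause-side edges, and the two endpoints $u,l$ are the targets of the literal-polarity requests. Second, the clause gadget is not a matter of ``$O(1)$ requests satisfiable iff some literal is true'' in the abstract: the paper needs the precolored edges $c_{i,j}b^k_{i,j}$ (all colored $F$) precisely so that the single request $\{a_{i,g(C)},c_{i,\gamma(C)}\}$ forces one of the three edges $a_{i,g(C)}b^k_{i,j}$ to be colored $T$, thereby \emph{selecting} a literal, and this selection must then propagate through two more requests back to the variable part. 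None of this follows from the grid layout alone, so the proposal as written does not establish the lemma.
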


\begin{proof}
 Let $m$ denote the number of clauses in $\varphi$. Observe that $m \le \frac{4}3 n$.
 Let $\Var$ and $\Cl$ denote the sets of variables and clauses of $\varphi$.
 For more clarity, the two colors of the partial coloring $c_0$ will be called $T$ and $F$.
 
 Let us describe the graph $G$ along with a set of anti-edges $S$.
 Graph $G$ consists of two parts: the variable part and the clause part.
 The intuition is that in any 2-edge coloring of $G$ that extends $c_0$ and satisfies all pairs in $S$
 \begin{itemize}
  \item colors of the edges in the variable part represent an assignment of the variables of $\varphi$,
  \item colors of the edges in the clause part represent a choice of literals that satisfy all the clauses, and
  \item colors of the edges between the two parts make the values of the literals from the clause part consistent with the assignment represented by the variable part.
 \end{itemize}
 
 \heading{The variable part}
 The vertices of the variable part consist of the {\em middle set} $M$ and $\ceil{n^{1/3}}$ layers $L_1 \cup L_2 \cdots \cup L_{\ceil{n^{1/3}}}$.
 The middle set $M$ consists of vertices $m_i$ for each $i=1,\ldots,\ceil{n^{2/3}}+9$.
 For every $i=1,\ldots,\ceil{n^{1/3}}$ the layer $L_i$ consists of two parts: upper $L_i^{\uparrow} = \{u_{i,j} \ :\ j=1,\ldots, \ceil{n^{1/3}}+3\}$ and lower $L_i^{\downarrow} = \{l_{i,j} \ :\ j=1,\ldots, \ceil{n^{1/3}}+3\}$.
 
 We are going to define four functions: $\middle : \Var \rightarrow M$, $\lay, \up, \low : \Var \rightarrow [\ceil{n^{1/3}}]$.
 Then, for every variable $x\in \Var$ we add two edges $u_{\lay(x),\up(x)} \middle(x)$ and $\middle(x) l_{\lay(x),\low(x)}$.
 Moreover, we add the pair $p_x=\{u_{\lay(x),\up(x)}, l_{\lay(x),\low(x)}\}$ to $S$.
 In other words, $x$ corresponds to the 2-path $u_{\lay(x),\up(x)} \middle(x) l_{\lay(x),\low(x)}$.
 Now we describe a careful construction of the four functions, that guarantee several useful properties (for example edge-disjointness of paths corresponding to different variables).
 
 Let us define the {\em variable conflict graph} $G_V=(\Var, E_{G_V})$, where for two variables $x,y\in \Var$ we have $xy$ are adjacent iff they both occur in the same clause.
 Since every variable occurs in at most 4 clauses, $\Delta(G_V) \le 8$.
 It follows that there is a proper vertex 9-coloring $\alpha:Var \rightarrow [9]$ of $G_v$, and it can be found by a simple linear time algorithm.
 Next, each of the 9 color classes $\alpha^{-1}(i)$ is partitioned into $\ceil{|\alpha^{-1}(i)| / \ceil{n^{1/3}}}$ disjoint groups, each of size at most $\ceil{n^{1/3}}$.
 It follows that the total number $n_g$ of groups is at most $\ceil{n^{2/3}}+9$. 
 Let us number the groups arbitrarily from 1 to $n_g$ and for every variable $x\in \Var$, let $g(x)$ be the number of the group that contains $x$. 
 Then we define $\middle(x) = m_{g(x)}$. Since any group contains only vertices of the same color we can state the following property:
 
 \begin{enumerate}
 	\item [$(P_1)$] If variables $x$ and $y$ occur in the same clause then $\middle(x) \ne \middle(y)$.
 \end{enumerate}

 Now, for every variable $x$ we define its layer, i.e., the value of the function $\lay(x)$.
 Recall that for every $i=1,\ldots, \ceil{n^{2/3}}+9$ the $i$-th group $\middle^{-1}(m_i)$ contains at most $\ceil{n^{1/3}}$ variables.
 Inside each group, number the variables arbitrarily and let $\lay(x)$ be the number of variable $x$ in its group, $\lay(x)\in [n^{1/3}]$.
 This implies another important property.
 
 \begin{enumerate}
  \item [$(P_2)$] If variables $x$ and $y$ belong to the same layer then $\middle(x) \ne \middle(y)$.
 \end{enumerate}

 Observe that every layer gets assigned at most $\ceil{n^{2/3}}+9$ variables.
 For every layer $L_i$ pick any injective function $h_i:\lay^{-1}(i) \rightarrow [\ceil{n^{1/3}}+3]^2$.
 Then, for every variable $x\in \Var$ we put $(\up(x),\low(x)) = h_{\lay(x)} (x)$.
 Note that by $(P_2)$ we have the following.
 
 \begin{enumerate}
  \item [$(P_3)$] For every variable $x$ there is exactly one 2-path in $G$ connecting $p_x$, namely $(u_{\lay(x),\up(x)},\middle(x),l_{\lay(x),\low(x)})$.

  \item [$(P_4)$] For every pair of variables $x,y$ the two unique paths connecting $p_x$ and $p_y$ are edge-disjoint.
 \end{enumerate}

 Although we are going to add more edges and vertices to $G$,
 none of these edges has any endpoint in $\bigcup_i (U_i \cup L_i) $,
 so $P_3$ will stay satisfied.
 
 \heading{The clause part}
 The vertices of the clause part are partitioned into $O(m^{1/3})$ clusters.
 Similarly as in the case of variables, each clause is going to correspond to a pair of vertices in the same cluster.
 Again, the assignment of clauses to clusters has to be done carefully.
 To this end we introduce the {\em clause conflict graph} $G_C=(\Cl,E_{G_C})$. 
 Two different clauses $C_1$ and $C_2$ are adjacent in $G_C$ if $C_1$ contains a variable $x_1$ and $C_2$ contains a variable $x_2$ such that $\middle(x_1)=\middle(x_2)$.
 Fix a variable $x_1$. 
 Since $|\middle^{-1}(\middle(x_1))| \le \ceil{n^{1/3}}$, there are at most $\ceil{n^{1/3}}$ variables $x_2$ such that $\middle(x_1)=\middle(x_2)$.
 Since every clause contains $3$ variables, and each of them is in at most $4$ clauses, $\Delta(G_C)\le 12\ceil{n^{1/3}}$.
 It follows that in polynomial time we can find a proper coloring $\beta$ of the vertices of $G_C$ into at most $12\ceil{n^{1/3}}+1$ colors.
 Moreover, if for any color $j$ its color class $\beta^{-1}(j)$ is larger than $\ceil{n^{2/3}}$ we partition it into $\ceil{|\beta^{-1}(j)|/\ceil{n^{2/3}}}$ new colors.
 Clearly, in total we produce at most $\frac{4}{3}\ceil{n^{1/3}}$ new colors in this way.
 Hence, in what follows we assume that each color class of $\beta$ is of size at most $\ceil{n^{2/3}}$, and the total number of colors $s\le 13\ceil{n^{1/3}}+1$.
 In what follows we construct $s$ clusters $Q_1,\ldots,Q_s$.
 Every clause $C\in \Cl$ is going to correspond to a pair of vertices in the cluster $Q_{\beta(C)}$.

Fix $i=1,\ldots,s$. Let us describe the subgraph induced by cluster $Q_i$.
Define {\em cluster conflict graph} $G_i=(\beta^{-1}(i),E_{G_i})$. 
Two different clauses $C_1, C_2 \in \beta^{-1}(i)$ are adjacent in $G_i$ if there are three variables $x_1$, $x_2$, and $x_3$ such that 
 \begin{enumerate}[$(i)$]
  \item $C_1$ contains $x_1$,
  \item $C_2$ contains $x_2$,
  \item $(\lay(x_1),\up(x_1))=(\lay(x_3),\up(x_3))$, and
  \item $\middle(x_2)=\middle(x_3)$.
 \end{enumerate}
Fix a variable $x_1$ which appears in a clause $C_1\in\beta^{-1}(i)$. 
By our construction, there are at most $\ceil{n^{1/3}}+2$ other variables $x_3$ that map to the same pair as $x_1$ by functions $\lay$ and $\up$.
For each such $x_3$ there are at most $\ceil{n^{1/3}}$ variables $x_2$ such that $\middle(x_2)=\middle(x_3)$; however, at most one of these variables belongs to a clause $C_2$ from the same cluster $\beta^{-1}(i)$, by the definition of the coloring $\beta$.
It follows that $\Delta(G_i)\le 12(\ceil{n^{1/3}}+2)$.
Hence in polynomial time we can find a proper coloring $\gamma_i$ of the vertices of $G_i$ into at most $12(\ceil{n^{1/3}}+2)+1$ colors.
Similarly as in the case of the coloring $\beta$, we can assume that each of the color classes of $\gamma_i$ has at most $\ceil{n^{1/3}}$ clauses, at the expense of at most $\ceil{n^{1/3}}$ additional colors.
It follows that we can construct in polynomial time a function $g:\Cl \rightarrow [\ceil{n^{1/3}}]$ such that for every cluster $i=1,\ldots,s$ and for every color class $S$ of $\gamma_i$ $g$ is injective on $S$.
Let $n_i \le 13\ceil{n^{1/3}}+25$ be the number of colors used by $\gamma_i$.
For notational convenience, let us define a function $\gamma:\Cl\rightarrow [\max_i{n_i}]$ such that for any clause $C$ we have $\gamma(C)=\gamma_{\beta(C)}(C)$.

We are ready to define the vertices and edges of $Q_i$.
It is a union of three disjoint vertex sets $A_i$, $B_i$, and $C_i$.
We have $A_i=\{a_{i,j} \ :\ j = 1,\ldots,\ceil{n^{1/3}}\}$, $B_i=\{b_{i,j}^k \ :\ j = 1,\ldots,n_i, k=1,2,3\}$, and $C_i=\{c_{i,j} \ :\ j = 1,\ldots,n_i\}$.
For every $j = 1,\ldots,n_i$ and for every $k=1,2,3$ we add edge $c_{i,j} b_{i,j}^k$ to $G$, and we color it by $c_0$ to color $F$.
(These are the only edges pre-colored in the whole graph $G$.)
For every clause $C\in \beta^{-1}(i)$ we do the following. 
For each $k=1,2,3$, add the edge $(a_{i,g(C)},b_{i,\gamma(C)}^k)$ to $G$.
Finally, add the pair $\{a_{i,g(C)},c_{i,\gamma(C)}\}$ to $S$. 
Clearly, the following holds:

 \begin{enumerate}
  \item [$(P_5)$] Let $C$ be any clause. Let $i=\beta(C)$ and let $j = g(C)$. 
  Then there are exactly three 2-paths between $a_{\beta(C),g(C)}$ and $c_{\beta(C),\gamma(C)}$, each going through $b_{\beta(C),\gamma(C)}^k$ for $k=1,2,3$.
 \end{enumerate}
 
The description of clusters is now finished.  

\heading{Connections between the two parts}
Consider a clause $C=\{\ell_1,\ell_2,\ell_3\}$ and its $k$-th literal $\ell_k$ for each $k=1,2,3$.
Then for some variable $x$ we have $\ell_k=x$ or $\ell_k=\bar{x}$.
We add the edge $b_{\beta(C),\gamma(C)}^k \middle(x)$ and we add the pair $\{\middle(x), a_{\beta(C),g(C)}\}$ to $S$.
If $\ell_k=x$, we also add the pair $\{b_{\beta(C),\gamma(C)}^k, u_{\lay(x),\up(x)}\}$ to $S$;
otherwise we add the pair $\{b_{\beta(C),\gamma(C)}^k, l_{\lay(x),\low(x)}\}$ to $S$.
We claim the following.

 \begin{enumerate}
  \item [$(P_6)$] Every edge between the two parts was added exactly once, i.e., for every edge $uv$ such that $u$ is in the clause part and $v$ is in the variable part, there is exactly one clause $C$ and exactly one literal $\ell_k \in C$ such that $u=b_{\beta(C),\gamma(C)}^k$ and $v=\middle(x)$, where $x$ is the variable in $\ell_k$.
 \end{enumerate}

 Indeed, assume for a contradiction that there is a clause $C_1$ with its $k_1$-th literal containing $x_1$ and a clause $C_2$ with its $k_2$-th literal containing $x_2$ such that 
 $b_{\beta(C_1),\gamma(C_1)}^{k_1}=b_{\beta(C_2),\gamma(C_2)}^{k_2}$ and $\middle(x_1)=\middle(x_2)$.
 Then $C_1\ne C_2$ by $(P_1)$. Since $\middle(x_1)=\middle(x_2)$, $C_1$ and $C_2$ are adjacent in the clause conflict graph $G_C$.
 It follows that $\beta(C_1)\ne\beta(C_2)$, so two different clusters share a vertex, a contradiction.

This finishes the description of the instance $(G,S,c_0)$. (See Fig.~\ref{fig:main-red}.)

\begin{figure}[t]
	\begin{tikzpicture}[scale=0.6]
\tikzstyle{every node}=[circle, draw, color=black, fill=black, inner sep=0pt, minimum width=10pt]

\draw [rounded corners, thick] (-9,5.5) rectangle (-6,3.5);
\draw [rounded corners, thick] (-5.5,5.5) rectangle (-2.5,3.5);
\draw [rounded corners, thick] (-2,5.5) rectangle (1,3.5);

\draw [rounded corners, thick] (-8.5,2.5) rectangle (0.5,0.5);

\draw [rounded corners, thick] (-9,-0.5) rectangle (-6,-2.5);
\draw [rounded corners, thick] (-5.5,-0.5) rectangle (-2.5,-2.5);
\draw [rounded corners, thick] (-2,-0.5) rectangle (1,-2.5);

\draw [rounded corners, thick] (3,2.5) rectangle (12,0.5);
\draw [rounded corners, thick] (3.5,5.5) rectangle (11.5,3.5);
\draw [rounded corners, thick] (3.5,-0.5) rectangle (11.5,-2.5);

\node [draw=none, fill=none] at (-4,-4.5) {
  Variable Gadget};
\node [draw=none, fill=none] at (7.5,-4.5) {
  Clause Gadget (one of $O(n^{1/3})$ clusters)};

\node [draw=none, fill=none] at (-7.5,-3) {
  $O(n^{1/3})$};
\node [draw=none, fill=none] at (-4,-3) {
  $O(n^{1/3})$};
\node [draw=none, fill=none] at (-0.5,-3) {
  $O(n^{1/3})$};
\node [draw=none, fill=none] at (-7.5,6) {
  $O(n^{1/3})$};
\node [draw=none, fill=none] at (-4,6) {
  $O(n^{1/3})$};
\node [draw=none, fill=none] at (-0.5,6) {
  $O(n^{1/3})$};
\node [draw=none, fill=none] at (7.5,6) {
  $O(n^{1/3})$};
\node [draw=none, fill=none] at (7.5,-3) {
  $O(n^{1/3})$};
\node [draw=none, fill=none] at (13.5,1.5) {
  $O(n^{1/3})$};
  
\node [draw=none, fill=none] at (-10,1.5) {
  $O(n^{2/3})$};

\node at (-1.5,4.5) {};
\node at (-7.5,4.5) {};
\node at (-6.5,4.5) {};
\node at (-8,1.5) {};
\node at (-7,1.5) {};
\node at (-6,1.5) {};
\node at (-5,1.5) {};
\node (v32) at (-4,1.5) {};
\node at (-3,1.5) {};
\node at (-2,1.5) {};
\node at (-1,1.5) {};
\node at (0,1.5) {};
\node at (-5,4.5) {};
\node at (-4,4.5) {};
\node at (-3,4.5) {};
\node (v31) at (-8.5,4.5) {};
\node at (-0.5,4.5) {};
\node at (0.5,4.5) {};
\node at (-8.5,-1.5) {};
\node at (-7.5,-1.5) {};
\node at (-0.5,-1.5) {};
\node at (-5,-1.5) {};
\node at (-4,-1.5) {};
\node at (-3,-1.5) {};
\node at (-1.5,-1.5) {};
\node (v33) at (-6.5,-1.5) {};
\node at (0.5,-1.5) {};
\node (v17) at (3.5,1.5) {};
\node (v18) at (4.5,1.5) {};
\node (v19) at (5.5,1.5) {};
\node (v20) at (6.5,1.5) {};
\node (v22) at (7.5,1.5) {};
\node (v23) at (8.5,1.5) {};
\node (v25) at (9.5,1.5) {};
\node (v26) at (10.5,1.5) {};
\node (v27) at (11.5,1.5) {};
\node (v28) at (4.5,4.5) {};
\node (v29) at (7.5,4.5) {};
\node (v30) at (10.5,4.5) {};
\node (v16) at (4.5,-1.5) {};
\node (v21) at (7.5,-1.5) {};
\node (v24) at (10.5,-1.5) {};
\draw  [very thick, red] (v16) edge (v17);
\draw  [very thick, red] (v16) edge (v18);
\draw  [very thick, red] (v16) edge (v19);

\draw  [thick] (v17) edge (v29);
\draw  [thick] (v18) edge (v29);
\draw  [thick] (v19) edge (v29);

\draw  [thick] (v31) edge (v32);
\draw  [thick] (v32) edge (v33);

\draw  [thick, bend right] (v32) edge (v17);
\draw  [very thick, dashed] (v31) edge (v17);
\draw  [very thick, dashed] (v32) edge (v29);
\draw  [very thick, dashed, bend left] (v29) edge (v16);
\draw  [very thick, dashed, bend right] (v31) edge (v33);
\end{tikzpicture}
	\vspace{-80pt}
	\caption{\label{fig:main-red}A simplified view of the obtained instance.
		Edges (solid lines) and requests (dashed lines) representing one variable and one clause that contains this variable are presented on the
		picture.} 
\end{figure}
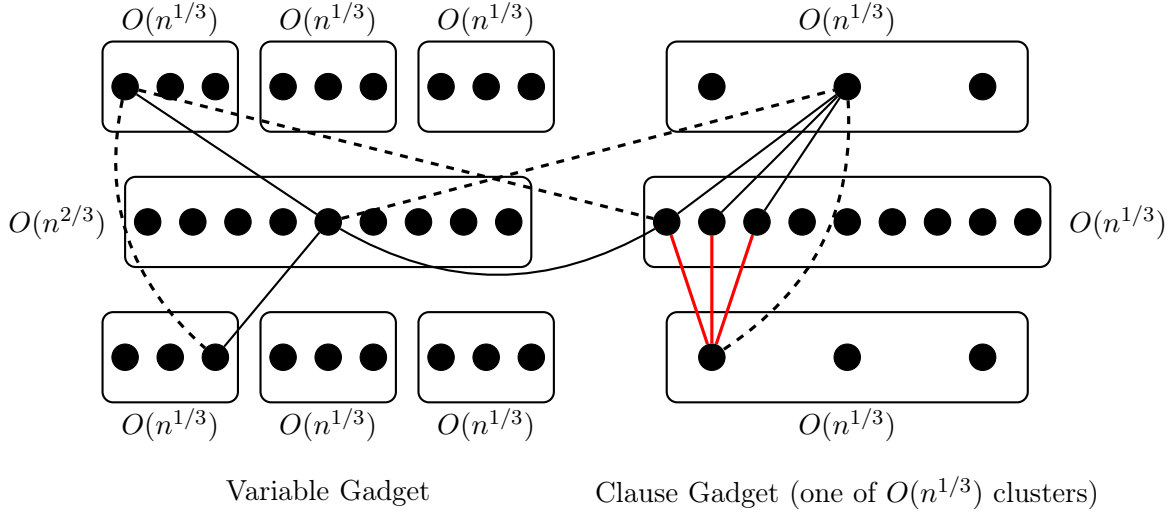

\heading{Size and time}
The construction clearly takes polynomial time.
In the variable part we have $|M| = \ceil{n^{2/3}}+9$ and each of the $\ceil{n^{1/3}}$ layers contains $2\ceil{n^{1/3}}+6$ vertices. 
It follows that the variable part contains $O(n^{2/3})$ vertices.
In the clause part we have $s = O(n^{1/3})$ clusters. 
For each $i=1,\ldots,s$, the cluster $Q_i$ has $4n_i + \ceil{n^{1/3}} = O(n^{1/3})$ vertices.
Hence the clause part also has $O(n^{2/3})$ vertices.

In the variable part there are two edges per variable, so $2n$ in total.
In the clause part there are $3n_i$ edges in $i$-th cluster, i.e., $O(n^{2/3})$ in total, and 3 edges per clause, i.e., $3m = O(n)$ in total.
Finally, for every clause we added 3 edges between the variable part and the clause part, so $3m=O(n)$ in total.
It follows that $|E(G)|=O(n)$.

\heading{Maximum degree of $G$}
Consider a vertex $u_{i,j} \in L_i^{\uparrow}$, for some $i=1,\ldots,\ceil{n^{1/3}}$ and $j=1,\ldots,\ceil{n^{1/3}}+3$.
The only edges incident to $u_{i,j}$ are those of the form $u_{i,j} \middle(x)$, for some variable $x$ such that $\lay(x)=i$ and $\up(x)=j$.
Since $h_i$ is injective, there are at most $\ceil{n^{1/3}}+3$ variables in $\lay^{-1}(i) \cap \up^{-1}(j)$.
Hence $\deg_G(u_{i,j}) \le \ceil{n^{1/3}}+3$.
Analogously, the same bound holds for vertices in $\bigcup_i L_i^{\downarrow}$.
Consider a vertex $m_j$ in $M$.
For every variable $x\in \middle^{-1}(m_j)$ vertex $m_j$ is adjacent with exactly two vertices in the variable part (one in $\bigcup_i L_i^{\uparrow}$, one in $\bigcup_i L_i^{\downarrow}$), which results in $2\ceil{n^{1/3}}$  incident edges in total.
Moreover, for every variable $x\in \middle^{-1}(m_j)$ vertex $m_j$ has at most $4$ edges to the clause part, since $x$ occurs in at most $4$ clauses.
It follows that $\deg_G(m_j) \le 6\ceil{n^{1/3}}$.
Now we focus on the clause part.
Since every cluster has $O(n^{1/3})$ vertices, and there are no edges between the clusters, every vertex in the graph induced by the clause part has degree $O(n^{1/3})$.
Vertices in  $\bigcup_i A_i \cup  \bigcup_i C_i$ have no more edges in $G$.
It remains to consider an arbitrary vertex $b_{i,j}^k$ and count the edges connecting it to the variable part.
These edges are of the form $b_{i,j}^k\middle(x)$, for some variable $x$ that occurs in a clause $C$ such that $\beta(C)=i$ and $\gamma_i(C)=j$.
Since $|\gamma_i^{-1}(j)|\le n_i$, there are $O(n^{1/3})$ such clauses $C$, and each of them contains three variables.
Hence there are $O(n^{1/3})$ edges connecting $b_{i,j}^k$ and the variable part and $\deg_G(b_{i,j}^k) = O(n^{1/3})$.
To sum up, $\Delta(G)=O(n^{1/3})$, as required.

\heading{Maximum degree of $G_S=(V,S)$}
Let us inspect each kind of vertices in $V$ separately.

First consider a vertex $u_{i,j}\in L_i^{\uparrow}$ in the variable part.
It is incident with two kinds of edges in $G_S$.
The edges of the first kind are of the form $\{u_{i,j}, l_{i,\low(x)}\}$, for some variable $x$ such that $\lay(x)=i$ and $\up(x)=j$.
Since $|\up^{-1}(j)| \le \ceil{n^{1/3}}+3$, we get that $u_{i,j}$ is incident with at most $\ceil{n^{1/3}}+3$ edges of the first kind.
The edges of the second kind are of the form $\{b^k_{\beta(C),\gamma(C)},u_{i,j}\}$, for some variable $x$ in $k$-th literal of a clause $C$ such that $\lay(x)=i$ and $\up(x)=j$.
Since $|\up^{-1}(j)| \le \ceil{n^{1/3}}+3$ and $x$ is in at most 4 clauses, we get that $u_{i,j}$ is incident with at most $4(\ceil{n^{1/3}}+3)$ edges of the second kind.
It follows that $\deg_{G_S}(u_{i,j}) = O(n^{1/3})$.
In an analogous way we can bound the degree of vertices in sets $L_i^{\downarrow}$.

Consider a vertex $m_j \in M$ in the variable part.
It is incident with edges of the form $\{m_j, a_{\beta(C),g(C)}\}$, for some variable $x$ from a clause $C$ such that $\middle(x)=m_j$.
Since $|\middle^{-1}(m_j)| \le \ceil{n^{1/3}}$ and every variable is in at most 4 clauses, we get $\deg_{G_S}(m_j) = O(n^{1/3})$.

Consider a vertex $a_{i,j}$ in the clause part.
It is incident with at most $n_i=O(n^{1/3})$ edges of the form $\{a_{i,j},c_{i,\gamma(C)}\}$, since $|C_i|=n_i$.
It is also incident with edges of the form $\{\middle(x),a_{i,j}\}$, where $x$ is a variable in a clause $C$ such that $\beta(C)=i$ and $g(C)=j$.
Since $g$ is injective on every color class of $\gamma_i$, so $|g^{-1}(j) \cap \beta^{-1}(i)|$ is bounded by the number of colors in $\gamma_i$, which is $O(n^{1/3})$.
Hence there are $O(n^{1/3})$ clauses $C$ with $g(C)=j$ and each of them contains three variables, so $\deg_{G_S}(a_{i,j}) = O(n^{1/3})$.

Consider a vertex $b^k_{i,j}$ in the clause part.
It is incident with at most $|\gamma^{-1}(j)|\le n^{1/3}$ edges of the form $\{b^k_{i,j},u_{\lay(x),\up(x)}\}$ or $\{b^k_{i,j},l_{\lay(x),\low(x)}\}$, where $x$ is the variable in the $k$-the literal of a clause $C$ such that $\beta(C)=i$ and $\gamma(C)=j$. 
It follows that  $\deg_{G_S}(b^k_{i,j}) \le n^{1/3}$.

Finally consider a vertex $c_{i,j}$ in the clause part.
It is incident with at most $\ceil{n^{1/3}}$ edges of the form $\{a_{i,g(C)},c_{i,j}\}$, since $|A_i|=\ceil{n^{1/3}}$.
Hence $\deg_{G_S}(c_{i,j}) \le \ceil{n^{1/3}}$.

To sum up, $\Delta(G_S)=O(n^{1/3})$, as required.

\heading{Additional properties}
We have $|\Dom(c_0)| = \sum_i3|C_i| = O(n^{2/3})$.
Notice that the vertices of $(V(G), E \cup S)$ can be partitioned into four independent sets as follows:
$I_1 = \bigcup_i L_i^{\uparrow} \cup \bigcup_i A_i$,
$I_2 = M$,
$I_3 = \bigcup_i B_i$,
$I_4 = \bigcup_i L_i^{\downarrow} \cup \bigcup_i C_i$.
This defines the desired $4$-coloring of $(V(G),E \cup S)$.
Now consider the precoloring conflict graph $CG_{G,S,c_0}$.
Notice that for every $i=1,\ldots,s$ cluster $Q_i$ has $3n_i = O(n^{1/3})$ colored edges.
For every $i=1,\ldots,s$, color the vertices of $CG_{G,S,c_0}$ corresponding to the colored edges of $Q_i$ using different colors from $1$ to $3n_i$.
Notice that two colored edges $e_1,e_2\in\Dom(c_0)$ that belong to different clusters are not adjacent in the precoloring conflict graph $CG_{G,S,c_0}$.
Hence we defined a proper $O(n^{1/3})$-coloring of $CG_{G,S,c_0}$.

\heading{From an assignment to a coloring}
Let $\xi:\Var\rightarrow\{T,F\}$ be a satisfying assignment of $\varphi$.
We claim that there is a coloring $c$ of $E(G)$ which extends $c_0$ and satisfies all pairs in $S$.
We define $c$ as follows.
Denote $\overline{F}=T$, $\overline{T}=F$ and $\xi(\overline{x}) = \overline{\xi(x)}$.
For every variable $x\in\Var$ we put $c(u_{\lay(x),\up(x)}\middle(x))=\xi(x)$ and $c(\middle(x)l_{\lay(x),\low(x)})=\overline{\xi(x)}$.
By $(P_3)$ and $(P_4)$ each edge is colored exactly once.
Note that it satisfies all the pairs in $S$ between vertices in the variable part.

For each clause $C$ and each of its literals $\ell_k$ do the following.
Color the edge $a_{\beta(C),g(C)}b_{\beta(C),\gamma(C)}^k$ with the color $\xi(\ell_k)$.
Since $g$ is injective on color classes of $\gamma_{\beta(C)}$, after processing all the literals in all the clauses, no edge is colored more than once. 
Recall that for every clause $C$ we added exactly one pair to $S$, namely $\{a_{\beta(C),g(C)},c_{\beta(C),\gamma(C)}\}$.
Pick any of $C$'s satisfied literals, say $\ell_k$. 
Note that the pair $\{a_{\beta(C),g(C)},c_{\beta(C),\gamma(C)}\}$ is then satisfied, because edge $a_{\beta(C),g(C)}b_{\beta(C),\gamma(C)}^k$ is colored by $T$ and $b_{\beta(C),\gamma(C)}^kc_{\beta(C),\gamma(C)}$ is colored by $F$.
Hence all the pairs in $S$ between vertices in the clause part are satisfied.

Now let us color the edges between the clause part and the variable part.
Consider any such edge $uv$, i.e., $u$ is in the clause part and $v$ is in the variable part.
By $(P_6)$, there is exactly one clause $C$ and exactly one literal $\ell_k \in C$ such that $u=b_{\beta(C),\gamma(C)}^k$ and $v=\middle(x)$, where $x$ is the variable in $\ell_k$.
Color the edge $b_{\beta(C),\gamma(C)}^k \middle(x)$ with the color $\overline{\xi(\ell_k)}$.
Then the pair $\{\middle(x), a_{\beta(C),g(C)}\}$ is satisfied by the path $(\middle(x), b_{\beta(C),\gamma(C)}^k, a_{\beta(C),g(C)})$, since $c(b_{\beta(C),\gamma(C)}^k a_{\beta(C),g(C)})=\xi(\ell_k)$.
Assume $\ell_k=x$. Then the pair $\{b_{\beta(C),\gamma(C)}^k, u_{\lay(x),\up(x)}\}$ is satisfied by the path $(b_{\beta(C),\gamma(C)}^k, \middle(x), u_{\lay(x),\up(x)})$, since its first edge is colored by 
$\overline{\xi(\ell_k)}=\overline{\xi(x)}$ and its second edge is colored by $\xi(x)$.
Analogously, when $\ell_k=\bar{x}$, then the pair $\{b_{\beta(C),\gamma(C)}^k, l_{\lay(x),\low(x)}\}$ is satisfied by the path $(b_{\beta(C),\gamma(C)}^k, \middle(x), l_{\lay(x),\low(x)})$, since its first edge is colored by $\overline{\xi(\ell_k)}=\xi(x)$ and its second edge is colored by $\overline{\xi(x)}$.

It follows that we colored all the edges and all the pairs in $S$ are satisfied, so $(G,S,c_0)$ is a YES-instance, as required.

\heading{From a coloring to an assignment}
Let $c:E(G)\rightarrow \{T,F\}$ be a coloring which extends $c_0$ and satisfies all pairs in $S$.
Consider the following variable assignment: for every $x\in \Var$, we put $\xi(x) = c(u_{\lay(x),\up(x)}\middle(x))$.
We claim that $\xi$ satisfies all the clauses of $\varphi$.
Consider an arbitrary clause $C=\{\ell_1,\ell_2,\ell_3\}$.

Since the pair $\{a_{\beta(C),g(C)},c_{\beta(C),\gamma(C)}\}$ is satisfied, there is a 2-color 2-path $P$ between $a_{\beta(C),g(C)}$ and $c_{\beta(C),\gamma(C)}$.
Recall that $N(c_{\beta(C),\gamma(C)}) = \{b_{\beta(C),\gamma(C)}^k\ :\ k=1,2,3\}$, so there is $k=1,2,3$ such that $b_{\beta(C),\gamma(C)}^k$ is the internal vertex on $P$.
Since $c$ extends $c_0$ and $c_0(b_{\beta(C),\gamma(C)}^k c_{\beta(C),\gamma(C)}) = F$, we infer that $c(a_{\beta(C),g(C)} b_{\beta(C),\gamma(C)}^k) = T$.
Let $x$ be the variable in the literal $\ell_k$. 

Since the pair $\{\middle(x), a_{\beta(C),g(C)}\}$ is satisfied, there is a 2-color 2-path $Q$ between $\middle(x)$ and $a_{\beta(C),g(C)}$.
Then the internal vertex of $Q$ is $b_{\beta(C'),\gamma(C')}^{k'}$, for some clause $C'$ and integer $k'=1,2,3$.
Let $y$ be the variable in the $k'$-th literal of $C'$. 
Since there is an edge between $\middle(x)$ and $b_{\beta(C'),\gamma(C')}^{k'}$, from $(P_6)$ we infer that $\middle(y)=\middle(x)$.
If $C=C'$ and $k'\ne k$, then by $(P_1)$ we get that $\middle(x)\ne\middle(y)$, a contradiction.
If $C\ne C'$, since $\middle(y)=\middle(x)$, the clauses $C$ and $C'$ are adjacent in the clause conflict graph $G_C$, so $\beta(C')\ne \beta(C)$.
However, then the edge $b_{\beta(C'),\gamma(C')}^{k'} a_{\beta(C),g(C)}$ of $Q$ goes between two clusters, a contradiction.
Hence $C'=C$ and $k'=k$, i.e., $Q=(\middle(x), b_{\beta(C),\gamma(C)}^k, a_{\beta(C),g(C)})$. Since $c(b_{\beta(C),\gamma(C)}^k a_{\beta(C),g(C)}) = T$, we get $c(\middle(x) b_{\beta(C),\gamma(C)}^k) = F$.
Now assume w.l.o.g. that $\ell_k=x$, the case $\ell_k=\bar{x}$ is analogous.

Since the pair $\{b_{\beta(C),\gamma(C)}^k, u_{\lay(x),\up(x)}\}$ is satisfied, there is a 2-color 2-path $R$ between $b_{\beta(C),\gamma(C)}^k$ and $u_{\lay(x),\up(x)}$.
Then the internal vertex $z$ of $R$ belongs to $M$.
By $(P_6)$ there is a literal $\ell_k$ which belongs to a clause $C_2$ and contains a variable $x_2$ such that $z=\middle(x_2)$ and $b_{\beta(C),\gamma(C)}^k = b_{\beta(C_2),\gamma(C_2)}^k$.
In particular, $\beta(C)=\beta(C_2)$ and $\gamma(C)=\gamma(C_2)$.
Assume $C_2 \ne C$.
There is a variable, say $x_3$, corresponding to edge $\middle(x_2)u_{\lay(x),\up(x)}$, i.e., $\middle(x_2)=\middle(x_3)$ and $u_{\lay(x),\up(x)} = u_{\lay(x_3),\up(x_3)}$.
It follows that $C$ and $C_2$ are adjacent in $G_{\beta(C)}$, which contradicts the fact that $\gamma(C)=\gamma(C_2)$.
Hence $C_2=C$, i.e., there is exactly one 2-path between $b_{\beta(C),\gamma(C)}^k$ and $u_{\lay(x),\up(x)}$, and it goes through $\middle(x)$.
Since  $c(\middle(x) b_{\beta(C),\gamma(C)}^k) = F$ and the path is 2-color, we get that $c(u_{\lay(x),\up(x)}\middle(x))=T$.
Hence $\xi(\ell_k) = \xi(x) = T$, so clause $C$ is satisfied, as required.
It finishes the proof.
\end{proof}

\subsection{From $2$ colors to $k$ colors}
\label{sec:PartRTwoCExt->PartRCExt}

\begin{lemma} \label{lem:from-PartRTwoCExt-to-PartRCExt}
	For any fixed $k \ge 3$, there is a polynomial time algorithm which given an instance $I=(G=(V,E), S, c_0)$
	of \probPartRTwoCExt constructs	an equivalent instance
	$I'=(G'=(V', E'), S', c_0')$ of \probPartRCExt such that
	$|V'| = O(k |V|)$,
	$|E'| = |E| + O(k |V|)$,
	$|S'| = |S| + |E|$,
	$\Delta(G') \le \Delta_1(G) + 2$
	and
	$|\Dom(c_0')| = |\Dom(c_0)| + O(k|V|)$.
	Let $G_S = (V, S)$ and $G_{S'} = (V', S')$.
	Then $\Delta(G_{S'}) \le \Delta(G) + \Delta(G_S)$.
	Moreover, given a proper vertex $p$-coloring of $G_S$ the algorithm outputs also a $(p+1)$-coloring of $G_{S'}$.
	Also, given a proper vertex $q$-coloring of $G_{E\cup S} = (V, E \cup S)$ and a proper vertex $\ell$-coloring of the precoloring conflict graph $CG_I$ the algorithm
	outputs a proper $(\ell + O(q))$-coloring of the precoloring conflict graph $CG_{I'}$.
\end{lemma}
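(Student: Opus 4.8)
The plan is to build the standard "padding path" gadget that turns a 2-coloring problem into a $k$-coloring problem. For each edge $e=uv\in E$ we attach a path $P_e$ of length $k-2$ with fresh internal vertices, running from $u$ to $v$ in parallel to $e$; we precolor the $k-2$ edges of $P_e$ with the $k-2$ "new" colors $3,4,\dots,k$, one per edge. Then we add the request $\{u,v\}$ to $S'$ for every $e\in E$. The intent is: since $P_e$ uses all of colors $3,\dots,k$ but has length $k-2$, a rainbow $u$–$v$ path through $P_e$ would need $e$ itself (or an extra edge) colored from $\{1,2\}$, so the only way to satisfy the request $\{u,v\}$ is essentially to keep $e$ colored with one of the two "old" colors; conversely the padding paths are too long and too rigidly colored to shortcut any request already in $S$. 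This forces every edge of $E$ to receive a color in $\{1,2\}$ in any valid extension, so valid $k$-colorings of $G'$ restrict to valid $2$-colorings of $G$, and vice versa — this is the equivalence claim. I would state this forcing argument carefully: a rainbow path in $G'$ between two vertices of $V$ either stays inside $G$ (using only old-colored edges, since new colors never appear on $E$-edges) or uses a full padding path $P_e$ as a detour; in the latter case one checks the length/color budget forbids it from helping, using the fact that each $P_e$ individually already exhausts all of $3,\dots,k$.

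Next I would verify the size and degree bookkeeping, which is routine: $|V'|=|V|+(k-2)|E|=O(k|V|)$ after noting $|E|=O(|V|)$ is \emph{not} assumed here — wait, actually the bound claims $|V'|=O(k|V|)$, so I must attach the path more cleverly, sharing internal vertices across edges, or the statement already implicitly uses $|E|=O(|V|)$ from Lemma~\ref{lem:from-TSAT-to-PartRTwoCExt}; I would follow whichever the paper intends, most likely reusing a single length-$(k-2)$ "gadget spine" per vertex and routing, so that only $O(k)$ new vertices per original vertex are created. Given that, $|E'|=|E|+O(k|V|)$, $|S'|=|S|+|E|$, and $|\Dom(c_0')|=|\Dom(c_0)|+O(k|V|)$ are immediate. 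For $\Delta(G')\le\Delta_1(G)+2$: each original vertex gains at most two incident padding-path edges (one per "side" of the spine), and internal padding vertices have degree $2$ (or $O(1)$), so the bound holds — here the $\Delta_1$ rather than $\Delta$ handles isolated vertices. For $\Delta(G_{S'})$: a vertex $v\in V$ keeps its old $S$-neighbors and gains one new request per incident edge of $G$, giving $\Delta(G_{S'})\le\Delta(G)+\Delta(G_S)$; new padding vertices are in no request.

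Finally the two vertex-coloring transfers. Given a proper $p$-coloring of $G_S$: extend it to $G_{S'}$ by giving all newly added padding vertices a single fresh color $p+1$ — this is proper because new requests only connect old vertices of $V$ to old vertices of $V$ (namely endpoints of an edge $e\in E$), and a padding vertex is in no request at all, so no monochromatic request edge is created; hence a $(p+1)$-coloring. For the precoloring conflict graph: given a proper $q$-coloring of $G_{E\cup S}$ and a proper $\ell$-coloring of $CG_I$, I would keep the $CG_I$-colors on the old colored edges $\Dom(c_0)\subseteq\Dom(c_0')$ and color each new colored edge (an edge of some $P_e$) using a bounded palette derived from the $G_{E\cup S}$-coloring of its endpoints $u,v$ — concretely, two adjacent colored edges of padding paths are adjacent in $CG_{I'}$ only when their endpoint-sets interact via $E\cup S$, i.e. only when the corresponding $G$-edges share or are $G_{E\cup S}$-adjacent to a common vertex, which is controlled by the $q$-coloring; assigning each padding edge a color from $O(q)$ new colors (say, indexed by the $q$-color of an appropriate endpoint together with the position $1,\dots,k-2$ along the path, but $k$ is a constant so this is still $O(q)$) makes the restriction to new edges proper, and these $O(q)$ colors are disjoint from the $\ell$ old ones, yielding $\ell+O(q)$ in total. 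The main obstacle I anticipate is not any single step but getting the adjacency condition in $CG_{I'}$ exactly right — precisely characterizing when a new colored padding edge conflicts with another colored edge (old or new) and checking that the proposed $O(q)$-coloring respects every such conflict — since this is where the definition of the precoloring conflict graph (incidence in $G'$, or shared endpoint adjacent via $E'\cup S'$) must be unwound with care, especially around the newly introduced requests $\{u,v\}$ and the newly introduced edges $E'\setminus E$.
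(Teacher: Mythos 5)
Your gadget does not work, for two independent reasons. First, the forcing mechanism fails outright: for $uv\in E$ the request $\{u,v\}$ is satisfied by the single edge $uv$ itself (a one-edge path is always rainbow), and also by your precolored path $P_e$, which uses $k-2$ distinct colors and is therefore itself a rainbow $u$--$v$ path. So nothing forces $c(uv)\in\{1,2\}$, and the backward direction of the equivalence collapses. Second, attaching $P_e$ \emph{in parallel} to $e$ creates new connectivity between vertices of $V$: a walk $x\,u\,(P_e)\,v$ has length $k-1$ and uses colors $3,\dots,k$ plus whatever $xu$ gets, so it can be a rainbow path satisfying a request $\{x,v\}\in S$ that was not satisfiable in $G$. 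Your assertion that the padding paths are ``too long and too rigidly colored to shortcut any request'' is false. On top of this, the per-edge gadget gives $|V'|=|V|+(k-2)|E|$, which is not $O(k|V|)$ (the lemma does not assume $|E|=O(|V|)$), and it makes $\Delta(G')$ roughly $2\Delta(G)$ rather than $\Delta_1(G)+2$; your proposed ``spine'' repair is never actually specified. Finally, even your $(p+1)$-coloring claim breaks: your new requests join two \emph{old} vertices $u,v$ that may share a color in the given $p$-coloring of $G_S$ (nothing prevents this, since $uv\in E$ need not be in $S$), so the extended coloring need not be proper.

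The paper's construction fixes all of these at once by making the gadget \emph{pendant} rather than parallel: each vertex $v$ gets a tail $v\,\{v_1^1,v_1^2\}\,v_2\,v_3\cdots v_{k-1}$ (a diamond at the start, then a path), precolored $1,2,3,3,4,\dots,k$, and for each edge $uv$ with $u<v$ the new request is $\{u_{k-1},v\}$ --- between the \emph{far end of the tail} and the neighbor. Since $v$ is a cut vertex separating its tail from the rest, gadget vertices cannot appear on any path between two vertices of $V$, which kills the spurious-connectivity problem. The distance from $u_{k-1}$ to $v$ is exactly $k$, so a rainbow path must be one of the two length-$k$ paths through $u_1^1$ or $u_1^2$, and exactly one of them is rainbow depending on whether $c(uv)=2$ or $c(uv)=1$; this is the forcing you were missing. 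The per-vertex count is $O(k)$ new vertices, each old vertex gains only the two neighbors $v_1^1,v_1^2$, and each new request joins a new vertex to an old one, so one fresh color suffices for $G_{S'}$. I recommend you rebuild the proof around this pendant gadget.
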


\begin{proof}
\heading{Construction}
	Let us denote the colors of $c_0$ by $1$ and $2$.
	Let
	\[V' = V \cup \bigcup_{v\in V} \{v_1^1, v_1^2, v_2, v_3, \ldots, v_{k - 1}\},\]
	\[E' = E \cup \bigcup_{v \in V} \{
	v v_1^1, v_1^1 v_2, v v_1^2, v_1^2 v_2, v_2 v_3, v_3 v_4, \ldots,
	v_{k - 2} v_{k - 1} \},\]
	\[S' = S \cup \{u_{k - 1} v \ :\  uv \in E \text{ and } u < v\}\]
	and $\Dom(c_0') = \Dom(c_0) \cup (E' \setminus E)$.
	Let $c_0'|_{\Dom(c_0)} = c_0$
	and let for every $v \in V$
	$c_0'(v v_1^1) = 1$,
	$c_0'(v v_1^2) = 2$,
	$c_0'(v_1^1 v_2) = 3$,
	$c_0'(v_1^2 v_2) = 3$,
	and $c_0'(v_i v_{i + 1}) = i + 2$
	for every $i \in \{2, 3, \ldots, k - 2\}$.
	(See Fig~\ref{fig:antena}.)
	Note that for every vertex $u_{k - 1}$ such that $u \in V$ we have $\deg_{G_{S'}}(u_{k - 1})\le\deg_G(u)$ and for every vertex $v \in V$ we have $\deg_{G_{S'}}(v)\le\deg_{G_S}(v)+\deg_G(v)$.
	Hence $\Delta(G_{S'}) \le \Delta(G) + \Delta(G_S)$, as required.
	
	\begin{figure}
	\begin{center}
	\begin{tikzpicture}
\tikzset{vtx/.style={draw, circle, fill=black, line width=1pt, inner sep = 2pt}}
\tikzset{edgecolor/.style={draw, rectangle, line width=.5pt, fill=white, pos=0.7, inner sep=1pt}}

\node [vtx,label={[label distance=1pt]90:$v$}] (v) at (0,0) {};
\node [vtx,label={[label distance=1pt]90:$v^1_1$}] (v11) at (1,1) {};
\node [vtx,label={[label distance=1pt]90:$v^2_1$}] (v12) at (1,-1) {};
\node [vtx,label={[label distance=1pt]90:$v_2$}] (v2) at (2,0) {};
\node [vtx,label={[label distance=1pt]90:$v_3$}] (v3) at (4,0) {};
\node [vtx,label={[label distance=1pt]90:$v_4$}] (v4) at (6,0) {};
\node at (7,0) {$\ldots$};
\node [vtx,label={[label distance=1pt]90:$v_{k-2}$}] (vk2) at (8,0) {};
\node [vtx,label={[label distance=1pt]90:$v_{k-1}$}] (vk1) at (10,0) {};

\draw [thick] (v) edge node [edgecolor,pos=0.5] {1} (v11);
\draw [thick] (v) edge node [edgecolor,pos=0.5] {2} (v12);
\draw [thick] (v2) edge node [edgecolor,pos=0.5] {3} (v11);
\draw [thick] (v2) edge node [edgecolor,pos=0.5] {3} (v12);
\draw [thick] (v2) edge node [edgecolor,pos=0.5] {4} (v3);
\draw [thick] (v3) edge node [edgecolor,pos=0.5] {5} (v4);
\draw [thick] (vk1) edge node [edgecolor,pos=0.5] {$k$} (vk2);

\end{tikzpicture}
	\end{center}
\caption{\label{fig:antena}The gadget added to every vertex $v\in V$ in Lemma~\ref{lem:from-PartRTwoCExt-to-PartRCExt}.}
	\end{figure}
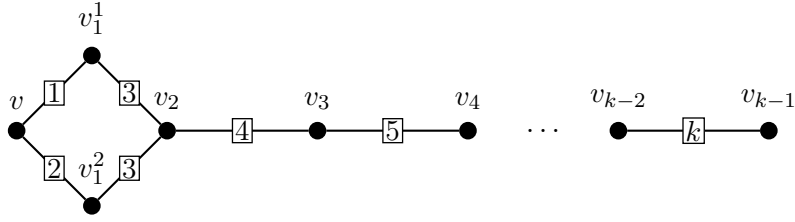

	\heading{Equivalence}
	Assume $(G=(V, E), S, c_0)$ is a YES-instance of \probPartRTwoCExt and let $c$ be the corresponding $2$-coloring.
	Define a coloring $c'$ of $E(G')$ as $c'|_E = c$ and	$c'|_{E' \setminus E} = c_0'|_{E' \setminus E}$.
	Note that $c'$ is an extension of $c_0'$.
	Moreover, $c'$ satisfies all the pairs of $S$ using rainbow paths in $E$ because $c$ satisfies $S$.
	All the pairs of the form $\{u_{k - 1}, v\}$ for some $uv \in E$ where $u < v$ are satisfied because one of the paths
	$v u u_1^1 u_2 u_3 \ldots u_{k - 1}$ and $v u u_1^2 u_2 u_3 \ldots u_{k - 1}$ is rainbow.
	Hence $(G',S',c_0')$ is a YES-instance of \probPartRCExt.
	
	Now assume that	$(G'=(V', E'), S', c_0')$ is a YES instance of \probPartRCExt and let $c'$ be the corresponding $k$-coloring.
	Then we claim that $c = c'|_E$ is a coloring of $E(G)$ that satisfies $S$ and extends $c_0$.
	It is easy to see that $c$ extends $c_0$, because $c_0'$ extends $c_0$ and $c'$ extends $c_0'$.

	Now we show that $c(E) \subseteq \{1, 2\}$. 
	Indeed, for every edge $uv\in E$ such that
	$u < v$ the distance between the vertices $u_{k - 1}$ and $v$ is $k$.
	Hence all rainbow paths between $u_{k - 1}$ and $v$ are of length exactly $k$.
	There are exactly two paths of length $k$ between them, namely $v u u_1^1 u_2 u_3 \ldots u_{k - 1}$ and $v u u_1^2 u_2 u_3 \ldots u_{k - 1}$.
	Exactly one of them is rainbow,	so $c(uv) = 2$ (if the first one is rainbow) or $c(uv) = 2$ (otherwise), as required.
	
	Finally we show that all pairs in $S$ are satisfied by $c$.
	Pick any $\{u,v\}\in S$.
	Note that no $(u,v)$-path in $G'$ visits a vertex from $V'\setminus V$ because every vertex $v\in V$ separates the vertices in $\{v_1^1, v_1^2, v_2, v_3, \ldots, v_{k - 1}\}$ from the rest of the graph.
	It follows that the rainbow path that satisfies $\{u,v\}$ in $c'$ also satisfies $\{u,v\}$ in $c$.
	This finishes the proof that $(G, S, c_0)$ is a YES-instance of \probPartRTwoCExt.

	\heading{Additional Properties}
	The vertex $(p + 1)$-coloring of $G_{S'} = (V', S')$ can be easily obtained from the
	$p$ coloring of $G_S = (V, S)$.	Indeed, all the independent sets of
	$G_S = (V, S)$ are also independent in $G_{S'} = (V', S')$
	and all the added vertices $V'\setminus V$ form
	an independent set in $G_{S'} = (V', S')$.
	Therefore it is sufficient to extend the input $p$-coloring with one additional color for the vertices
	of $V' \setminus V$.
	
	Let $\alpha:\Dom(c_0)\rightarrow [\ell]$ be the given vertex $\ell$-coloring of $CG_I$, and let $h:V\rightarrow [q]$ be the given vertex $q$-coloring of the graph $G_{E \cup S} = (V, E\cup S)$.
	Note that in $G'$ we have not added any edge or requirement between any two vertices of $V$.
	Therefore $CG_{I'}[\Dom(c_0)] = CG_I$.
	Hence it suffices to extend $\alpha$ to a coloring $\alpha'$ of $CG_{I'}$.
	The remaining elements, i.e., elements from $\Dom(c'_0)\setminus \Dom(c_0)$ get new colors, from the set $\{\ell + 1, \ell + 2, \ldots \ell + O(q)\}$, as follows.
	For every vertex $v\in V$ we put $\alpha'(v v_1^1) =\ell + 2 h(v) - 1$ and $\alpha'(v v_1^2)=\ell + 2 h(v)$.
	Thus we added $2q$ new colors.
	Note that their corresponding color classes are independent sets in $CG_{I'}$.
	In what follows they stay independent because will not use these $2q$ colors any more.
	Next, for every vertex $v \in V$ we put $\alpha'(v_1^1 v_2)=\ell + 2q + 1$ and $\alpha'(v_1^2v_2)=\ell + 2q + 2$.
	Again, the corresponding color classes are independent in $CG_{I'}$.
	If $k = 3$ then we have just properly colored all the vertices of $CG_{I'}$, i.e., the edges of $\Dom(c'_0)$.
	If $k > 3$ then for every vertex $v \in V$ we color the remaining path $v_3 v_4 \ldots v_{k - 1}$ using
	alternating sequence of colors
	$\ell + 2q + 3, \ell + 2q + 4, \ell + 2q + 5,
	\ell + 2q + 3, \ell + 2q + 4, \ldots$.
	Thus we have obtained a proper vertex $(\ell + 2q + 5)$-coloring of the graph $CG_{I'}$, as required.
\end{proof}

\subsection{From \probPartRCExt to \probPartRC}
\label{sec:PartRCExt->PartRC}

\begin{lemma} \label{lem:from-PartRCExt-to-PartRC}
	For any fixed $k \ge 3$, given an instance $(G=(V, E), S, c_0)$
	of \probPartRCExt
	together with a proper vertex
	$\ell$-coloring of the precoloring
	conflict graph $CG_I$
	one can construct in polynomial time an equivalent
	instance $(G'=(V', E'), S')$ of \probPartRC such that
	$|V'| = |V| + O(k^2\ell)$,
	$|E'| = |E| + |\Dom(c_0)| + O(k^2\ell)$,
	$|S'| = |S| + 2|\Dom(c_0)| + O(k^2\ell)$.
	Let $G_S = (V, S)$ and $G_{S'} = (V', S')$.
	Then $\Delta(G_{S'}) = O(\Delta(G_S) + \Delta(G) + |\Dom(c_0)| / \ell)$.
	Moreover if we are given a proper vertex $p$-coloring of the graph $G_S = (V, S)$ then we can output also a proper vertex $(p + 2)$-coloring
	of the graph $G_{S'} = (V', S')$.
\end{lemma}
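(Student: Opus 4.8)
The plan is to eliminate the precoloring $c_0$ by replacing, for every colored edge $e\in\Dom(c_0)$, the constraint ``$c(e)=c_0(e)$'' by a small \emph{color-enforcing} gadget, and to keep the instance small by \emph{sharing} one such gadget among all colored edges that lie in the same color class of $CG_I$. Before doing so I would re-partition the given proper $\ell$-coloring of $CG_I$ so that every color class has size $O(|\Dom(c_0)|/\ell)$; exactly as in the handling of the coloring $\beta$ in the proof of Lemma~\ref{lem:from-TSAT-to-PartRTwoCExt}, this at most doubles the number of classes, and the new classes remain independent in $CG_I$. For each resulting class $\kappa$ I introduce a gadget $D_\kappa$ on $O(k^2)$ fresh vertices, with $O(k^2)$ fresh edges and $O(k^2)$ internal requests added to $S'$, whose internal requests force, in every coloring of $G'$ that satisfies $S'$, a rigid ``reference'' coloring inside $D_\kappa$. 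In particular $D_\kappa$ contains, for each color $i\in[k]$, a distinguished vertex $g^\kappa_i$ such that (a) every edge attached later between $g^\kappa_i$ and a vertex of $V$ is forced to one common color, and (b) $D_\kappa$ contains no internal rainbow path short enough to serve as a shortcut between two distinct vertices $g^\kappa_i,g^\kappa_{i'}$. Then, for every colored edge $e=xy$ of class $\kappa$ with $c_0(e)=i$, I add a single new edge joining $x$ to $g^\kappa_i$ together with two new requests between endpoints of $e$ and suitable vertices of $D_\kappa$, chosen so that these two requests can be met \emph{only} by a rainbow path of length exactly $k$ that runs through $e$ and a forced rainbow subpath of $D_\kappa$; since the $k-1$ other edges on that path carry $k-1$ fixed colors, all different from $i$, this pins $c(e)=i$. (Attaching $D_\kappa$ to $V$ also keeps $G'$ connected.)

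The size bounds are then immediate: $|V'|=|V|+O(k^2\ell)$, $|E'|=|E|+|\Dom(c_0)|+O(k^2\ell)$, and $|S'|=|S|+2|\Dom(c_0)|+O(k^2\ell)$. For the request-degrees, a vertex $v\in V$ gains at most $2$ new requests per colored edge incident to it, hence at most $O(\Delta(G))$ new requests; a vertex $g^\kappa_i$ gains at most $|\kappa|=O(|\Dom(c_0)|/\ell)$ attachment requests plus $O(k^2)$ internal ones; and the remaining vertices of the $D_\kappa$'s have degree $O(k^2)$ in $G_{S'}$. Since $k$ is fixed, this gives $\Delta(G_{S'})=O(\Delta(G_S)+\Delta(G)+|\Dom(c_0)|/\ell)$. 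For the vertex coloring, I keep the given proper $p$-coloring on $V$ and use the two fresh colors $p+1,p+2$ on all gadget vertices: designing the internal requests of each $D_\kappa$ so that they form a bipartite graph (a further requirement on the gadget), these two colors properly handle all gadget--gadget requests, while every gadget--$V$ request is proper because its endpoints get colors from disjoint palettes; no new $V$--$V$ request is created. This yields the required proper $(p+2)$-coloring of $G_{S'}$.

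For the equivalence, the forward direction is routine: given a coloring $c$ of $G$ extending $c_0$ and satisfying $S$, extend it by the reference coloring of each $D_\kappa$ and by the prescribed colors on the attachment edges; the old requests of $S$ are satisfied by their old rainbow paths (the new vertices cannot be reached without first taking an attachment edge), the internal requests of each $D_\kappa$ are satisfied by its reference coloring, and for each colored edge $e$ the two attached requests are satisfied by the length-$k$ path built for $e$, using $c(e)=c_0(e)$. The backward direction carries the real content. Given a coloring $c'$ of $G'$ satisfying $S'$, the internal requests first force the reference coloring of every $D_\kappa$; then, for each colored edge $e$, the two requests attached to $e$ force $c'(e)=c_0(e)$ (this step must be argued from the rigidity of $D_\kappa$ together with the length-$k$ bottleneck), so $c:=c'|_E$ extends $c_0$. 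Finally, every $\{u,v\}\in S$ is satisfied by $c$: a rainbow $(u,v)$-path of $c'$ that entered some $D_\kappa$ would have to leave it again along an attachment edge, either through the same $g^\kappa_i$ it entered (impossible, since the two attachment edges used would be equicolored, so the path would not be rainbow) or through a different $g^\kappa_{i'}$ (impossible by the ``no internal shortcut'' property of $D_\kappa$); hence the path lies entirely in $E$ and witnesses $\{u,v\}$ for $c$ as well.

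The main obstacle, and essentially the only delicate point, is the explicit construction of $D_\kappa$ together with the two attachment requests per colored edge, so that it simultaneously (i) forces all edges attached at $g^\kappa_i$ to a common color, (ii) makes the two attachment requests of each colored edge $e=xy$ satisfiable only along the intended length-$k$ path, so that $c'(e)=c_0(e)$ is really enforced, (iii) contains no internal rainbow path short enough to shortcut between two vertices of $V$, and (iv) uses only $O(k^2)$ vertices, edges and internal requests per class, with a bipartite internal-request graph. The key external fact that makes this possible is that, by properness of the input $\ell$-coloring of $CG_I$, the endpoints of two distinct colored edges of the same class are neither adjacent in $G$ nor joined by a request of $S$; consequently the attachment edges cannot create unintended short rainbow paths between vertices that the original instance needed to connect. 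Building such forcing sub-gadgets is by now standard in this line of work (cf.~\cite{Chakraborty:JCombOpt,Ananth:fsttcs11}), so I expect this to be technical but not conceptually hard.
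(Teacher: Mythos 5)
Your high-level plan coincides with the paper's: refine the given $\ell$-coloring of $CG_I$ so each class has $O(|\Dom(c_0)|/\ell)$ edges, build one $O(k^2)$-sized color-enforcing structure per class, attach each precolored edge $e=uv$ by a single new edge plus two requests that are realizable only along a length-$k$ path through $e$, and argue that no new short rainbow paths between vertices of $V$ are created. The size, degree, and $(p+2)$-coloring accounting is also essentially the paper's. However, you explicitly defer ``the main obstacle, and essentially the only delicate point'' --- the construction of $D_\kappa$ --- and that deferred step hides a genuine difficulty that your plan, as stated, does not survive.

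No gadget can force a \emph{rigid} reference coloring: satisfaction of requests is invariant under a global permutation of the $k$ colors, so the internal requests of $D_\kappa$ can pin its coloring only up to a permutation $\pi_\kappa$ of $[k]$. Consequently your attachment requests pin $c'(e)=\pi_\kappa(c_0(e))$, not $c'(e)=c_0(e)$. With \emph{independent} per-class gadgets, different classes may realize different permutations, and then no single recoloring of $E$ simultaneously extends $c_0$ and preserves the rainbow paths witnessing $S$ (a path for a request in $S$ may traverse precolored edges from two classes, so recoloring them independently back to their $c_0$-values can create a color collision; the definition of $CG_I$ does not forbid this). The paper resolves exactly this by \emph{not} making the gadgets independent: all classes are segments of one path $(v_1,\dots,v_{3k^2\ell'})$, and the overlapping window requests $\{v_i,v_{i+k}\}$ propagate a single permutation of $[k]$ along the entire path, so one global permutation $c''$ of $c'$ extends $c_0$ on all of $\Dom(c_0)$ at once (and global permutations preserve all rainbow paths). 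The same single-path layout, with attachment points spaced at least $2k+1$ apart, also delivers your properties (ii) and (iii) --- uniqueness of the length-$\le k$ realizing paths and absence of shortcuts --- which you likewise leave unconstructed. So the proposal is the right architecture but is missing the one mechanism (linking the gadgets so that a single color permutation governs all of them) that makes the backward direction of the equivalence go through.
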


\begin{proof}
        \heading{Construction}
	Let $f:\Dom(c_0) \to [\ell]$ be the vertex $\ell$-coloring of the precoloring conflict graph $CG_I$.
	If there is a color class larger than $\lceil |\Dom(c_0)| / \ell \rceil$ we split it into two colors:
	one of size $\lceil \Dom(c_0) / \ell \rceil$ and the rest.
	We repeat this procedure until none of the color classes is larger than $\lceil |\Dom(c_0)| / \ell \rceil$.
	This process introduces at most $\ell$ new colors, since $\ell \cdot \lceil |\Dom(c_0)| / \ell \rceil \ge |\Dom(c_0)|$.
	Hence in what follows we assume that $f$ is an $\ell'$-coloring of $CG_I$ such that each color class is of size at most $\lceil |\Dom(c_0)| / \ell \rceil$, where $\ell'\le 2\ell$.

	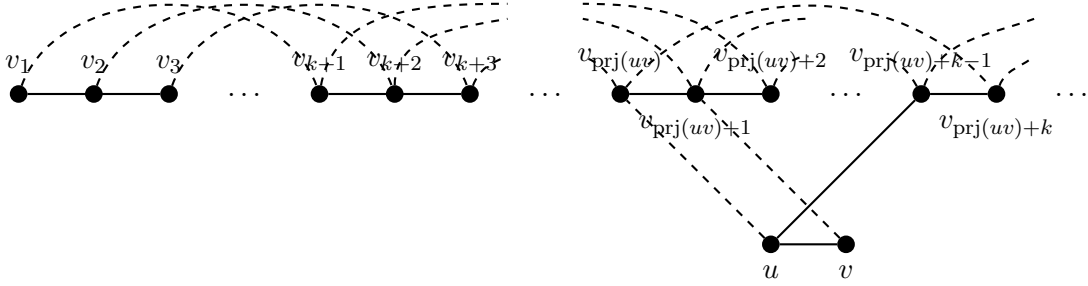
\begin{figure}
	\begin{center}
	 \begin{tikzpicture}
\tikzset{vtx/.style={draw, circle, fill=black, line width=1pt, inner sep = 2pt}}
\tikzset{edgecolor/.style={draw, rectangle, line width=.5pt, fill=white, pos=0.7, inner sep=1pt}}

\node [vtx,label={[label distance=1pt]90:$v_1$}] (v1) at (0,0) {};
\node [vtx,label={[label distance=1pt]90:$v_2$}] (v2) at (1,0) {};
\node [vtx,label={[label distance=1pt]90:$v_3$}] (v3) at (2,0) {};
\node at (3,0) {$\ldots$};
\node [vtx,label={[label distance=1pt]90:$v_{k+1}$}] (vk1) at (4,0) {};
\node [vtx,label={[label distance=1pt]90:$v_{k+2}$}] (vk2) at (5,0) {};
\node [vtx,label={[label distance=1pt]90:$v_{k+3}$}] (vk3) at (6,0) {};
\node at (7,0) {$\ldots$};
\node [vtx,label={[label distance=1pt]90:$v_{{\rm prj}(uv)}$}] (vp) at (8,0) {};
\node [vtx,label={[label distance=1pt]-90:$v_{{\rm prj}(uv)+1}$}] (vp1) at (9,0) {};
\node [vtx,label={[label distance=1pt]90:$v_{{\rm prj}(uv)+2}$}] (vp2) at (10,0){};
\node at (11,0) {$\ldots$};
\node [vtx,label={[label distance=1pt]90:$v_{{\rm prj}(uv)+k-1}$}] (vpp) at (12,0) {};
\node [vtx,label={[label distance=1pt]-90:$v_{{\rm prj}(uv)+k}$}] (vpp1) at (13,0) {};
\node at (14,0) {$\ldots$};
\node [vtx,label={[label distance=1pt]-90:$u$}] (u) at (10,-2) {};
\node [vtx,label={[label distance=1pt]-90:$v$}] (v) at (11,-2) {};

\draw [thick] (v1) -- (v3);
\draw [thick] (vk1) -- (vk3);
\draw [thick] (vpp) -- (vpp1);
\draw [thick] (vp) -- (vp2);
\draw [thick] (v) -- (u);
\draw [thick] (u) -- (vpp);
\draw [thick,dashed] (u) -- (vp);
\draw [thick,dashed] (v) -- (vp1);
\draw [thick,dashed] (v1) to[out=70,in=110] (vk1);
\draw [thick,dashed] (v2) to[out=70,in=110] (vk2);
\draw [thick,dashed] (v3) to[out=70,in=110] (vk3);
\draw [thick,dashed] (vp) to[out=50,in=130] (vpp1);
\draw [thick,dashed] (vk1) to[out=80,in=180] (6.5,1.2);
\draw [thick,dashed] (vk2) to[out=90,in=190] (6.5,1);
\draw [thick,dashed] (vk3) to[out=70,in=200] (6.5,0.5);
\draw [thick,dashed] (vp) to[out=120,in=-20] (7.5,0.5);
\draw [thick,dashed] (vp1) to[out=120,in=-10] (7.5,1);
\draw [thick,dashed] (vp2) to[out=120,in=0] (7.5,1.2);
\draw [thick,dashed] (vp1) to[out=70,in=180] (10.5,1);
\draw [thick,dashed] (vp2) to[out=70,in=180] (10.5,0.5);
\draw [thick,dashed] (vpp) to[out=120,in=-20] (11.5,0.5);
\draw [thick,dashed] (vpp) to[out=70,in=200] (13.5,1);
\draw [thick,dashed] (vpp1) to[out=70,in=220] (13.5,0.5);

\end{tikzpicture}
\caption{\label{fig:path}Construction in Lemma~\ref{lem:from-PartRCExt-to-PartRC}. Dashed lines denote requests.}
	\end{center}
	\end{figure}

	Let $V' = V \cup \{v_i :\  i \in [3k^2 \ell']\}$.
        For every $\alpha \in [\ell']$, $\beta \in [k]$ and $\gamma \in [3k]$ let us denote $\id(\alpha,\beta,\gamma) = (\alpha-1)\cdot 3k^2 + (\beta - 1)\cdot 3k + \gamma$.
	Note that for every $i\in [3k^2\ell']$ there is exactly one triple
	$(\alpha, \beta, \gamma) \in [\ell'] \times [k] \times [3k]$
	such that $i = \id(\alpha,\beta,\gamma)$.
        Moreover, for every edge $uv\in E$, such that $u < v$, we denote $\prj(uv) = \id(f(uv),c_0(uv),c_0(uv))$.
	Let
	\[
	E' = E
		\cup \{v_i v_{i + 1}
			:\  i \in [3k^2\ell' - 1] \}
		\cup \{v_{\prj(uv) + k - 1} u
			:\  uv\in \Dom(c_0) \text{ and } u < v\}\]
	and
	\[S' = S
		\cup \{
			\{v_i, v_{i + k}\}
			:\  i \in [3k^2\ell' - k]
		\}
		\cup \{
			\{v_{\prj(uv)}, u\},
			\{v_{\prj(uv) + 1}, v\}
			:\  uv\in \Dom(c_0) \text{ and } u < v\}.\]	
	From the above, if we have $uv, u'v' \in \Dom(c_0)$
	such that
	$u < v$, $u' < v'$
	and $\prj(uv) = \prj(u'v')$
	then $f(uv) = f(u'v')$
	and $c_0(uv) = c_0(u'v')$.	
			
        \heading{Equivalence}
	Assume $c$ is a $k$-coloring of $E$ that satisfies all the
	constraints of $S$ and extends the coloring $c_0$.
	We define a coloring $c':E' \to [k]$ as follows.
	For every edge $xy \in E$ we put $c'(xy)=c(xy)$.
	Let us define
	\newcommand{\bmodone}{{\ \mbox{mod}_1\ }}
	\[x \bmodone y = 1 + (x - 1) \bmod y.\]
	The edges of the path $(v_1,\ldots,v_{3k^2 \ell'})$ are colored with the sequence $(1,\ldots,k)$ repeated, i.e., for every $i=1,\ldots,3k^2 \ell'-1$ we put $c'(v_iv_{i+1}) = i \bmodone k$.
	Finally, for every  edge $uv\in E$, such that $u < v$, we put 
	\[c'(v_{\prj(uv) + k - 1} u) = (c_0(uv) - 1)\bmodone k.\]
	We claim that $c'$ is a $k$-coloring of $E'$ that satisfies all the constraints	of $S'$.
	Indeed, every constraint of $S$ is satisfied because $c'$ extends $c$.
	Every constraint of the form $\{v_i, v_{i + k}\}$ is satisfied because of the path $(v_i, v_{i + 1}, \ldots v_{i + k})$.
	Finally, note that
	\[c'(v_{\prj(uv)}, v_{\prj(uv) + 1}) =
		c_0(uv) \bmodone k = c_0(uv)\]
	and the path
	$(v_{\prj(uv)}, v_{\prj(uv)+1}, \ldots v_{\prj(uv) + k - 1})$
	uses all the colors except for 
	$(\prj(uv) + k - 1) \bmodone k = (c_0(uv) - 1) \bmodone k$.
	But $c'(v_{\prj(uv) + k - 1} u) = (c_0(uv) - 1) \bmodone k$.
	So, for every $uv \in \Dom(c_0)$ such that $u < v$ the constraints $\{v_{\prj(uv)}, u\}$ and $\{v_{\prj(uv) + 1}, v\}$
	are satisfied because of the paths $P_{uv}^1=(v_{\prj(uv)}, v_{\prj(uv)+1}, \ldots v_{\prj(uv) + k - 1}, u)$ and $P_{uv}^2=(v_{\prj(uv) + 1}, v_{\prj(uv) + 2}, \ldots v_{\prj(uv) + k - 1}, u, v)$, respectively.
	
	Now assume $c'$ is a $k$-coloring of $E'$ that satisfies all the constraints of $S'$.
	The distance between $i = \id(a, s, s)$ and $j = \id(a, s + 1, s + 1)$
	is equal $3k + 1$.
	Moreover the distance between $i = \id(a, k, k)$ and $j = \id(a + 1, 1, 1)$
	is equal $2k + 1$.
	So if for some different $i,j=1,\ldots,3k^2\ell'$
	such that $i < j$
	both $i$ and $j$ have neighbors in $V$, then $|i-j|\ge 2k+1$.
	This has three consequences:
	
	\begin{enumerate}[$(i)$]
	 \item for any $i \in [3k^2\ell' - k]$, there is exactly one path of length at most $k$ between $v_i$ and $v_{i + k}$, namely $(v_i,v_{i+1},\ldots,v_{i + k})$,
	 \item for any $uv\in \Dom(c_0)$, $u < v$, there is exactly one path of length at most $k$ between $v_{\prj(uv)}$ and $u$, namely $P_{uv}^1$,
	 and exactly one path of length at most $k$ between $v_{\prj(uv)+1}$ and $v$, namely $P_{uv}^2$,
	 \item for any pair of vertices $u,v\in V$ every path of length at most $k$ between $u$ and $v$ does not contain any edge $v_iv_{i+1}$, for $i\in [3k^2\ell' - 1]$.
	\end{enumerate}

	From $(i)$ it follows that $(c'(v_1v_2),c'(v_2v_3),\ldots,c'(v_kv_{k+1}))$ is a permutation of all $k$ colors, and this permutation repeats $3k\ell'$ times, i.e., for every $i \in [3k^2\ell' - k - 1]$ we have $c'(v_i v_{i + 1}) = c'(v_{k + i} v_{k + i + 1})$.
	Let $c''$ be the coloring of $E'$ obtained from $c'$ by permuting the colors so that for every $i=1,\ldots,3k^2 \ell'-1$ we have $c''(v_iv_{i+1}) = i \bmodone k$.
	Obviously, $c''$ satisfies $S'$, since $c'$ does.
	We claim that the coloring $c = c''\mid_E$ satisfies all the constraints of $S$ and extends $c_0$.
	
	Consider an arbitrary edge $uv \in \Dom(c_0)$, with $u < v$.
	Note that $c''(v_{\prj(uv)}, v_{\prj(uv)+1}) = \prj(uv) \bmodone k = c_0(uv)$.
	Hence the path $(v_{\prj(uv)}, v_{\prj(uv)+1}, \ldots v_{\prj(uv) + k - 1})$ uses all the colors except for
	$(c_0(uv)-1) \bmodone k$.
	From $(ii)$ and the fact that $\{v_{\prj(uv)}, u\} \in S'$ we infer that $P_{uv}^1$ is rainbow, which implies
	$c''(v_{\prj(uv) + k - 1}u)=(c_0(uv)-1) \bmodone k$.
	Hence the path $(v_{\prj(uv)+1}, \ldots v_{\prj(uv) + k - 1}, u)$ uses all the colors except for $c_0(uv)$.
	From $(ii)$ and the fact that $\{v_{\prj(uv)+1}, v\} \in S'$ we infer that $P_{uv}^2$ is rainbow, which implies that $c''(uv)=c_0(uv)$.
	This shows that $c''$ extends $c_0$, and hence so does $c$.
	
	In the previous paragraph we showed that for every edge $uv \in \Dom(c_0)$, with $u < v$, we have $c''(v_{\prj(uv) + k - 1}u)=(c_0(uv)-1) \bmodone k$.
	Note that if there is an edge $v_{\prj(uv) + k - 1}u'$ for some $u'\in V$, it means that there is $u'v'\in\Dom(c_0)$ such that $u' < v'$ and $\prj(uv)=\prj(u'v')$.
	Hence $c_0(u'v')=c_0(uv)$, and by the previous paragraph, $c''(v_{\prj(uv) + k - 1}u') = (c_0(u'v')-1) \bmodone k =(c_0(uv)-1) \bmodone k = c''(v_{\prj(uv) + k - 1}u)$.
	In other words, for every $i=1,\ldots,3k^2\ell'$, all the edges between $v_i$ and $V$ have the same color in $c''$.
	This, combined with $(iii)$ means that for every pair of vertices $u,v\in V$ every rainbow path in $c''$ is contained in $E$.
	Hence, for any $\{u,v\}\in S$ the rainbow path between them in $c''$ is also a rainbow path in $c$, so $c$ satisfies $\{u,v\}$.
	This ends the proof of the equivalence.
	
        \heading{Additional properties}
	Note that for every vertex $v \in V' \setminus V$
	we have $\deg_{G_{S'}}(v)\le 2 + \lceil |\Dom(c_0)| / \ell' \rceil$
	and for every vertex $v \in V$ we have $\deg_{G_{S'}}(v)\le \deg_{G_S}(v) + \deg_{G}(v)$.
	Hence  $\Delta(G_{S'}) = O(\Delta(G_S) + \Delta(G) + |\Dom(c_0)| / \ell)$, as required.

	Note that in the graph $G_{S'}[V'\setminus V]$ all the
	connected components are paths.
	Hence the vertices of $V'\setminus V$ can be colored
	using two colors in $G_{S'}$.
	By merging this coloring with the given $p$-coloring of the vertices
	of graph $G_S$ we obtain a proper vertex $(p + 2)$-coloring of $G_{S'}$.
\end{proof}

\subsection{From \probPartRC to \probRC}
\label{sec:PartRC->RC}

The basic idea of our reduction from \probPartRC to \probRC is to modify the graph so that the pairs of vertices from $\barE \setminus S$ can be somehow trivially satisfied, without affecting the satisfiability of $S$.
To this end we use a notion of biclique covering number (called also bipartite dimension).
The {\em biclique covering number} $\bc(G)$ of a graph $G$ is the smallest number of biclique subgraphs of $G$ that cover all edges of $G$.
The following proposition is well-known.

\begin{proposition}[Folklore]
\label{prop:clique}
It holds that
 $\bc(K_n) = \ceil{\log n}$, and the corresponding cover can be constructed in polynomial time.
\end{proposition}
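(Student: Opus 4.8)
The plan is to prove the two bounds $\bc(K_n)\le\ceil{\log n}$ and $\bc(K_n)\ge\ceil{\log n}$ separately; the polynomial-time construction will fall out of the upper-bound argument. For the upper bound I would assign to the $n$ vertices of $K_n$ distinct binary strings of length $\ell:=\ceil{\log n}$, which is possible since $2^\ell\ge n$. For each coordinate $i\in[\ell]$ let $X_i$ be the set of vertices whose $i$-th bit is $0$ and $Y_i$ the set of vertices whose $i$-th bit is $1$; since $\{X_i,Y_i\}$ is a partition of $V(K_n)$, the complete bipartite graph between $X_i$ and $Y_i$ is a biclique subgraph $B_i$ of $K_n$. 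Every edge $\{u,v\}$ of $K_n$ joins vertices with distinct strings, hence strings differing in some coordinate $i$, so $\{u,v\}\in E(B_i)$. Thus $B_1,\dots,B_\ell$ cover $E(K_n)$, giving $\bc(K_n)\le\ceil{\log n}$, and the string assignment together with the resulting bicliques is clearly computable in polynomial time, which settles the constructive claim.

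For the lower bound, suppose $B_1,\dots,B_t$ is an arbitrary biclique cover of $K_n$, and fix for each $j$ an ordered bipartition $(X_j,Y_j)$ of $V(B_j)$ into its two independent sides (any biclique with at least one edge admits such a bipartition, and bicliques with no edge may be discarded from the cover). Define $f\colon V(K_n)\to 2^{[t]}$ by $f(v)=\{\,j\in[t]:v\in X_j\,\}$. I claim $f$ is injective: for distinct $u,v$ the edge $\{u,v\}$ lies in some $B_j$, so exactly one of $u,v$ belongs to $X_j$ and the other to $Y_j$; hence $j$ lies in exactly one of $f(u),f(v)$, so $f(u)\ne f(v)$. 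Therefore $n\le 2^t$, i.e.\ $t\ge\log n$, and since $t$ is an integer, $t\ge\ceil{\log n}$. Combining the two bounds yields $\bc(K_n)=\ceil{\log n}$.

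I do not anticipate a real obstacle here; the only points requiring a little care are to discard bicliques with no edges so that the chosen bipartition is well defined, and to observe that a vertex lying in no $B_j$, or lying on the $Y_j$-side of $B_j$, simply contributes nothing to the $j$-th coordinate of $f$ — which is precisely what makes the injectivity argument go through.
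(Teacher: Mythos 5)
Your upper-bound construction is exactly the paper's proof: label vertices with $\ceil{\log n}$-bit strings and take one biclique per bit position, which also gives the polynomial-time construction. The paper stops there and leaves the matching lower bound as folklore; your injectivity argument ($f(v)=\{j : v\in X_j\}$ forces $n\le 2^t$) correctly supplies that omitted half, so the proposal is complete and correct.
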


\begin{proof}
Assume the vertex set of $K_n$ is $\{0,\ldots,n-1\}$.
The $i$-th biclique contains edges between the vertices that have $0$ at the $i$-th bit and the vertices that have $1$ at the $i$-th bit.
\end{proof}

Let $G=(V_1,V_2,E)$ be a bipartite graph. 
Then $\bibarG$ denotes the bipartite complement of $G$, i.e, the bipartite graph $(V_1,V_2,\{v_1v_2\ :\ \text{$v_1\in V_1$, $v_2\in V_2$, and $v_1v_2\not\in E$}\})$.
We will use the following result of Jukna.
Recall that we denote $\Delta_1(G)=\max\{\Delta(G),1\}$.

\begin{theorem}[Jukna~\cite{Jukna-on-set-intersection-representations}]
\label{thm:jukna}
For any $n$-vertex bipartite graph $G$ we have $\bc(\bibarG) = O(\Delta_1(G)\log n)$.
\end{theorem}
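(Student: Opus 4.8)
The plan is to prove this existentially by a direct probabilistic argument: I would sample $O(\Delta_1(G)\log n)$ random bicliques of $\bibarG$ and show that with positive probability every edge of $\bibarG$ (equivalently, every non-edge of $G$ across the bipartition) gets covered. Write $G=(V_1,V_2,E)$, put $d=\Delta_1(G)$ and $p=1/(d+1)$, and let $s=\lceil 3e(d+1)\ln n\rceil=O(d\log n)$. For each $j\in[s]$ I would independently draw a set $X_j\subseteq V_1$ by including each vertex of $V_1$ with probability $p$, independently of everything else, and then set $Y_j=\{v\in V_2 : N_G(v)\cap X_j=\emptyset\}$. Discarding degenerate pairs with an empty side, each $(X_j,Y_j)$ is a biclique of $\bibarG$: if $u\in X_j$ and $v\in Y_j$ then $u\notin N_G(v)$, hence $uv\notin E$, i.e. $uv\in E(\bibarG)$.

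Next I would fix a non-edge $uv$ of $G$ with $u\in V_1$, $v\in V_2$ and bound the probability that it is missed by all $s$ bicliques. The pair $uv$ is covered by the $j$-th biclique exactly when $u\in X_j$ and $N_G(v)\cap X_j=\emptyset$. Since $uv\notin E$ we have $u\notin N_G(v)$, so these two events depend on disjoint collections of independent coin flips and are therefore independent; moreover $|N_G(v)|\le\Delta(G)\le d$, so
\[\Pr\bigl[\,uv\text{ covered by biclique }j\,\bigr]\ \ge\ p(1-p)^d\ =\ \frac{1}{d+1}\Bigl(\frac{d}{d+1}\Bigr)^{d}\ >\ \frac{1}{e(d+1)},\]
using $(1+1/d)^d<e$. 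By independence across $j$, the probability that $uv$ is covered by none of the bicliques is at most $\bigl(1-\tfrac{1}{e(d+1)}\bigr)^{s}\le\exp\bigl(-s/(e(d+1))\bigr)$. A union bound over the fewer than $n^2$ cross pairs of $G$ then bounds the overall failure probability by $n^2\exp(-s/(e(d+1)))<1$ for the chosen $s$, so some outcome yields a biclique cover of $\bibarG$ of size $s=O(\Delta_1(G)\log n)$, as claimed. (The use of $\Delta_1$ rather than $\Delta$ only matters in the degenerate case $\Delta(G)=0$, where the same argument with $d=1$ still goes through.)

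The only genuine design decision — the part I expect to be the real obstacle — is the choice $p\asymp 1/d$. It has to balance two competing demands on a single random set $X_j$: the vertex $u$ must land in $X_j$ (probability $\Theta(p)$), while \emph{none} of the at most $d$ neighbours of $v$ may land in $X_j$ (probability $\Theta((1-p)^d)$); the product $p(1-p)^d$ is maximised, up to constants, precisely when $p\asymp 1/d$, and this is exactly what makes the per-trial success probability $\Omega(1/d)$ and hence $s=O(d\log n)$ trials enough. Everything after that choice is a routine application of $1-x\le e^{-x}$ and the union bound.

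If an explicit deterministic construction were wanted rather than a randomised existence proof, one would replace the random sets $X_j$ by a small explicit family of subsets of $V_1$ with an analogous separation property (for instance one derived from a covering code or a $d$-cover-free family). However, for the way the theorem is used in this paper only the existential bound $\bc(\bibarG)=O(\Delta_1(G)\log n)$ is needed: the concrete cover is afterwards produced by a separate brute-force procedure that is fast enough on the small instances arising from the reduction.
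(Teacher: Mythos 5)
Your proof is correct and is essentially Jukna's own argument, which is also the one this paper relies on: sample $X_j\subseteq V_1$ with per-vertex probability $\Theta(1/\Delta_1(G))$, let $Y_j$ be the common non-neighbours, note that each edge of $\bibarG$ is covered by a given sample with probability at least $1/(e(\Delta_1(G)+1))$, and finish with a union bound over at most $n^2$ pairs. This coincides with the sampling scheme and the per-edge bound quoted in the proof of Lemma~\ref{lem:jukna-algo}, so there is nothing substantive to compare.
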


Let us call the cover from Theorem~\ref{thm:jukna} the {\em Jukna cover}.
In our application we need to be able to {\em compute} the Jukna cover fast.

\begin{lemma}
\label{lem:jukna-algo}
The Jukna cover can be constructed in $(i)$ expected polynomial time, or $(ii)$ deterministic $2^nn^{O(1)}$ time.
\end{lemma}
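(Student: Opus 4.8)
The plan is to produce the cover by the random-colouring construction underlying Theorem~\ref{thm:jukna} and then either to verify it (for~$(i)$) or to derandomise it (for~$(ii)$). Write $G=(V_1,V_2,E)$, let $n=|V_1|+|V_2|$, $\Delta=\Delta(G)$, and let $N=|E(\bibarG)|<n^2$ be the number of edges of $\bibarG$ (i.e.\ cross non-edges of $G$). If $\Delta=0$ then $\bibarG=K_{|V_1|,|V_2|}$ and the single biclique $V_1\times V_2$ is a cover of size $1=O(\Delta_1(G)\log n)$; so in the rest assume $\Delta\ge 1$ and set $q=2\Delta$. The only bicliques I will use are, for a set $B\subseteq V_2$, the pair $(A_B,B)$ with $A_B=\{u\in V_1:\ N_G(u)\cap B=\emptyset\}$; since $u\in A_B$ and $v\in B$ force $uv\notin E$, this is a biclique of $\bibarG$, and it is the inclusion-maximal one with right side $B$. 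Both $A_B$ and the set of edges of $\bibarG$ it covers are computable in polynomial time from $B$.

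For $(i)$ I would sample $r=\lceil 2\log_2 n\rceil+1$ independent uniformly random colourings $\chi_1,\dots,\chi_r\colon V_2\to[q]$ and output the family of $rq=O(\Delta\log n)=O(\Delta_1(G)\log n)$ bicliques $\{(A_B,B):\ B=\chi_i^{-1}(c),\ i\in[r],\ c\in[q]\}$. A fixed edge $uv$ of $\bibarG$ is missed by all of them only if for every $i$ some $w\in N_G(u)$ satisfies $\chi_i(w)=\chi_i(v)$; since $v\notin N_G(u)$, a union bound over $N_G(u)$ gives $\prob{\chi_i(v)\in\chi_i(N_G(u))}\le |N_G(u)|/q\le 1/2$, and independence across $i$ yields probability at most $2^{-r}$ that $uv$ is missed. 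Summing over the $N<n^2$ edges of $\bibarG$ and using $2^r\ge 2n^2$, the probability that some edge is uncovered is below $1/2$. Hence I would run the construction, check in polynomial time whether every edge of $\bibarG$ lies in one of the output bicliques, and restart on failure; the expected number of restarts is at most $2$, so the total expected time is polynomial.

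For $(ii)$ I would derandomise greedily, using the same probabilistic estimate as the analysis tool. Keep the set $U$ of edges of $\bibarG$ not yet covered (initially all $N$ of them); while $U\ne\emptyset$, enumerate all $2^{|V_2|}\le 2^n$ subsets $B\subseteq V_2$, for each compute $A_B$ and $|(A_B\times B)\cap U|$ in polynomial time, add to the cover the biclique $(A_B,B)$ maximising this quantity, and remove the newly covered pairs from $U$. To bound the number of iterations, let $B$ be random with each $v\in V_2$ included independently with probability $p=1/(\Delta+1)$; for $uv\in U$ the events $v\in B$ and $N_G(u)\cap B=\emptyset$ are independent (again $v\notin N_G(u)$), so $\prob{uv\text{ is covered by }(A_B,B)}=p(1-p)^{|N_G(u)|}\ge p(1-p)^{\Delta}=\Omega(1/\Delta)$. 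Thus the best $B$ covers an $\Omega(1/\Delta)$ fraction of $U$, so $|U|$ drops by a factor $1-\Omega(1/\Delta)$ each step; after $t=O(\Delta\log n)$ steps one has $|U|<N\bigl(1-\Omega(1/\Delta)\bigr)^{t}<1$, hence $U=\emptyset$ and the cover has $O(\Delta_1(G)\log n)$ bicliques. Each step costs $2^nn^{O(1)}$, so the whole algorithm runs in $2^nn^{O(1)}$ time. (In fact the colourings of part~$(i)$ can also be fixed one vertex at a time by the method of conditional expectations, which already yields a deterministic polynomial-time algorithm; but the cruder bound above is all the statement requires.)

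The verification in $(i)$ and the enumeration and bookkeeping in $(ii)$ are routine. The point that needs care — and really the heart of the lemma — is the probabilistic accounting that pins the number of bicliques to $O(\Delta_1(G)\log n)$: choosing the colour range $q=2\Delta_1(G)$ (resp.\ the inclusion probability $p\approx 1/\Delta_1(G)$) so that each colouring covers a fixed edge of $\bibarG$ with probability $\ge 1/2$ (resp.\ so that one biclique removes an $\Omega(1/\Delta_1(G))$ fraction of the remaining edges), and then checking that $r=O(\log n)$ repetitions (resp.\ $O(\Delta_1(G)\log n)$ greedy steps) beat the union bound over the $<n^2$ edges of $\bibarG$.
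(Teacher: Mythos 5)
Your proposal is correct and follows essentially the same route as the paper: for $(ii)$ both proofs run a greedy algorithm over the $2^n$ canonical maximal bicliques (you enumerate $B\subseteq V_2$ and take the common $\bibarG$-neighbourhood in $V_1$; the paper enumerates $A\subseteq V_1$ and takes the common neighbourhood in $V_2$) and both justify the $O(\Delta\log n)$ iteration count by the same probabilistic estimate that a random one-sided subset with inclusion probability about $1/\Delta$ covers each remaining edge with probability $\Omega(1/\Delta)$. The only real difference is that the paper invokes Jukna's sampling procedure and his bound $n^2e^{-t/(\Delta e)}$ as a black box for part $(i)$, whereas you re-derive the coverage probability from scratch via random $2\Delta$-colourings of $V_2$ repeated $O(\log n)$ times; this is a minor variant that makes the argument self-contained without changing the cover size or the running-time analysis.
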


\begin{proof}
 Denote $\Delta=\Delta(G)$. If $\Delta=0$ the claim follows from Proposition~\ref{prop:clique}, so in what follows assume $\Delta\ge 1$.
 Jukna~\cite{Jukna-on-set-intersection-representations} shows a simple worst-case linear time algorithm which samples a biclique in $G$.
 Then it is proved that after sampling $t$ bicliques, the probability that there is an edge not covered by one of the bicliques is at most $n^2e^{-t/(\Delta e)}$.
 It follows that the probability that more than $\Delta e (2\ln n + 1)$ samples are needed is at most $e^{-1}$.
 If after $\Delta e (2\ln n + 1)$ samples some edges is not covered, we discard all the bicliques found and repeat the whole algorithm from the scratch.
 The expected number of such restarts is $1 / (1 - e^{-1})=O(1)$.
 
 Now we proceed to the second part of the claim.
 Let $G=(V_1,V_2,E)$.
 For every subset $A\subseteq V_1$ we define the biclique $B_A=(A,B,E_A)$, where $B$ is the set of vertices of $V_2$ adjacent in $\bibarG$ to all vertices of $A$.
 Clearly, $B_A$ is a subgraph of $\bibarG$ and for every subset $A\subseteq V_1$ it can be found in time linear in the size of $\bibarG$.
 Our deterministic algorithm works as follows: as long as not all edges of $\bibarG$ are covered, it picks the biclique $B_A$ which maximizes the number of new covered edges of $\bibarG$.
 Since all the bicliques in the set $\{B_A\ :\ A\subseteq V_1\}$ can be listed in time $O(2^n E(\bibarG))$, the total running time is $t2^nn^{O(1)}$, where $t$ is the size of the returned cover.
 It suffices to show that $t=O(\Delta\log n)$.
 
 Jukna~\cite{Jukna-on-set-intersection-representations} shows that if set $A$ is chosen by picking every vertex of $V_1$ independently with probability $\frac{1}{\Delta}$, then for any edge $uv\in E(\bibarG)$,
 
 \begin{equation}
 \label{eq:prob}
   \prob{uv \in E_A} \ge \frac{1}{\Delta e}.  
 \end{equation}
 
 Consider any step of our algorithm and let $R\subseteq E(\bibarG)$ be the set of the edges of $\bibarG$ which are not covered yet.
 By~\eqref{eq:prob} and the linearity of expectation a set $A$ sampled as described above covers at least $|R|/(\Delta e)$ new edges in expectation.
 In particular, it implies that there exists a set $A\subseteq V_1$ that covers at least $|R|/(\Delta e)$ new edges.
 Let $\alpha = (1-\frac{1}{\Delta e})^{-1}$.
 By Taylor expansion of $\log(1-x)$, it follows that  $t = O(\log_{\alpha} |E(\bibarG)|) = O(\log n / \log \alpha) = O(\Delta \log n)$.
\end{proof}

\begin{lemma} \label{lem:bliclique_covering}
  Let $G$ be an $n$-vertex graph with a given proper vertex $p$-coloring.
  Then the edges of $\barG$ can be covered by $O(p^2 \Delta_1(G) \log n)$ bicliques from $\barG$ so that any edge of $G$ and any biclique have at most one common vertex.
  This cover can be constructed in $(i)$ expected polynomial time, or $(ii)$ deterministic $2^nn^{O(1)}$ time.
\end{lemma}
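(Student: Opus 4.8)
The plan is to exploit the given proper $p$-colouring $c:V(G)\to[p]$ to split the task of covering $E(\barG)$ into $O(p^2)$ essentially independent subproblems, one for each unordered pair of colour classes, solving the off-diagonal ones with the Jukna cover (Theorem~\ref{thm:jukna} and Lemma~\ref{lem:jukna-algo}) and the diagonal ones with the elementary cover of a complete graph (Proposition~\ref{prop:clique}).

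First I would fix notation: for $i\in[p]$ let $V_i=c^{-1}(i)$, which is an independent set of $G$ since $c$ is proper. For $i<j$ let $G_{ij}$ be the bipartite graph with parts $V_i,V_j$ whose edge set consists of the edges of $G$ between $V_i$ and $V_j$; as $G_{ij}$ is a subgraph of $G$ on at most $n$ vertices we have $\Delta_1(G_{ij})\le\Delta_1(G)$, and its bipartite complement $\widehat{G_{ij}}$ consists precisely of the anti-edges of $G$ joining $V_i$ to $V_j$. By Theorem~\ref{thm:jukna} and Lemma~\ref{lem:jukna-algo} I can compute a Jukna cover $\Bb_{ij}$ of $\widehat{G_{ij}}$ consisting of $O(\Delta_1(G_{ij})\log n)=O(\Delta_1(G)\log n)$ bicliques, either in expected polynomial time or deterministically in $2^nn^{O(1)}$ time. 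For the diagonal case I observe that every pair inside $V_i$ is an anti-edge of $G$, so applying Proposition~\ref{prop:clique} to the complete graph on $V_i$ yields, in polynomial time, a cover $\Bb_{ii}$ of it by $\ceil{\log|V_i|}=O(\log n)$ bicliques. The output is $\bigcup_{i\le j}\Bb_{ij}$.

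Next I would verify the three requirements. \emph{Coverage:} any $\{u,v\}\in\barE$ either has $c(u)=c(v)=i$, and is then an edge of the complete graph on $V_i$ covered by $\Bb_{ii}$, or $c(u)=i\ne j=c(v)$, and is then an edge of $\widehat{G_{ij}}$ covered by $\Bb_{ij}$. \emph{Each biclique lies in $\barG$:} the bicliques in $\Bb_{ij}$ ($i<j$) have all their edges among anti-edges of $G$ by the definition of $\widehat{G_{ij}}$, and those in $\Bb_{ii}$ have all their edges inside $V_i$, all of which are anti-edges. \emph{Separation:} a biclique $(X,Y)\in\Bb_{ij}$ with $i<j$ satisfies $X\subseteq V_i$, $Y\subseteq V_j$, has no $G$-edge between $X$ and $Y$ (it is a biclique of $\widehat{G_{ij}}$) and no $G$-edge inside $X$ or inside $Y$ (each of $V_i,V_j$ is independent), so $X\cup Y$ is an independent set of $G$; similarly a biclique in $\Bb_{ii}$ has vertex set contained in the independent set $V_i$. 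In either case no edge of $G$ can meet the biclique in two vertices.

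Finally I would tally the bounds: summing over the $\binom{p}{2}$ off-diagonal pairs and the $p$ diagonal classes gives $\binom{p}{2}\cdot O(\Delta_1(G)\log n)+p\cdot O(\log n)=O(p^2\Delta_1(G)\log n)$ bicliques, and the running time is $O(p^2)$ times expected polynomial time (respectively $2^nn^{O(1)}$), hence still expected polynomial time (respectively $2^nn^{O(1)}$). The only points that need a little care are the inequality $\Delta_1(G_{ij})\le\Delta_1(G)$, which keeps each per-pair bound free of the subgraph, and the remark that properness of $c$ is exactly what forces every output biclique to be both anti-edge-only and $G$-independent; beyond these I do not anticipate a genuine obstacle.
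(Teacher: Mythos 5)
Your proposal is correct and follows essentially the same route as the paper: partition the anti-edges by (unordered) pairs of colour classes, cover the within-class pairs via Proposition~\ref{prop:clique} and the cross-class pairs via the Jukna cover of the bipartite complement (Lemma~\ref{lem:jukna-algo}), and sum the bounds. Your observation that each resulting biclique has vertex set independent in $G$ is a slightly cleaner packaging of the paper's case analysis for the "at most one common vertex" condition, but it is the same argument in substance.
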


\begin{proof}
The edges of $\barG$ between the vertices of any color class form a clique, so by Proposition~\ref{prop:clique} we can cover its edges using $O(\log n)$ bicliques.
If an edge of $G$ has both endpoints in such a biclique, these endpoints have the same color, contradiction.
For two different colors $i$ and $j$ the edges of $G$ between their color classes form a bipartite graph of maximum degree at most $\Delta(G)$.
Hence by Lemma~\ref{lem:jukna-algo} we can cover the edges of its bipartite complement using $O(\Delta_1(G)\log n)$ bicliques.
If an edge $uv$ of $G$ has both endpoints in such a biclique, then either (i) these endpoints have the same color, contradiction, or (ii) these endpoints belong to two different parts of the biclique, so $uv$ is in the biclique and hence $uv\in E(\barG)$, a contradiction.
Summing over all color classes and pairs of color classes, we use $O(p^2 \Delta_1(G) \log n)$ bicliques, as required.
\end{proof}

Now we proceed to the actual reduction. 
Somewhat surprisingly, the $k=2$ requires a slightly different construction than the $k\ge 3$ case, so we partition the proof into two lemmas.

\begin{lemma} \label{lem:from-PartRTwoC-to-RTwoC}
	Given an instance $(G=(V,E), S)$ of \probPartRTwoC
	together with a proper
	$p$-coloring of the graph $G_S=(V, S)$,
	one can construct an equivalent instance $G'$ of \probRTwoC
	such that
	$|V(G')| = O(
		|V|
		+ p^2 \Delta_1(G_S) \log |V|
	)$,
	$|E(G')| = O(
		|E(G)|
		+ (|V| + p^2 \Delta_1(G_S) \log |V|) \cdot p^2 \Delta_1(G_S) \log |V|
	)$.
	The construction algorithm can run in $(i)$ expected polynomial time, or $(ii)$ deterministic $2^{|V|}|V|^{O(1)}$ time.
\end{lemma}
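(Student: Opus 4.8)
The plan is to realize the idea outlined in Section~\ref{sec:PartRC->RC}: attach one fresh vertex to each biclique of a well-chosen cover of $\overline{G_S}$ (with $G_S=(V,S)$), so that every non-request pair becomes rainbow-connected for free, while no newly created short path can ever help a pair of $S$. Concretely, I first apply Lemma~\ref{lem:bliclique_covering} to $G_S=(V,S)$ with the given proper $p$-coloring, obtaining — in expected polynomial time, or deterministically in $2^{|V|}|V|^{O(1)}$ time — a family $\mathcal{B}$ of $O(p^2\Delta_1(G_S)\log|V|)$ bicliques that covers all of $\binom{V}{2}\setminus S$ and such that no pair of $S$ has both endpoints inside one biclique of $\mathcal{B}$. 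For every $B=(X_B,Y_B)\in\mathcal{B}$ I add a new vertex $x_B$ joined to all of $V(B)=X_B\cup Y_B$. The crucial point, specific to two colours, is that a rainbow path has at most two edges, so any rainbow path through some $x_B$ has the form $u\,x_B\,v$ with $u\in X_B$ and $v\in Y_B$; hence it connects a pair covered by $B$, which is by construction not a request. Thus the $x_B$ never help a pair of $S$, whereas the colouring giving colour $1$ to all $x_B$--$X_B$ edges and colour $2$ to all $x_B$--$Y_B$ edges satisfies every pair of $\binom{V}{2}\setminus S$ through a path of length two.

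The one genuinely new difficulty — absent from \probPartRTwoC — is that in \probRTwoC the added vertices must themselves be rainbow-connected to everything. I handle all pairs incident to $Z:=\{x_B:B\in\mathcal{B}\}$ with two further batches of gadget vertices, keeping the invariant that $N(g)\cap V$ is an $S$-independent set for every gadget vertex $g$ (so the argument above still shows $g$ cannot help a pair of $S$). For each colour class $V_i$ of the given $p$-coloring I add a vertex $y_i$ joined to $V_i$ (edges of colour $2$) and to every gadget vertex (edges of colour $1$); then $z\,y_i\,v$ is rainbow for every gadget vertex $z$ and every $v\in V_i$, which connects every gadget vertex to every vertex of $V$. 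To connect the vertices of $Z$ among themselves I take the explicit $\lceil\log|Z|\rceil$-biclique cover of the complete graph on $Z$ from Proposition~\ref{prop:clique} and add, for its $j$-th biclique $(Z_j^0,Z_j^1)$, a vertex $w_j$ joined to $Z_j^0$ (colour $1$) and $Z_j^1$ (colour $2$); since each bit partition covers all of $Z$, every $w_j$ is joined to all of $Z$, and I also join $w_j$ to every $y_i$. Let $G'$ be the resulting graph. One then checks that the only pairs still missing a rainbow path are the $O(\log^2|Z|)$ pairs $\{w_j,w_{j'}\}$, each handled through any $x_B$ whose index differs in bits $j$ and $j'$ (tiny instances dispatched by brute force); and since $N(y_i)\cap V=V_i$ and $N(w_j)\cap V=\emptyset$, the invariant holds. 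This step — making the $O(p^2\Delta_1(G_S)\log|V|)$ new vertices pairwise rainbow-connected and connected to $V$ using only $O(\log|V|)$ further vertices while never creating a length-$\le 2$ path between the endpoints of a pair of $S$ — is the delicate part of the proof and the one I expect to be the main obstacle.

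For the equivalence: from a $2$-coloring of $E(G)$ satisfying $S$, extend by the intended gadget colours above and verify rainbow-connectivity by a short case analysis (pairs inside $V$: requests via $G$, non-requests via the appropriate $x_B$ or a direct edge; pairs touching a gadget vertex: via the appropriate $y_i$, $w_j$, $x_B$, or a direct edge). Conversely, restrict a rainbow $2$-coloring of $G'$ to $E(G)$: as every rainbow path has length at most two and, by the invariant, no gadget vertex is adjacent to both endpoints of a pair of $S$, every rainbow path witnessing a pair of $S$ avoids all gadget vertices and hence lies in $G$; so the restricted coloring satisfies $S$.

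Finally, the size and time bounds follow by counting. The vertex count is $|V(G')|=|V|+|\mathcal{B}|+p+\lceil\log|\mathcal{B}|\rceil=O(|V|+p^2\Delta_1(G_S)\log|V|)$. The edge count is $|E(G')|=|E(G)|+\sum_{B\in\mathcal{B}}|V(B)|+O\!\left(p(|V|+|Z|)+|Z|\log|Z|\right)$, which lies within the stated bound because each biclique of $\mathcal{B}$ spans at most $|V|$ vertices and $|\mathcal{B}|=O(p^2\Delta_1(G_S)\log|V|)$ (so $\sum_{B}|V(B)|=O(|V|\cdot p^2\Delta_1(G_S)\log|V|)$) and the remaining terms are $O\big((p^2\Delta_1(G_S)\log|V|)^2\big)$. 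The only non-polynomial ingredient is computing the cover $\mathcal{B}$ through Lemma~\ref{lem:bliclique_covering}; all other gadgets are produced explicitly, so the construction runs in expected polynomial time, or deterministically in $2^{|V|}|V|^{O(1)}$ time, as claimed.
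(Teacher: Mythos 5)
Your overall strategy (cover $\binom{V}{2}\setminus S$ by the bicliques of Lemma~\ref{lem:bliclique_covering}, add one hub vertex per biclique, and argue soundness via the invariant that every gadget vertex's neighborhood in $V$ induces no request) is sound as far as it goes, and your soundness argument for the backward direction is correct. But the completeness of your construction has a genuine hole, and it is not where you located it. You claim that after adding the $y_i$'s and $w_j$'s ``the only pairs still missing a rainbow path are the $O(\log^2|Z|)$ pairs $\{w_j,w_{j'}\}$.'' That is false: consider a pair $\{y_i,v\}$ with $v\in V_{i'}$, $i'\ne i$. The common neighbours of $y_i$ and $v$ in your graph are exactly the vertices $x_B$ with $v\in V(B)$ (the $w_j$'s are not adjacent to $v$, and $y_{i'}$ is not adjacent to $y_i$), plus possibly some $G$-neighbours of $v$ lying in $V_i$. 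The path $y_i\,x_B\,v$ has colours $1$ and then $1$ or $2$ according to which side of $B$ contains $v$, so it is rainbow only if $v$ lies on the $Y$-side of some biclique --- which nothing guarantees; and the colour of a $G$-edge $uv$ with $u\in V_i$ is dictated by the solution of the \probPartRTwoC instance, not by you. The pairs $\{y_i,y_{i'}\}$ have the same defect (all their common neighbours see both of them with colour $1$, and you never specify the colours of the $w_jy_i$ edges), and so does any $v\in V$ that is incident to no anti-edge of $G_S$ and hence lies in no biclique. So the ``delicate step'' you flagged is indeed where the proof breaks, but your accounting of which pairs survive it is wrong, and fixing it inside your framework is not routine: the invariant that $N(g)\cap V$ is $S$-independent is precisely what forbids the natural fix of making some gadget vertex adjacent to \emph{all} of $V$.

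For comparison, the paper resolves this with a blunter and simpler device. It makes the set $W$ of biclique hubs a clique (the $q^2=O((p^2\Delta_1(G_S)\log|V|)^2)$ extra edges fit comfortably in the stated bound, so your economy with $\log|Z|$ hub-of-hubs vertices buys nothing), and adds a triangle $T=\{t_1,t_2,t_3\}$ with $t_3$ adjacent to \emph{all} of $V\cup W$ and $t_2$ adjacent to all of $W$. The vertex $t_3$ violates your independence invariant, but soundness is restored by a forcing argument: the only length-$2$ path from $t_1$ to any $v\in V$ passes through $t_3$, so in any valid colouring of $G'$ every edge $t_3v$ must receive the colour opposite to $c(t_1t_3)$; hence all $t_3$--$V$ edges are monochromatic and $t_3$ can never satisfy a request. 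This one universal vertex handles all gadget-to-$V$ pairs (and the no-biclique corner case) at a stroke. If you want to salvage your version, you essentially need to import this forcing gadget or something equivalent; the rest of your proposal (the use of Lemma~\ref{lem:bliclique_covering}, the size bounds, and the two running-time modes inherited from it) is correct.
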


\begin{proof}
	Let us consider a biclique
	covering of the complement
	of the graph $G_S$ with
	$q=O(p^2\Delta_1(G_S)\log n)$ bicliques
	$(U_1, V_1; E_1), (U_2, V_2; E_2), \ldots,
		(U_q, V_q; E_q)$
	as in Lemma~\ref{lem:bliclique_covering}.
	Let
	$W = \{w_1, w_2, \ldots, w_q\}$,
	$T = \{t_1, t_2, t_3\}$,
	$V(G') = V \cup W \cup T$
	and
	$E(G') =
		E(G)
		\cup (W \times W)
		\cup (T \times T)
		\cup (\{t_2\} \times W)
		\cup (\{t_3\} \times (V \cup W))
		\cup \left(\bigcup_{1\le i \le q} \{w_i\}\ \times (U_i \cup V_i) \right)$
	(we abuse the notation assuming
	that $\times$ operator
	returns \emph{unordered} pairs minus loops).
	
	If $(G, S)$ is a YES-instance then there exists
	a coloring $c_S$ such that all the constraints in $S$
	are satisfied.
	We extend this coloring to a coloring of $E(G')$ as follows.
	\[
		c(e) = \begin{cases}
			c_S(e) & \mbox{for } e \in E(G)\mbox{, } \\
			1 & \mbox{for } e \in (W \times W) \cup (T \times T) \cup (\{t_3\} \times W)
				\cup \left(\bigcup_{1 \le i \le q} \{w_i\} \times U_i\right)\mbox{, }\\
			0 & \mbox{for } e \in (\{t_2\} \times W) \cup (\{t_3\} \times V)
				\cup \left(\bigcup_{1 \le i \le q} \{w_i\} \times V_i\right).\\
		\end{cases}
	\]
	It suffices to show that all anti-edges of $G'$ are satisfied by the coloring $c$.
	An anti-edge $uv$ inside the set of the vertices $V$ either belongs to $S$
	and it is satisfied by a path in $G$ 
	or it belongs to one of the bicliques $(U_i, V_i; E_i)$
	and then it is satisfied by a path $u w_i v$.
	Inside $W$ and $T$ all the vertices are connected directly.
	An anti-edge $vw$ between $V$ and $W$ is connected by a path $v t_3 w$.
	The vertices of $T$ are connected with $V$ via $\{t_3\}$
	and with $W$ via $\{t_2\}$
	($\{t_3\}$ is also connected directly to $W$).
	So $G'$ is a YES-instance.
	
	If $G'$ is a YES-instance then there exists a coloring $c$ such
	that all the anti-edges in $G'$ are satisfied.
	Note that $S \cap E(G') = \emptyset$.
	An anti-edge belonging to $S$ cannot be satisfied by
	a path using any vertex from $W$ because it is not
	covered by any of the added bicliques.
	It cannot be also satisfied by a path using
	vertex $t_3$ because $t_3$ is the only common neighbor
	of $t_1$ and the vertices of $V$.
	Therefore in $c$ all the
	edges connecting $V$ with $t_3$ have to be in the
	same color, i.e., the color different from $c(t_1 t_3)$.
		Moreover $t_3$ is the only vertex of $T$
		that is adjacent to $V$.
	Hence every anti-edge in $S$ is satisfied using
	only paths inside $G$.
	Therefore $c\mid_V$ is also a coloring satisfying
	an instance $(G, E, S)$.
	
	We added only 	$O(p^2 \Delta_1(G_S) \log |V|)$ new vertices
	and $O((p^2 \Delta_1(G_S)\log |V|)^2  + |V| p^2 \Delta_1(G_S) \log |V|)
	 = O((|V| + p^2 \Delta_1(G_S)\log |V|) \cdot p^2 \Delta_1(G_S) \log |V|)$
	edges.
\end{proof}

\begin{lemma} \label{lem:from-PartRC-to-RC}
	For any fixed $k\ge 3$, given an instance $(G=(V,E), S)$ of \probPartRC
	together with
	a $p$-coloring of the graph $G_S=(V, S)$,
	one can construct an equivalent instance $G'$ of \probRC
	such that
	$|V(G')| = O(
		|V|
		+ k p^2 \Delta_1(G_S) \log |V|
	)$,
	$|E(G')| = O(
		|E(G)|
		+ |V| p^2 \Delta_1(G_S) \log |V|
	)$.
	The construction algorithm can run in $(i)$ expected polynomial time, or $(ii)$ deterministic $2^{|V|}|V|^{O(1)}$ time.
\end{lemma}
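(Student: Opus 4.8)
The plan is to follow the template of Lemma~\ref{lem:from-PartRTwoC-to-RTwoC}: given the instance $(G=(V,E),S)$ of \probPartRC and the $p$-coloring of $G_S$, apply Lemma~\ref{lem:bliclique_covering} to cover $\overline{G_S}$ by $q=O(p^2\Delta_1(G_S)\log|V|)$ bicliques $(U_i,V_i;E_i)$, each sharing at most one vertex with any pair of $S$, and then graft onto $G$ one gadget per biclique together with a small ``service'' gadget, obtaining an instance $G'$ of \probRC. The reason the two-colour construction does not transfer is that there a rainbow path has length at most $2$, so a two-edge shortcut $u$--$w_i$--$v$ can never be a proper subpath of a longer rainbow path; for $k\ge 3$ such a shortcut could be extended and would let a pair of $S$ be satisfied outside $G$. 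Hence for biclique $i$ I would instead introduce a path $P_i=(z^i_1,\dots,z^i_{k-1})$ on $k-1$ fresh vertices, join each $u\in U_i$ to $z^i_1$ and each $v\in V_i$ to $z^i_{k-1}$, so that the only $U_i$--$V_i$ connection inside the gadget is the unique path $(u,z^i_1,\dots,z^i_{k-1},v)$ of length exactly $k$; in the forward direction we colour it $1,2,\dots,k$, so every non-$S$ anti-edge in $U_i\times V_i$ is satisfied, while any rainbow path that \emph{enters} a gadget will be forced to run through it completely and thus to consume the whole colour budget.

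To make that last claim hold I would attach to each of $z^i_1$ and $z^i_{k-1}$ a \emph{colour-forcing pendant}, the $k$-colour analogue of the vertex $t_1$ of Lemma~\ref{lem:from-PartRTwoC-to-RTwoC}: a pendant structure whose far endpoint lies at distance exactly $k$ from every relevant neighbour of $z^i_1$ via a \emph{unique} shortest path, so that in any coloring satisfying all anti-edges of $G'$ the edge from $z^i_1$ to that neighbour is forced to the one colour missing from the (necessarily rainbow) pendant. This renders all entry edges $\{z^i_1u:u\in U_i\}$ (and the edge from $z^i_1$ to the service part) monochromatic, and symmetrically for $z^i_{k-1}$. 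Consequently a rainbow path arriving at $z^i_1$ along an entry edge can leave neither along another entry edge (same colour) nor into the pendant (a dead end), so it must proceed $z^i_1\to z^i_2\to\cdots\to z^i_{k-1}$ and exit on the far side; this uses $k$ edges, hence all $k$ colours, so the path stops there, and its two endpoints are a $U_i$-vertex and a $V_i$-vertex — never a pair of $S$, by the choice of cover. The service gadget is a hub $h$ adjacent to all of $V$ (its incident edges to $V$ forced monochromatic by its own pendant, so $u$--$h$--$v$ is never rainbow for $\{u,v\}\in S$), together with $O(k)$ auxiliary vertices adjacent to $h$ and to the gadget endpoints, through which every remaining anti-edge — between $V$ and an internal gadget vertex, between internal vertices of different gadgets, and among auxiliary/pendant vertices — is connected by an explicitly coloured rainbow path of length at most $k$; crucially the auxiliary vertices reach $V$ only through $h$ or through a fully traversed gadget, so no rainbow $V$--$V$ path can use them without exhausting the colour budget.

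With the construction fixed, equivalence is argued in the two usual directions. For completeness, extend a solution $c_S$ of $(G,S)$: keep $c_S$ on $E(G)$, colour each gadget path $1,\dots,k$, colour the forcing pendants consistently so that the forced ``missing colours'' are the ones we want, colour all $h$--$V$ edges by one fixed colour, and choose the $O(k)$-many auxiliary and pendant colours so that every type of new anti-edge gets the short rainbow path described above — a finite case check. For soundness, given a solution $c$ of $G'$, the forcing-pendant argument together with the structural facts above show that a rainbow path between two vertices of $V$ can use neither an internal gadget vertex (else its endpoints form a $U_i$--$V_i$ pair, not in $S$) nor $h$ nor an auxiliary vertex; hence for each $\{u,v\}\in S$ the witnessing rainbow path lies inside $G$, so $c|_{E(G)}$ solves $(G,S)$. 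The size bounds are then routine: $q(k-1)+O(k)=O(kp^2\Delta_1(G_S)\log|V|)$ new vertices, and $\sum_i(|U_i|+|V_i|)+O(|V|+kq)=O(|E(G)|+|V|p^2\Delta_1(G_S)\log|V|)$ new edges; the only non-polynomial ingredient is the biclique cover, so the running time is the one from Lemma~\ref{lem:bliclique_covering}, i.e., expected polynomial or deterministic $2^{|V|}|V|^{O(1)}$.

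The step I expect to be the real obstacle is the design and analysis of the colour-forcing pendants. A naive pendant hung at $z^i_1$ forces \emph{every} edge at $z^i_1$ to the same colour, including $z^i_1z^i_2$, which would make the gadget path non-rainbow and the gadget useless; one therefore needs a forcer shaped so that the entry edges are pinned while the first (and further) path edges stay free, and so that the distant pendant vertices themselves remain within distance $k$ of all the gadget vertices they must be rainbow-connected to. Getting this local geometry right, and simultaneously wiring the $O(k)$ auxiliary vertices richly enough to satisfy every new anti-edge by a rainbow path of length at most $k$ yet sparsely enough that they create no shortcut rescuing a pair of $S$, is where the bulk of the technical work lies; everything else mirrors the two-colour case and the counting is immediate.
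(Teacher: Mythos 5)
Your high-level template is the right one (biclique cover from Lemma~\ref{lem:bliclique_covering}, one length-$k$ gadget per biclique so that each non-$S$ anti-edge $uv$ with $u\in U_i$, $v\in V_i$ gets a rainbow path of length exactly $k$, plus a small global structure for the remaining anti-edges), and your size and running-time accounting is fine. But the load-bearing ingredient of your construction --- the colour-forcing pendants --- is exactly the step that cannot work as described, and you have not resolved it. To pin the edge $z^i_1u$ to a \emph{single} colour, the pendant hanging at $z^i_1$ must be a path of length exactly $k-1$ whose tip $t$ reaches each relevant neighbour of $z^i_1$ only via the pendant; a shorter pendant only excludes some colours and leaves different entry edges free to take different colours, which is not enough to kill the bounce paths $u\,z^i_1\,u'$ or $u\,z^i_1\,x\,z^j_1\,v$ through your auxiliary vertices. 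But a length-$(k-1)$ pendant places $t$ at distance $k-1$ from its anchor, leaving a budget of one edge: $t$ is then at distance at least $k+1$ from $z^i_3,\dots$, from every other gadget, from most of $V$, and from the other pendant tips, so the anti-edges incident to $t$ can never be rainbow connected and $G'$ is unconditionally a NO-instance. Adding extra edges at $t$ to fix its connectivity destroys the uniqueness of the forcing path and hence the forcing itself. The same dead end occurs at the hub $h$. This is not a routine technicality to be filled in later; it is the reason a different gadget is needed.

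The paper's construction avoids colour forcing entirely. Each biclique gets a $2(k-1)$-\emph{cycle} (two parallel internal paths, which is also what makes the explicit rainbow colouring in the completeness direction go through), and global connectivity is provided by $2\ceil{\log q}$ vertices $a_t,b_t$ forming a clique and attached to \emph{portals in the middle} of each cycle according to the binary expansion of the gadget index. Because the portals sit at distance about $(k-1)/2$ from every internal vertex, the $a_t,b_t$ are within distance roughly $k/2+1$ of everything, so all the new anti-edges are satisfiable; and soundness is argued purely by distances: any $u$--$v$ path with $\{u,v\}\in S$ that leaves $G$ either has length at least $k+1$, or traverses a gadget end to end (impossible since the cover shares at most one vertex with any request), or bounces at a gadget entry vertex. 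If you want to salvage your single-path-plus-hub design you would have to replace the forcing pendants by an argument of this distance-based kind, which in turn forces you to re-engineer where the auxiliary vertices attach --- at which point you have essentially rediscovered the paper's portal construction.
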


\begin{proof}
 In what follows we assume that $G$ contains an isolated vertex $v^*$, for otherwise we can just add it (without changing $S$) and get an equivalent instance of \probPartRC.
 Let us consider a biclique cover of the complement of the graph $G_S$ with $q=O(p^2\Delta_1(G_S)\log |V|)$ bicliques $B_1=(U_1, V_1; E_1), B_2=(U_2, V_2; E_2), \ldots,B_q=(U_q, V_q; E_q)$ as in Lemma~\ref{lem:bliclique_covering}.
 Note that because of the existence of $v^*$, every vertex of $G$ belongs to at least one biclique.
 We construct a graph $G'$ as follows.
 Begin with $G'=G$.
 Next, for every biclique $B_i$, $i=1,\ldots,q$ we add 
 \begin{itemize}
  \item a $2(k-1)$-cycle $C_i=(v_{i,0},v_{i,1},\ldots,v_{i,k-2},w_{i,k-3},\ldots,w_{i,1})$,
  \item an edge $uv_{i,0}$ for every $u\in U_i$,
  \item an edge $vv_{i, k- 2}$ for every $v\in V_i$.
 \end{itemize}
 
 We denote $w_{i, 0}=v_{i, 0}$ and $w_{i,k-2}=v_{i,k-2}$. 
 For every  $i=1,\ldots,q$, the cycle $C_i$ partitions into two paths $P_i=(v_{i,0},v_{i,1},\ldots,v_{i,k-2})$ and $Q_i=(w_{i,0},v_{i,1},\ldots,w_{i,k-2})$.
 Next, we add $2\ceil{\log q}$ vertices $a_1,\ldots,a_{\ceil{\log q}}$ and $b_1,\ldots,b_{\ceil{\log q}}$.
 
 At this point, the construction differs a bit depending on the parity of $k$.
 
 Assume $k$ is odd.
 For every  $i=1,\ldots,q$, the middle vertices of the paths $P_i$ and $Q_i$, i.e., the vertices $v_{i,(k-3)/2},v_{i,(k-1)/2},w_{i,(k-3)/2},w_{i,(k-1)/2}$ are called {\em portals}. For every $t=1,\ldots,\ceil{\log q}$ we put edges between $\{a_t,b_t\}$ and all portals.
 
 Assume $k$ is even.
 Then there are two kinds of portals.
 For every  $i=1,\ldots,q$, the middle vertex of the paths $P_i$ and $Q_i$, i.e., the vertices $v_{i,(k-2)/2},w_{i,(k-2)/2}$ are called {\em $1$-portals}. 
 For every  $i=1,\ldots,q$, the neighbors of the $1$-portals on $P_i$ and $Q_i$, i.e., the vertices $v_{i,(k-4)/2},v_{i,k/2},w_{i,(k-4)/2},w_{i,k/2}$ are called {\em $0$-portals}. 
 For every $t=1,\ldots,\ceil{\log q}$ and for every $i$, we put edges between $\{a_t,b_t\}$ and all $\theta_t(i)$-portals of $C_i$, where $\theta_t(i)$ is the $t$-th bit of $i$.
 
 Finally, form a clique from all vertices $a_{r}$ and $b_r$ for $r=1,\ldots,\ceil{\log q}$.
 This completes the construction. 
 Note that we have added $2(k-1) q + 2\ceil{\log q} = O(kq) = O(k p^2\Delta_1(G_S)\log |V|)$
 vertices and at most
 $q (2(k-1) + |V| + 8 \ceil{\log q} ) + 2 \ceil{\log q}^2
 = O(q|V|) = O(|V|p^2\Delta_1(G_S)\log |V|)$ edges.
 
\begin{figure}[t]

\begin{center}

\begin{tikzpicture}[scale=0.7]

\tikzset{vtx/.style={draw, circle, line width=1pt, inner sep = 1pt}}
\tikzset{edgecolor/.style={draw, rectangle, line width=.5pt, fill=white, pos=0.7, inner sep=1pt}}
\node [vtx] (vi0) at (0,1) {$v_{i,0}$};
\node [vtx] (wi1) at (-1,-2) {$v_{i,1}$};
\node [vtx] (wi2) at (-1,-4) {$v_{i,2}$};
\node [vtx] (vi2) at (1,-4) {$w_{i,2}$};
\node [vtx] (vi1) at (1,-2) {$w_{i,1}$};
\node [vtx] (vi3) at (0,-7) {$v_{i,3}$};

\draw [ thick] (vi0) edge node [edgecolor,pos=0.5] {2} (wi1);
\draw [ thick] (wi1) edge node [edgecolor,pos=0.5] {3} (wi2);
\draw [ thick] (wi2) edge node [edgecolor,pos=0.5] {4} (vi3);
\draw [ thick] (vi3) edge node [edgecolor,pos=0.5] {2} (vi2);
\draw [ thick] (vi2) edge node [edgecolor,pos=0.5] {3} (vi1);
\draw [ thick] (vi1) edge node [edgecolor,pos=0.5] {4} (vi0);

\node [vtx] (vj0) at (6,1) {$v_{j,0}$};
\node [vtx] (wj1) at (5,-2) {$v_{j,1}$};
\node [vtx] (wj2) at (5,-4) {$v_{j,2}$};
\node [vtx] (vj2) at (7,-4) {$w_{j,2}$};
\node [vtx] (vj1) at (7,-2) {$w_{j,1}$};
\node [vtx] (vj3) at (6,-7) {$v_{j,3}$};

\draw [ thick] (vj0) edge node [edgecolor,pos=0.5] {2} (wj1);
\draw [ thick] (wj1) edge node [edgecolor,pos=0.5] {3} (wj2);
\draw [ thick] (wj2) edge node [edgecolor,pos=0.5] {4} (vj3);
\draw [ thick] (vj3) edge node [edgecolor,pos=0.5] {2} (vj2);
\draw [ thick] (vj2) edge node [edgecolor,pos=0.5] {3} (vj1);
\draw [ thick] (vj1) edge node [edgecolor,pos=0.5] {4} (vj0);

\node [vtx, inner sep=2] (a) at (3,1) {$a_t$};
\draw [ thick](wi1) -- node [edgecolor] {3} (a);
\draw [ thick](vi2) -- node [edgecolor] {3} (a);
\draw [ thick](wi2) -- node [edgecolor] {3} (a);
\draw [ thick](vi1) -- node [edgecolor] {3} (a);

\draw [ thick](vj1) -- node [edgecolor] {1} (a);
\draw [ thick](vj2) -- node [edgecolor] {1} (a);
\draw [ thick](wj2) -- node [edgecolor] {1} (a);
\draw [ thick](wj1) -- node [edgecolor] {1} (a);

\node [vtx, inner sep=2] (b) at (3,-7) {$b_t$};
\draw [ thick](a) -- node [edgecolor,pos=0.5] {5} (b);

\draw [ thick](wi1) -- node [edgecolor] {3} (b);
\draw [ thick](vi2) -- node [edgecolor] {3} (b);
\draw [ thick](wi2) -- node [edgecolor] {3} (b);
\draw [ thick](vi1) -- node [edgecolor] {3} (b);

\draw [ thick](vj1) -- node [edgecolor] {5} (b);
\draw [ thick](vj2) -- node [edgecolor] {5} (b);
\draw [ thick](wj2) -- node [edgecolor] {5} (b);
\draw [ thick](wj1) -- node [edgecolor] {5} (b);

\draw [ thick](vi0) -- node [edgecolor,pos=0.5] {1} (-3,2);
\draw [ thick](vi0) -- node [edgecolor,pos=0.5] {1}(-3,1);
\draw [ thick] (vi0) --node [edgecolor,pos=0.5] {1} (-3,0);
\draw [line width=1pt,fill=white] (-3.5,1) ellipse (1 and 2);
\node [scale=2] at (-3.5,1) {$U_i$};

\draw [ thick](vj0) -- node [edgecolor,pos=0.5] {1} (9,2);
\draw [ thick](vj0) -- node [edgecolor,pos=0.5] {1}(9,1);
\draw [ thick] (vj0) --node [edgecolor,pos=0.5] {1} (9,0);
\draw [line width=1pt,fill=white] (9.5,1) ellipse (1 and 2);
\node [scale=2] at (9.5,1) {$U_j$};

\draw [ thick](vi3) -- node [edgecolor,pos=0.5] {5} (-3,-6);
\draw [ thick](vi3) -- node [edgecolor,pos=0.5] {5}(-3,-7);
\draw [ thick] (vi3) --node [edgecolor,pos=0.5] {5} (-3,-8);
\draw [line width=1pt,fill=white] (-3.5,-7) ellipse (1 and 2);
\node [scale=2] at (-3.5,-7) {$V_i$};

\draw [ thick](vj3) -- node [edgecolor,pos=0.5] {5} (9,-6);
\draw [ thick](vj3) -- node [edgecolor,pos=0.5] {5}(9,-7);
\draw [ thick] (vj3) --node [edgecolor,pos=0.5] {5} (9,-8);
\draw [line width=1pt,fill=white] (9.5,-7) ellipse (1 and 2);
\node [scale=2] at (9.5,-7) {$V_j$};

\end{tikzpicture}

\end{center}

\caption{\label{fig:butterfly}Illustration of the $k=5$ case. Rectangular labels denote colors. Here, $t$ is the number of any bit on which $i$ and $j$ differ; $\theta_t(i)=0$ and $\theta_t(j)=1$.}

\end{figure}
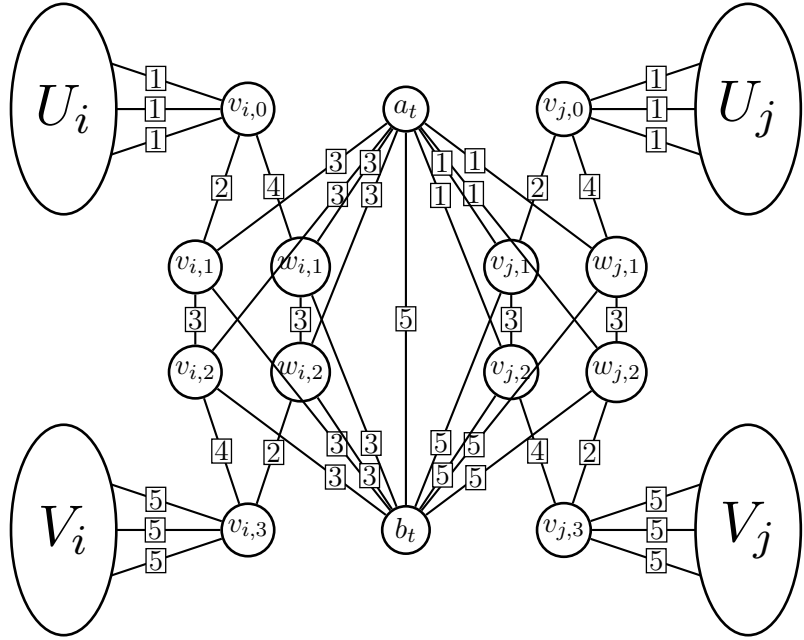

 Assume that $(G,S)$ is a YES-instance of \probPartRC, and let $c$ be the corresponding coloring.
 We will show that there is a rainbow $k$-coloring $c'$ of $E(G')$.
 Define $c'(e)=c(e)$ for $e\in E$.
 Next, for every biclique $B_i$, $i=1,\ldots,q$ we define colors of the corresponding edges as follows.
 \begin{itemize}
  \item The edges of the cycle $(v_{i,0},v_{i,1},\ldots,v_{i,k-2},w_{i,k-3},\ldots,w_{i,1})$ are colored with colors $2,3,\ldots,k-1,2,\ldots,k-1$ respectively.
  \item for every $u\in U_i$, we put $c'(uv_{i,0})=1$,
  \item for every $v\in V_i$, we put $c'(vv_{i, k - 2})=k$.
 \end{itemize}

 Assume $k$ is odd.
 For every $t=1,\ldots,\ceil{\log q}$ and for every  $i=1,\ldots,q$ consider the set $A_{t,i}$ (resp. $B_{t,i}$) of four edges between $a_t$ (resp. $b_t$) and the portals of $C_i$.
 If $\theta_t(i)=0$ the edges of both $A_{t,i}$ and $B_{t,i}$ are colored with $\frac{k+1}2$.
 If $\theta_t(i)=1$ the edges of $A_{t,i}$ are colored with $1$ and the edges of $B_{t,i}$ are colored with $k$.
 
 Assume $k$ is even.
 For every $t=1,\ldots,\ceil{\log q}$ and for every  $i=1,\ldots,q$ consider the set $A_{t,i}$ (resp. $B_{t,i}$) of four or two edges between $a_t$ (resp. $b_t$) and the $\theta_t(i)$-portals of $C_i$.
 If $\theta_t(i)=0$ the edges of both $A_{t,i}$ and $B_{t,i}$ are colored with $\frac{k}2$.
 If $\theta_t(i)=1$ the edges of $A_{t,i}$ are colored with $1$ and the edges of $B_{t,i}$ are colored with $k$.

 Finally, the clique formed of vertices $a_r$ and $b_r$ is colored in color $k$.
 (See Fig~\ref{fig:butterfly} for an illustration of the $k=5$ case.)
 Now we need to verify whether every pair $u,v$ of vertices of $(G',c')$ is connected by a rainbow path. Let us consider cases depending on the types of vertices in the pair.
 \begin{itemize}
  \item If $u,v \in V$ and $\{u,v\} \in S$ then $u$ and $v$ are connected by a rainbow path in $(G,c)$, and we can use the same path.
  \item If $u,v \in V$ and $\{u,v\} \not \in S$ then let $B_i$ be the biclique that contains $uv$ and assume w.l.o.g.\ $u\in U_i$ and $v\in V_i$. 
       Then the path $(u, v_{i,0}, \ldots, v_{i,k-2}, v)$ is colored by $(1,2,\ldots,k-1,k)$ and hence is rainbow.
  \item Assume $u\in V$ and $v\in C_j$ for some $j=1,\ldots,q$.
        Pick any biclique $B_i$ that contains $u$. Assume that $u \in U_i$ (the case $u \in V_i$ is symmetric).
        If $i=j$ we use the path which begins with $uv_{i, 0}$ (colored by $1$) and then continues to $v$ using the shortest path on $C_j$. Since $C_j$ is colored using only colors $2,\ldots,k-1$, this path is rainbow.
        Hence we can focus on the case $i\ne j$.
        Let $t$ be any bit on which $i$ and $j$ differ.
        Assume $k$ is odd. 
        If $v \in \{v_{j,0},\ldots,v_{j,(k-3)/2}\}$, say $v=v_{j,\ell}$, we use the path $(u, w_{i, 0}, \ldots, w_{i,(k-3)/2}, b_{t}, v_{j,(k-3)/2}, \ldots, v_{j, \ell})$.
        Depending on whether $\theta_t(i)$ is $0$ or $1$, the successive edges of this path have colors $(1,k-1,\ldots,\frac{k+3}2,\frac{k+1}2,k,\frac{k-1}2,\ldots,\ell+2)$ or $(1,k-1,\ldots,\frac{k+3}2,k,\frac{k+1}2,\frac{k-1}2,\ldots,\ell+2)$.
        The remaining three cases $v \in \{v_{j,(k-1)/2},\ldots,v_{j,k-2}\}$, $v \in \{w_{j,0},\ldots,w_{j,(k-3)/2}\}$ and $v \in \{w_{j,(k-1)/2},\ldots,w_{j,k-2}\}$ are analogous.
        When $k$ is even, $\theta_t(i)=0$ and $v \in \{v_{j,0},\ldots,v_{j,(k-2)/2}\}$, say $v=v_{j,\ell}$, we use the path $(u, w_{i, 0}, \ldots, w_{i,(k-4)/2}, b_{t}, v_{j,(k-2)/2}, \ldots, v_{j, \ell})$.
        Again, there are three more analogous cases when $\theta_t(i)=0$ and four ones when $\theta_t(i)=1$.
        When $u \in V_i$ the paths are analogous, but we use $a_{t}$ instead of $b_{t}$, to avoid repeating the color $k$.
  \item If $u\in V$ and $v=a_r$ or $v=b_r$ for some $r=1,\ldots,\ceil{\log q}$, the requested path is a subpath of one of the rainbow paths described in the previous case.
  \item Assume $u\in C_i$ and $v\in C_j$ for some $i,j=1,\ldots,q$.      
        When $i=j$ we can reach $v$ from $u$ by a rainbow path going through the shortest path in $C_i$. 
        Assume $i\ne j$.
        Let $t$ be the first bit on which $i$ and $j$ differ.
        Let us describe the case of odd $k$ only (the case of even $k$ is similar, but slightly different, because of non-symmetric neighborhoods of vertices $a_r$ and $b_r$).
        Denote $\clone_j(v_{i,r}) = v_{j,r}$ and $\clone_j(w_{i,r}) = w_{j,r}$.
        Recall that for every path $v_{i,0},\ldots,v_{i,k-2}$ there are two portals, similarly for every path $w_{i,0},\ldots,w_{i,k-2}$.
        Two portals $x_1$ and $x_2$ on the same path are called {\em twins} and we denote $\twin(x_1)=x_2$ and $\twin(x_2)=x_1$.
        If $v_{i,r}$ is a portal then $w_{i,r}$ is also a portal and we denote $\opp(v_{i,r}) = w_{i,r}$ and $\opp(w_{i,r}) = v_{i,r}$.
        First assume the shortest path $P$ from $u$ to $\clone_i(v)$ in $C_i$ goes through a portal. Let $x$ be the first portal visited by $P$ from $u$.
        Then the rainbow path from $u$ to $v$ is formed by going in $P$ from $u$ to $x$, then going to $a_{t}$, and then either through $\twin(\clone_j(x))$ to $v$ using the shortest path on $C_j$, when $v\ne\clone_j(x)$, or directly to $\clone_j(x)$, when $v=\clone_j(x)$.
        The colors are the same as on path $P$, plus color $1$ (and plus color $\frac{k+1}2$ when $v=\clone_j(x)$), so the path is rainbow.
        Now assume that $P$ does not go through a portal. 
        Then the rainbow path from $u$ to $v$ is formed by going in $P$ from $u$ to the nearest portal $x$, then going through $a_{t}$ to $\clone_j(\opp(x))$, and then to $v$ using the shortest path on $C_j$. This path uses colors of $[k]\setminus c(E(P)) \cup \{1\}$, each exactly once.
  \item If $u\in C_i$ for some $i$ and $v=a_r$ for some $r=1,\ldots,\ceil{\log q}$, the requested path is a subpath of one of the rainbow paths described in the previous case; when $v=b_r$ we use the path to $a_r$ and extend it by the edge $a_rb_r$.
  \item If $\{u,v\} \subseteq \{a_r\ :\ r=1,\ldots,\ceil{\log q}\} \cup \{b_r\ :\ r=1,\ldots,\ceil{\log q}\}$, then $u$ and $v$ are adjacent, hence connected by a rainbow path of length 1.
 \end{itemize}

 Now assume that $G'$ is a YES-instance of \probRC, and let $c'$ be the corresponding coloring.
 We claim that the coloring $c'|_{E(G)}$ of $E(G)$ satisfies all the pairs in $S$.
 It follows from the observation that for every $\{u,v\}\in S$ every path between $u$ and $v$ that leaves $E(G)$ either has length at least $k+1$, or contains two edges $av_{i,0}$ and $v_{i,0}b$ for some $i=1,\ldots,q$ and $a,b \in U_i$, or contains two edges $av_{i,3}$ and $v_{i,3}b$ for some $i=1,\ldots,q$ and $a,b \in V_i$.
\end{proof}

\subsection{Putting everything together}

By pipelining lemmas~\ref{lem_transformation},~\ref{lem:from-TSAT-to-PartRTwoCExt},~\ref{lem:from-PartRTwoCExt-to-PartRCExt}, and~\ref{lem:from-PartRCExt-to-PartRC} we get the following corollary.

\begin{corollary} \label{lem:from-TSAT-to-PartRTwoC}
Fix $k\ge 2$.
	Given a \probTSAT formula $\varphi$ with $m$ clauses
	one can construct in polynomial time an equivalent instance $(G=(V,E), S)$ of \probPartRC
	such that
	$|V| = O(m^{2/3})$,
	$|E| = O(m)$,
	$\Delta((V, S)) = O(m^{1/3})$, and
	the graph $G_S=(V,S)$ is $O(1)$-colorable.
\end{corollary}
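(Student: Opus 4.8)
The plan is to compose the reductions named in the statement and keep track of how the parameters propagate; everything is essentially mechanical, the only point needing care being the number of colors of the precoloring conflict graph. I would first apply Lemma~\ref{lem_transformation} to $\varphi$, obtaining an equivalent formula $\varphi'$ with $n=\Theta(m)$ variables and $\Theta(m)$ clauses, each clause containing exactly three distinct variables and each variable occurring in at most four clauses; henceforth $n$ and $m$ are interchangeable up to constants. Feeding $\varphi'$ into Lemma~\ref{lem:from-TSAT-to-PartRTwoCExt} gives an equivalent instance $I_1=(G_1,S_1,c_0)$ of \probPartRTwoCExt with $|V(G_1)|=O(m^{2/3})$, $|E(G_1)|=O(m)$, $\Delta(G_1)=O(m^{1/3})$, $\Delta(G_{S_1})=O(m^{1/3})$ where $G_{S_1}=(V(G_1),S_1)$, and $|\Dom(c_0)|=O(m^{2/3})$, together with a proper vertex $4$-coloring of $(V(G_1),E(G_1)\cup S_1)$ and a proper vertex $O(m^{1/3})$-coloring of $CG_{I_1}$. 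I would record two consequences: first, $|S_1|=O\!\left(|V(G_1)|\cdot\Delta(G_{S_1})\right)=O(m)$; second, since $CG_{I_1}$ has $|\Dom(c_0)|=\Omega(m^{1/3})$ vertices, I may refine its coloring by splitting color classes so that it uses exactly $\ell=\Theta(m^{1/3})$ colors.

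If $k\ge 3$, I would next apply Lemma~\ref{lem:from-PartRTwoCExt-to-PartRCExt} to $I_1$; as $k$ is a fixed constant every bound is preserved, so the resulting \probPartRCExt instance $I_2=(G_2,S_2,c_0')$ satisfies $|V(G_2)|=O(m^{2/3})$, $|E(G_2)|=O(m)$, $|S_2|=|S_1|+|E(G_1)|=O(m)$, $\Delta(G_2)=O(m^{1/3})$, $|\Dom(c_0')|=O(m^{2/3})$ and $\Delta(G_{S_2})\le\Delta(G_1)+\Delta(G_{S_1})=O(m^{1/3})$, and it comes with a proper vertex $5$-coloring of $G_{S_2}$ and, since the conflict-graph coloring grows only by $O(q)$ with $q=4$, a proper vertex $\Theta(m^{1/3})$-coloring of $CG_{I_2}$. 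If $k=2$, I would instead set $I_2:=I_1$, regarded as the $k=2$ instance of \probPartRCExt, keeping its $4$-coloring of $G_{S_1}$ (the restriction of the $4$-coloring of $(V(G_1),E(G_1)\cup S_1)$) and its $\Theta(m^{1/3})$-coloring of $CG_{I_1}$.

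Finally I would apply Lemma~\ref{lem:from-PartRCExt-to-PartRC} (whose argument applies verbatim for $k=2$ as well) to $I_2$ together with its $\ell=\Theta(m^{1/3})$-coloring of the precoloring conflict graph, obtaining an equivalent instance $(G,S)$ of \probPartRC. Then $|V(G)|=|V(G_2)|+O(k^2\ell)=O(m^{2/3})$, $|E(G)|=|E(G_2)|+|\Dom(c_0')|+O(k^2\ell)=O(m)$ and $|S|=|S_2|+2|\Dom(c_0')|+O(k^2\ell)=O(m)$. For the request graph $G_S=(V(G),S)$ the lemma gives $\Delta(G_S)=O\!\left(\Delta(G_{S_2})+\Delta(G_2)+|\Dom(c_0')|/\ell\right)$, and the last term is $O(m^{2/3})/\Theta(m^{1/3})=O(m^{1/3})$, so $\Delta(G_S)=O(m^{1/3})$. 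Moreover, from the $5$-coloring (resp.\ $4$-coloring when $k=2$) of $G_{S_2}$ the lemma outputs a proper vertex $7$-coloring (resp.\ $6$-coloring) of $G_S$, so $G_S$ is $O(1)$-colorable. Since each step is a polynomial-time, equivalence-preserving reduction, $(G,S)$ is an equivalent \probPartRC instance produced in polynomial time.

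I expect the only genuine obstacle to be this parameter bookkeeping, and within it the delicate point highlighted above: one must ensure that the coloring of the precoloring conflict graph passed to Lemma~\ref{lem:from-PartRCExt-to-PartRC} uses $\Theta(m^{1/3})$ colors rather than asymptotically fewer, since otherwise the term $|\Dom(c_0)|/\ell$ in that lemma's degree bound would dominate and destroy the $O(m^{1/3})$ bound on $\Delta(G_S)$. This is harmless precisely because the conflict graph has $\Theta(m^{2/3})\gg m^{1/3}$ vertices, so any proper coloring of it can be refined to exactly $\Theta(m^{1/3})$ colors without affecting the earlier guarantees.
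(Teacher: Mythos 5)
Your proposal is correct and follows exactly the route the paper intends: the corollary is obtained by pipelining Lemmas~\ref{lem_transformation}, \ref{lem:from-TSAT-to-PartRTwoCExt}, \ref{lem:from-PartRTwoCExt-to-PartRCExt} (for $k\ge 3$) and \ref{lem:from-PartRCExt-to-PartRC}, and your parameter bookkeeping matches the lemmas' guarantees. The two points you flag — padding/refining the conflict-graph coloring to $\ell=\Theta(m^{1/3})$ colors so that the $|\Dom(c_0)|/\ell$ term in Lemma~\ref{lem:from-PartRCExt-to-PartRC} stays $O(m^{1/3})$, and applying that lemma's construction directly to the $k=2$ instance (consistent with the paper's roadmap, even though the lemma is nominally stated for $k\ge 3$) — are exactly the details the paper leaves implicit, and you handle them correctly.
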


Note that in Corollary~\ref{lem:from-TSAT-to-PartRTwoC} we have $|S|=|V|\Delta((V, S)) = O(m)$.
It follows that the Sparsification Lemma (Lemma~\ref{lem_sparsification}) and Corollary~\ref{lem:from-TSAT-to-PartRTwoC} imply the following result.

\begin{theorem}
	For any $k\ge 2$, \probPartRC cannot be solved in time $2^{o(n^{3/2}) + o(m) + o(s)}$ where $n$ is the number
	of vertices, $m$ is the number of edges, and $s$ is the number of requests, unless ETH fails.
\end{theorem}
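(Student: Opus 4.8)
The plan is to derive the claimed lower bound from ETH by pipelining the Sparsification Lemma with Corollary~\ref{lem:from-TSAT-to-PartRTwoC}, arguing by contradiction. Fix $k\ge 2$ and suppose there were an algorithm $\Aa$ solving \probPartRC in time $2^{o(n^{3/2})+o(m)+o(s)}$, and let $c>0$ be the constant from the statement of ETH. I would fix an arbitrary $\eps\in(0,c)$ and use $\Aa$ to build an algorithm for \probTSAT running in time $2^{\eps N+o(N)}$ on formulas with $N$ variables; since $\eps<c$, this beats $O^*(2^{cN})$ and contradicts ETH.

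Concretely, given a \probTSAT formula $\phi$ with $N$ variables, I would first invoke the Sparsification Lemma (Lemma~\ref{lem_sparsification}) with parameter $\eps$, obtaining in time $O^*(2^{\eps N})$ a disjunction $\phi\equiv\bigvee_{i=1}^{t}\psi_i$ with $t\le 2^{\eps N}$, each $\psi_i$ on the same $N$ variables and with $m_i\le c_\eps N$ clauses. Then I would feed each $\psi_i$ into Corollary~\ref{lem:from-TSAT-to-PartRTwoC} (which internally performs the Tovey normalization of Lemma~\ref{lem_transformation} and the whole reduction chain of Figure~\ref{fig:roadmap}), obtaining in polynomial time an equivalent instance $(G_i=(V_i,E_i),S_i)$ of \probPartRC with $|V_i|=O(m_i^{2/3})=O(N^{2/3})$, $|E_i|=O(m_i)=O(N)$ and $\Delta((V_i,S_i))=O(m_i^{1/3})=O(N^{1/3})$; by the handshake bound $2|S_i|\le |V_i|\,\Delta((V_i,S_i))$ this also gives $|S_i|=O(N)$. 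All hidden constants depend only on $\eps$ and $k$. Finally I would run $\Aa$ on each $(G_i,S_i)$ and answer YES iff at least one run does; correctness is immediate, since $\phi$ is satisfiable iff some $\psi_i$ is, iff some $(G_i,S_i)$ is a YES-instance of \probPartRC.

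For the running time, the point is that once $\eps$ and $k$ are fixed, $|V_i|^{3/2}$, $|E_i|$ and $|S_i|$ are all $\Theta(N)$; so, writing $\Aa$'s time bound as a single fixed function $2^{f(n)+g(m)+h(s)}$ with $f(x)=o(x^{3/2})$ and $g(x)=h(x)=o(x)$, each call to $\Aa$ costs $2^{f(O(N^{2/3}))+g(O(N))+h(O(N))}=2^{o(N)}$. Over the $t\le 2^{\eps N}$ instances, plus the $O^*(2^{\eps N})$ sparsification cost and the polynomial reduction cost, the whole procedure runs in $2^{\eps N}\cdot 2^{o(N)}\cdot N^{O(1)}=2^{\eps N+o(N)}$ time, which---since $\eps<c$---is $O^*(2^{cN})$, contradicting ETH.

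I do not expect a genuine combinatorial obstacle here: all the work has already been done in Corollary~\ref{lem:from-TSAT-to-PartRTwoC} and the lemmas feeding it. The only point needing care is the asymptotic bookkeeping --- the $o(\cdot)$ in $\Aa$'s running time is a fixed function chosen \emph{before} $\eps$, and one must check that composing it with the $\eps$-dependent (but, for fixed $\eps$, linear) blow-ups of $n$, $m$, $s$ still yields only a $2^{o(N)}$ factor, and that picking $\eps$ strictly below the ETH constant makes the final bound beat $2^{cN}$. I would also remark that the three-way split $o(n^{3/2})+o(m)+o(s)$ in the hypothesized running time is precisely what lets the very same reduction exclude the parameterizations by $n$, by $m$, and by $s$ simultaneously.
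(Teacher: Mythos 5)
Your proposal is correct and follows exactly the route the paper takes: the paper's proof of this theorem is precisely the observation that $|S|=O(m)$ in Corollary~\ref{lem:from-TSAT-to-PartRTwoC} combined with the Sparsification Lemma, and your write-up simply spells out the standard ETH bookkeeping that the paper leaves implicit. No gaps.
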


Pipelining Corollary~\ref{lem:from-TSAT-to-PartRTwoC} and Lemma~\ref{lem:from-PartRTwoC-to-RTwoC} (for $k=2$) or Lemma~\ref{lem:from-PartRC-to-RC} (for $k\ge 3$) gives the following corollary.

\begin{corollary} \label{lem:from-TSAT-to-RTwoC}
Fix $k\ge 2$.
	Given a \probTSAT formula $\varphi$ with $O(m)$ clauses
	one can construct an equivalent instance $G$ of \probRC
	with $O(m^{2/3})$ vertices and $O(m \log m)$ edges.
	The construction algorithm can run in $(i)$ expected polynomial time, or $(ii)$ deterministic $2^{O(m^{2/3})}$ time.
\end{corollary}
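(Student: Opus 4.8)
The proof is a short composition argument: Corollary~\ref{lem:from-TSAT-to-PartRTwoC} already does the heavy lifting, and it remains only to strip off the request set with one of the two elimination lemmas and to check that the parameters compose. First I would run the polynomial-time algorithm of Corollary~\ref{lem:from-TSAT-to-PartRTwoC} on $\varphi$ (which has $O(m)$ clauses), obtaining an equivalent instance $(H=(V,E),S)$ of \probPartRC with $|V|=O(m^{2/3})$, $|E|=O(m)$ and $\Delta(G_S)=O(m^{1/3})$, where $G_S=(V,S)$. I would also retain the proper vertex $O(1)$-coloring of $G_S$ produced along the way: although Corollary~\ref{lem:from-TSAT-to-PartRTwoC} only asserts that $G_S$ is $O(1)$-colorable, every lemma in the pipeline behind it (Lemmas~\ref{lem:from-TSAT-to-PartRTwoCExt}, \ref{lem:from-PartRTwoCExt-to-PartRCExt}, \ref{lem:from-PartRCExt-to-PartRC}) explicitly outputs such a coloring, so a proper $p$-coloring with $p=O(1)$ is available, and this is exactly the auxiliary input required by the next step.

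Second, I would split on the value of $k$. For $k=2$, apply Lemma~\ref{lem:from-PartRTwoC-to-RTwoC} to $(H,S)$ together with the $p$-coloring, obtaining an equivalent instance $G'$ of \probRTwoC; for fixed $k\ge 3$, apply Lemma~\ref{lem:from-PartRC-to-RC} instead, obtaining an equivalent instance $G'$ of \probRC. In both cases the construction runs in expected polynomial time, or in deterministic $2^{|V|}|V|^{O(1)}=2^{O(m^{2/3})}$ time since $|V|=O(m^{2/3})$, and equivalence with $\varphi$ follows by transitivity of the chain of equivalences.

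Third, I would verify the size bounds. Writing $D:=p^2\Delta_1(G_S)\log|V|$, we have $D=O(1)\cdot O(m^{1/3})\cdot O(\log m)=O(m^{1/3}\log m)$, and since $\log m=o(m^{1/3})$ this is $o(m^{2/3})$. For $k=2$, Lemma~\ref{lem:from-PartRTwoC-to-RTwoC} gives $|V(G')|=O(|V|+D)=O(m^{2/3})$ and $|E(G')|=O(|E|+(|V|+D)D)=O(m+m^{2/3}\cdot m^{1/3}\log m)=O(m\log m)$. For fixed $k\ge 3$, Lemma~\ref{lem:from-PartRC-to-RC} gives $|V(G')|=O(|V|+kD)=O(m^{2/3})$ and $|E(G')|=O(|E|+|V|\,D)=O(m+m^{2/3}\cdot m^{1/3}\log m)=O(m\log m)$. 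Either way $G'$ is the desired instance.

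I do not expect a genuine obstacle here: the statement is obtained by chaining two black boxes. The only two points that demand attention are confirming that the ``$O(1)$-colorable'' guarantee of Corollary~\ref{lem:from-TSAT-to-PartRTwoC} can be upgraded to an \emph{explicit} constant-size coloring (which Lemmas~\ref{lem:from-PartRTwoC-to-RTwoC} and~\ref{lem:from-PartRC-to-RC} consume verbatim), and checking that the $m^{1/3}\log m$ contributions are absorbed into $O(m^{2/3})$ for the vertex count while producing only $O(m\log m)$ edges.
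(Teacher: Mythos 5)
Your proposal is correct and takes exactly the paper's route: the paper obtains this corollary by pipelining Corollary~\ref{lem:from-TSAT-to-PartRTwoC} with Lemma~\ref{lem:from-PartRTwoC-to-RTwoC} (for $k=2$) or Lemma~\ref{lem:from-PartRC-to-RC} (for $k\ge 3$), which is precisely your argument, and your parameter arithmetic ($O(m^{2/3})$ vertices, $O(m\log m)$ edges, $2^{O(m^{2/3})}$ deterministic time) checks out. Your observation that the ``$O(1)$-colorable'' claim must be carried as an explicit coloring through the pipeline is also the right reading of what the paper implicitly does.
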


Again, the above and the Sparsification Lemma immediately imply the following hardness result.
(Note that we do not use the randomized reduction algorithm --- we state it just in case it is useful in some other applications.)

\begin{theorem}
\label{thm:RC-3/2}
	For any $k\ge 2$, \probRC cannot be solved in time $2^{o(n^{3/2}) + o(m/\log m)}$ where $n$ is the number
	of vertices and $m$ is the number of edges, unless ETH fails.
\end{theorem}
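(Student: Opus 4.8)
The plan is to chain the Sparsification Lemma (Lemma~\ref{lem_sparsification}) with Corollary~\ref{lem:from-TSAT-to-RTwoC}, in the standard way one converts a reduction into an ETH lower bound; the only delicate point will be matching the $o(\cdot)$ bounds against the constants hidden by sparsification. Suppose for contradiction that for some fixed $k\ge 2$ there is an algorithm $\mathcal{A}$ that decides \probRC on an $n$-vertex, $m$-edge graph in time $2^{f(n)+g(m)}\cdot n^{O(1)}$, where $f(n)=o(n^{3/2})$ and $g(m)=o(m/\log m)$. Fix an arbitrarily small $\eps>0$. Given a \probTSAT formula $\varphi$ on $N$ variables, first apply Lemma~\ref{lem_sparsification} to obtain in time $O^*(2^{\eps N})$ formulas $\psi_1,\dots,\psi_t$ with $t\le 2^{\eps N}$, each $\psi_i$ having at most $c_{\eps}N$ clauses, such that $\varphi$ is satisfiable iff some $\psi_i$ is.

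For each $i$, run the deterministic $2^{O((c_{\eps}N)^{2/3})}$-time variant of the reduction of Corollary~\ref{lem:from-TSAT-to-RTwoC} on $\psi_i$, producing an equivalent instance $G_i$ of \probRC with $n_i=O_{\eps}(N^{2/3})$ vertices and $m_i=O_{\eps}(N\log N)$ edges, and then run $\mathcal{A}$ on every $G_i$, answering YES iff some $G_i$ is accepted. For the running time, the two estimates that matter are $n_i^{3/2}=O_{\eps}(N)$ and $m_i/\log m_i=O_{\eps}(N\log N/\log(N\log N))=O_{\eps}(N)$. Write $n_i\le C_{\eps}N^{2/3}$; choosing $\delta=\eps\,C_{\eps}^{-3/2}$ and using that $f(n)\le\delta n^{3/2}$ holds for all large $n$, we get $f(n_i)\le\delta(C_{\eps}N^{2/3})^{3/2}=\eps N$ for all large $N$, and an identical argument using $g(m)=o(m/\log m)$ gives $g(m_i)\le\eps N$ for all large $N$. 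Hence running $\mathcal{A}$ on all instances costs $t\cdot 2^{f(n_i)+g(m_i)}\cdot n_i^{O(1)}\le 2^{\eps N}\cdot 2^{2\eps N}\cdot N^{O(1)}=O^*(2^{3\eps N})$, the sparsification costs $O^*(2^{\eps N})$, and all the reductions together cost $t\cdot 2^{O((c_{\eps}N)^{2/3})}=2^{(\eps+o(1))N}$, so the whole procedure runs in $O^*(2^{3\eps N})$ and is correct.

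This gives, for every $\eps>0$, an algorithm deciding \probTSAT in time $O^*(2^{3\eps N})$; choosing $\eps$ with $3\eps<c$, where $c$ is the constant from the Exponential Time Hypothesis, produces an algorithm running in $O^*(2^{cN})$, contradicting ETH and proving the stated $2^{o(n^{3/2})+o(m/\log m)}$ lower bound for \probRC for every $k\ge 2$. I do not expect a genuine obstacle once lemmas~\ref{lem_sparsification} and~\ref{lem:from-TSAT-to-RTwoC} are granted --- the argument is essentially bookkeeping. The step needing care is the running-time translation above: one must check that the extra $\log m$ factor in the edge count that Corollary~\ref{lem:from-TSAT-to-RTwoC} inherits from the biclique cover of Lemma~\ref{lem:bliclique_covering} is harmless (indeed $m_i/\log m_i=O_{\eps}(N)$), and one must be careful to absorb the sparsification constant $C_{\eps}$ into the choice of $\delta$ before invoking the $o(\cdot)$ bound on $f$, rather than afterwards.
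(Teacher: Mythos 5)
Your proposal is correct and follows exactly the paper's route: the paper derives Theorem~\ref{thm:RC-3/2} by combining the Sparsification Lemma with Corollary~\ref{lem:from-TSAT-to-RTwoC} (stating the implication as immediate), and your argument simply spells out the constant-tracking that the paper leaves implicit. The bookkeeping — absorbing $C_{\eps}$ into $\delta$ before invoking the $o(\cdot)$ bounds, and checking $m_i/\log m_i = O_{\eps}(N)$ — is done correctly.
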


%

\section{Algorithms for \probPartRC}
\label{sec:alg-subset}

In this section we study FPT algorithms for \probPartRC parameterized by $|S|$. 
We provide two such algorithms, based on different approaches: one for $k=2$ case, and one (slightly slower) for the general case.
Consider an instance $(G,S)$ of the \probPartRC problem.
Note that we can assume that $S\subseteq \barE$, since any constraint $\{u,v\}\in E$ is satisfied in every edge coloring. 
Moreover, we say that a pair $\{u,v\}$ is {\em feasible} when the distance between $u$ and $v$ is at most $k$.
The set of all feasible pairs is denoted by $F(G)$.
Clearly, when $S$ contains a request which is not feasible, then $(G,S)$ is a trivial NO-instance.
Hence, throughout this section we assume $S\subseteq \barE \cap F(G)$.

\subsection{The $k=2$ case}
\label{sec:in-ex}

For any $X\subseteq S$ let $\Pp_X$ be the set of all 2-edge paths between the pairs of vertices in $X$.
Denote $E(\Pp_X) = \bigcup_{P\in \Pp_X}E(P)$.
For two edges $e_1,e_2\in E(G)$ we say that $e_1$ and $e_2$ are {\em linked by $X$}, denoted as $e_1 \sim_X e_2$ when there are two paths $P_1,P_2\in \Pp_X$ (possibly $P_1=P_2$) such that $e_1\in E(P_1)$, $e_2\in E(P_2)$ and $E(P_1) \cap E(P_2) \ne \emptyset$. 
Let $\approx_X$ be the transitive closure of $\sim_X$.
Then $\approx_X$ is an equivalence relation.
Recall that $E(G) / \approx_X$ denotes the quotient set of the relation $\approx_X$.

The main observation of this section is the following theorem.

\begin{theorem}
\label{thm:num-2-cols}
The number of 2-colorings of $E(G)$ that satisfy all the pairs in $S$ is equal to
\begin{equation}
\label{eq:num-2-cols}
\sum_{X\subseteq S}(-1)^{|X|}2^{|E(G) / \approx_X|}.
\end{equation}
\end{theorem}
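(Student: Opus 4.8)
The plan is to compute the count by inclusion--exclusion over the request set $S$, with the ``bad'' event for a request being that it is left unsatisfied. For $s=\{u,v\}\in S$ let $A_s$ be the set of $2$-colorings $c\colon E(G)\to\{1,2\}$ under which $s$ is \emph{not} satisfied. Since $u$ and $v$ are non-adjacent, $c$ satisfies $s$ precisely when some $2$-edge path between $u$ and $v$ has its two edges coloured differently; equivalently, $c\in A_s$ iff every $2$-edge path joining $u$ and $v$ is monochromatic under $c$. The quantity we want is $\bigl|\bigcap_{s\in S}\overline{A_s}\bigr|$, which the complementary form of the inclusion--exclusion principle rewrites as $\sum_{X\subseteq S}(-1)^{|X|}\bigl|\bigcap_{s\in X}A_s\bigr|$ (with the term $X=\emptyset$ contributing all $2^{|E(G)|}$ colorings). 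Hence the theorem reduces to proving, for each fixed $X\subseteq S$, the identity $\bigl|\bigcap_{s\in X}A_s\bigr|=2^{|E(G)/\approx_X|}$.

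To prove this identity I would first restate membership in $\bigcap_{s\in X}A_s$ as a system of colour-equality constraints: by the description of the events $A_s$, a coloring $c$ lies in $\bigcap_{s\in X}A_s$ exactly when every path $P\in\Pp_X$ has both of its edges equal-coloured. The crux is then the claim that a $2$-coloring satisfies all of these constraints if and only if it is constant on every class of the equivalence relation $\approx_X$. One direction is routine: if $c$ is constant on the $\approx_X$-classes and $P\in\Pp_X$ has edge set $\{e_1,e_2\}$ (with $e_1\ne e_2$, as $P$ is a length-$2$ path between two distinct vertices), then choosing $P_1=P_2=P$ in the definition of $\sim_X$ shows $e_1\sim_X e_2$, hence $e_1\approx_X e_2$ and $c(e_1)=c(e_2)$. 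For the converse, assume every $P\in\Pp_X$ is monochromatic; since $\approx_X$ is the transitive closure of $\sim_X$, it suffices to show $c(e_1)=c(e_2)$ whenever $e_1\sim_X e_2$. Such a relation gives $P_1,P_2\in\Pp_X$ sharing a common edge $f$ with $e_1\in E(P_1)$ and $e_2\in E(P_2)$; monochromaticity of $P_1$ yields $c(e_1)=c(f)$ and of $P_2$ yields $c(e_2)=c(f)$, so $c(e_1)=c(e_2)$. This is exactly where one uses that every path in $\Pp_X$ has precisely two edges, so ``monochromatic path'' just means its two edges agree.

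Given the claim, the count is immediate: the classes of $\approx_X$ form a partition of $E(G)$, and the colorings in $\bigcap_{s\in X}A_s$ are exactly those obtained by assigning, independently for each class, one of the two colors to all edges of that class; there are $2^{|E(G)/\approx_X|}$ of them. Substituting back into the inclusion--exclusion expansion gives \eqref{eq:num-2-cols}. The only genuinely delicate point --- the hard part --- is the equivalence between the path-equality constraints and constancy on $\approx_X$-classes; once stated correctly its proof is short, and everything else is bookkeeping. One small thing to keep in mind is that an edge of $G$ lying on no path of $\Pp_X$ forms a singleton $\approx_X$-class, consistently with such an edge being unconstrained and hence freely $2$-colourable, so that the factor $2^{|E(G)/\approx_X|}$ correctly accounts for all edges of $G$ and not merely those of $E(\Pp_X)$.
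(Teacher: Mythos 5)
Your proposal is correct and follows essentially the same route as the paper: inclusion--exclusion over the requests with the ``bad'' event being an unsatisfied request, reduced to showing that the colorings leaving all of $X$ unsatisfied are exactly those constant on the $\approx_X$-classes, hence $2^{|E(G)/\approx_X|}$ of them. The only difference is notational (you take $A_s$ to be the bad event directly, whereas the paper defines the good events and applies the intersection form of inclusion--exclusion), and your explicit remarks about two-edge paths and singleton classes match the paper's implicit assumptions.
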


In the proof we make use of the well-known inclusion-exclusion principle. 
Below we state it in the intersection version (see, e.g.,~\cite{fptbook})

\begin{theorem}[Inclusion--exclusion principle, intersection version]
\label{thm:ie}
Let $A_1,\ldots,A_n\subseteq U$, where $U$ is a finite set. 
Denote $\bigcap_{i\in\emptyset}(U\setminus A_i)=U$. 
  Then
   \[\big|\bigcap_{i\in[n]}A_i\big|=\sum_{X\subseteq[n]}(-1)^{|X|}{\big|\bigcap_{i\in X}(U\setminus{A_i})\big|}.\]
\end{theorem}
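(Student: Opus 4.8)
The plan is to prove the identity by a double-counting (indicator-function) argument: rewrite each side as a sum over the elements $u\in U$ and check that every $u$ contributes the same amount to both sides. First I would fix $u\in U$ and introduce the set $Y(u)=\{i\in[n]:u\notin A_i\}$ of indices whose set misses $u$. The contribution of $u$ to the left-hand side $|\bigcap_{i\in[n]}A_i|$ is $1$ exactly when $Y(u)=\emptyset$ and $0$ otherwise. For the right-hand side, observe that $u\in\bigcap_{i\in X}(U\setminus A_i)$ if and only if $X\subseteq Y(u)$ (this also correctly handles $X=\emptyset$ via the stated convention $\bigcap_{i\in\emptyset}(U\setminus A_i)=U$). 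Exchanging the two finite summations, the total contribution of $u$ to the right-hand side is therefore $\sum_{X\subseteq Y(u)}(-1)^{|X|}$.

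The crux is then the elementary lemma that, for any finite set $Y$, the value $\sum_{X\subseteq Y}(-1)^{|X|}$ equals $1$ if $Y=\emptyset$ and $0$ otherwise. I would prove this by grouping subsets of $Y$ by cardinality and invoking the binomial theorem: $\sum_{X\subseteq Y}(-1)^{|X|}=\sum_{j=0}^{|Y|}\binom{|Y|}{j}(-1)^{j}=(1-1)^{|Y|}$, which is $1$ for $|Y|=0$ and $0$ for $|Y|\geq 1$. (Alternatively, when $Y\neq\emptyset$, one fixes $i_0\in Y$ and pairs each subset $X$ with $i_0\notin X$ against $X\cup\{i_0\}$; this is a fixed-point-free, sign-reversing involution on the subsets of $Y$, so the sum cancels to $0$, while for $Y=\emptyset$ the only subset is $\emptyset$ and the sum is $1$.) Applying this with $Y=Y(u)$ shows that $u$ contributes $1$ to the right-hand side precisely when $Y(u)=\emptyset$, i.e.\ precisely when $u\in\bigcap_{i\in[n]}A_i$ — matching its contribution to the left-hand side. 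Summing over all $u\in U$ then gives the claimed equality.

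I do not expect a genuine obstacle here; the only thing requiring care is the bookkeeping around the empty index set. The $X=\emptyset$ term on the right carries coefficient $(-1)^0=1$ and value $\bigl|\bigcap_{i\in\emptyset}(U\setminus A_i)\bigr|=|U|$, and in the degenerate case $n=0$ both sides equal $|U|$, consistent with the convention in the statement; the interchange of the summations over $u$ and over $X$ needs no justification beyond finiteness. A second route would be induction on $n$, writing $|\bigcap_{i\in[n]}A_i|=|\bigcap_{i\in[n-1]}A_i|-|\bigcap_{i\in[n-1]}A_i\cap(U\setminus A_n)|$ and applying the inductive hypothesis to the two smaller families; I would still prefer the double-counting argument, since it avoids the case analysis and makes transparent why the alternating signs arise.
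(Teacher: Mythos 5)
Your argument is correct: the pointwise contribution analysis via $Y(u)=\{i\in[n]:u\notin A_i\}$, the equivalence $u\in\bigcap_{i\in X}(U\setminus A_i)\iff X\subseteq Y(u)$, and the identity $\sum_{X\subseteq Y}(-1)^{|X|}=(1-1)^{|Y|}$ together give a complete and standard proof, with the empty-intersection convention handled properly. Note, however, that the paper itself offers no proof of this statement -- it is quoted as a well-known fact with a citation to a textbook -- so there is nothing in the paper to compare against; your write-up is simply the classical indicator-function proof one would find in that reference.
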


\begin{proof}[Proof of Theorem~\ref{thm:num-2-cols}]
Let us define, for every pair $\{u,v\}\in S$ (say, $u<v$), the set $A_{u,v}$ of 2-edge colorings of $G$ that satisfy $\{u,v\}$.
Note that the number of rainbow 2-colorings of $G$ that satisfy all the pairs in $S$ is equal to $|\bigcap_{\{u,v\}\in S}A_{u,v}|$. 
By Theorem~\ref{thm:ie} it suffices to show that, for any subset $X\subseteq S$, the number $\#_X$ of 2-colorings such that {\em none} of the pairs in $X$ is satisfied, equals $2^{|E(G) / \approx_X|}$.

 Fix any coloring $c$ that does not satisfy any pair from $X$.
 Then every path from $\Pp_X$ has both edges of the same color.
 Hence, for two edges $e_1,e_2\in E(G)$, if $e_1 \sim_X e_2$ then $e_1$ and $e_2$ are colored by $c$ with the same color.
 It follows that for any equivalence class $A$ of $\approx_X$, all edges of $A$ are have the same color in $c$.
 This proves that $\#_X \le 2^{|E(G) / \approx_X|}$.
 
 For every function $c_0 : (E(G) / \approx_X) \rightarrow \{1,2\}$ we can define the coloring $c : E(G) \rightarrow \{1,2\}$ by putting $c(e) = c_0([e]_{\approx_X})$ for every edge $e\in E(G)$. (Note that the edges that do not belong to any path in $\Pp_X$ form singleton equivalence classes.)
 Then, $c$ does not satisfy any pair from $X$, because if some pair $\{u,v\}$ is satisfied then there is a 2-color path $uxv$; but $ux \sim_X xv$, so $[ux]_{\approx_X} = [xv]_{\approx_X}$ and $c(ux)=c(xv)$, a contradiction.
 It follows that $\#_X \ge 2^{|E(G) / \approx_X|}$. This ends the proof.
\end{proof}

\begin{corollary}
\label{cor:alg-num-2-cols}
For any graph $G=(V,E)$ and a set of requests $S$ the number of 2-colorings of $E$ that satisfy all the pairs in $S$ can be computed in $O(2^{|S|}(|E|+|S|\cdot|V|))$ time and polynomial space.
In particular, \probPartRTwoC can be decided within the same time.
\end{corollary}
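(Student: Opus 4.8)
The plan is to evaluate the alternating sum from Theorem~\ref{thm:num-2-cols} directly: iterate over all $2^{|S|}$ subsets $X\subseteq S$ and, for each of them, compute $c_X := |E(G)/\approx_X|$ in time $O(|E|+|S|\cdot|V|)$, maintaining the running total $\sum_X(-1)^{|X|}2^{c_X}$.

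For a fixed $X$, I would first enumerate $\Pp_X$: for each request $\{u,v\}\in X$ list all $2$-edge paths $(u,x,v)$, i.e.\ the common neighbours $x$ of $u$ and $v$ in $G$; marking $N(u)$ in a size-$|V|$ boolean array and then scanning $N(v)$ finds them in $O(\deg_G u+\deg_G v)=O(|V|)$ per request, hence $O(|S|\cdot|V|)$ in total, and at most $|V|$ paths per request. I would then build an auxiliary graph $H_X$ with vertex set $E(G)$ containing, for every path $(u,x,v)\in\Pp_X$ with edges $e_1=ux$ and $e_2=xv$, the edge $\{e_1,e_2\}$; a single linear-time graph search over $H_X$ yields its number of connected components. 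The claim to verify is that this number equals $c_X$: two edges on a common path of $\Pp_X$ are $\sim_X$-related (take $P_1=P_2$), so every edge of $H_X$ joins two $\approx_X$-equivalent edges; conversely, if $e_1\sim_X e_2$ is witnessed by paths $P_1,P_2$ sharing an edge $f$, then $\{e_1,f\}$ is an edge of $H_X$ or $e_1=f$, and $\{f,e_2\}$ is an edge of $H_X$ or $f=e_2$, so $e_1$ and $e_2$ lie in the same component of $H_X$; since $\approx_X$ is the transitive closure of $\sim_X$, the components of $H_X$ are exactly the classes of $\approx_X$. In particular the requirement that the two paths witnessing $\sim_X$ share an edge is taken care of automatically, so it suffices to insert into $H_X$ only the edge pairs coming from single $2$-edge paths.

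Building and searching $H_X$ costs $O(|E|+|S|\cdot|V|)$ --- initialise $|E|$ isolated vertices, insert $O(|S|\cdot|V|)$ edges, run a linear-time search, and discard everything. I would keep the running total as an integer with $O(|E|+|S|)$ bits; after computing $c_X$, add $(-1)^{|X|}2^{c_X}$ to it, which, since $2^{c_X}$ is a one followed by $c_X\le|E|$ zeros, costs $O(|E|+|S|)$ and fits within the $O(|E|)$ term already present in the per-subset budget. Summing over the $2^{|S|}$ subsets, enumerated with $O(|S|)$ extra space (for instance in Gray-code order), gives total time $O(2^{|S|}(|E|+|S|\cdot|V|))$; at every moment only the current subset, the linear-size data for $H_X$, and the accumulator are stored, so the space is polynomial. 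To decide \probPartRTwoC one finally tests whether the computed count is nonzero.

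The only step carrying real content is the identification of the component partition of $H_X$ with $E(G)/\approx_X$, which is precisely what lets us avoid materialising the full relation $\sim_X$; the remaining ingredients --- enumerating $2$-paths by neighbour-list intersection, linear-time connectivity, keeping the $\Theta(|E|)$-bit summands within budget, and Gray-code iteration for polynomial space --- are routine, so I do not anticipate a genuine obstacle beyond careful accounting.
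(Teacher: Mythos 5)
Your proposal is correct and follows essentially the same route as the paper: iterate over all $2^{|S|}$ subsets $X$, enumerate the $2$-paths of $\Pp_X$ by marking common neighbours, count the classes of $\approx_X$ by merging the two edges of each such path (the paper uses union--find where you use connected components of an explicit auxiliary graph, a cosmetic difference), and accumulate the alternating sum of Theorem~\ref{thm:num-2-cols}. Your explicit verification that single-path adjacencies generate all of $\approx_X$, and your accounting for the big-integer additions, only spell out details the paper leaves implicit.
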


\begin{proof}
 Fix any set $X\subseteq S$.
 The relation $\approx_X$ can be computed in $O(|E|+|S|\cdot|V|)$ time as follows. 
 We use the standard union-find data structure for maintaining disjoint sets under union operations.
 We begin with $|E|$ singleton sets, one per edge.
 Next, we consider paths in $\Pp_X$, one by one. 
 For every such path $uxv\in \Pp_X$ we perform Union operation on the sets containing $ux$ and $xv$.
 The final sets correspond to the equivalence classes of $\approx_X$.
 Note that $|\Pp_X| \le |S|\cdot |V|$ and we can enumerate elements of $|\Pp_X|$ in $O(|S|\cdot |V|)$ time: for every pair $\{u,v\}\in S \cap X$ we mark the neighbors of $u$; then every marked neighbor of $v$ corresponds to a path in $\Pp_X$.
 The $O(|E|+|S|\cdot|V|)$ time bound follows from the well-known fact that $m$ union operations (without any find operations) implemented using both path compression and union-by-rank take only $O(m)$ time.
 
 Now, by Theorem~\ref{thm:num-2-cols} we can compute the number of 2-colorings of $E$ that satisfy all the pairs in $S$ using only $2^{|S|}$ additions and subtractions of powers of two.
 The exponent of each such power is found in time $O(|E|+|S|\cdot|V|)$, as explained above. 
 Each such addition/subtraction can be performed in $O(1)$ amortized time, we skip the easy details.
\end{proof}

Let us remark here that the algorithm from Corollary~\ref{cor:alg-num-2-cols} only decides whether the coloring exists, without finding it.
However, one can find the coloring at the cost of $O(|E|)$ multiplicative overhead in the running time.
Indeed, the algorithm from Corollary~\ref{cor:alg-num-2-cols} can be easily modified (without affecting the asymptotic running time bound) so that it finds the number of 2-colorings of $E$ that satisfy all the pairs in $S$, and that {\em extend} a given partial coloring $c_0 : E \rightarrow \{1,2\}$. 
Then we can find the requested coloring by staring from empty partial coloring and next extending it edge by edge, always choosing a color that gives a positive number of extensions.

\subsection{The general case}

In this section we use partial colorings. 
For convenience, a partial coloring is represented as a function $c:E\rightarrow [k] \cup \{\bot\}$, where the value $\bot$ corresponds to an uncolored edge.
By $\Dom(c)$ we denote the domain of the corresponding partial function, i.e., $\Dom(c) = c^{-1}([k])$.
The partial coloring which does not color anything, i.e., is constantly equal to $\bot$ is denoted by $c_\bot$.

For a graph $G=(V,E)$ consider a partial edge coloring $c:E\rightarrow [k]\cup\{\bot\}$.
A {\em guide function} is any function of the form $f : S \rightarrow {\Dom(c) \choose \le k}$, i.e., any function that assigns sets of at most $k$ colored edges to all requests in $S$.
A constant guide function equal to $\emptyset$ for every request in $S$ is denoted by $g_{S,\emptyset}$.
Pick any pair $\{u,v\} \in S$.
We say that a walk $W$ connecting $u$ and $v$ is $f$-guided if every color appears at most once on $W$, and $f(\{u,v\}) \subseteq E(W)$.
We say that a coloring $c$ is $(f,S)$-rainbow when for every pair $\{u,v\}\in S$ there is an $f$-guided walk between $u$ and $v$.
Note that $(G,S)$ is a YES-instance of \probPartRC iff there is an $(g_{S,\emptyset},S)$-rainbow coloring. Indeed, every rainbow walk contains a rainbow path.

The following lemma is going to be useful in our branching algorithm.

\begin{lemma}
\label{lem:alg-f-guided}
 Let $G=(V,E)$ be a graph, and let $S$ be a set of requests. 
 Let $c_0:E\rightarrow [k]$ be a partial edge coloring and let $f : S \rightarrow {\Dom(c_0) \choose \le k}$ be a guide function.
 Then, given a pair $\{u,v\}\in S$ in time $2^kn^{O(1)}$ one can find an $f$-guided $u$-$v$ walk of length at most $k$, if it exists.
\end{lemma}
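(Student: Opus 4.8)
The plan is a dynamic programming over subsets of the palette $[k]$. Write $F=f(\{u,v\})\subseteq\Dom(c_0)$ for the set of colored edges that the walk is required to contain and $C_F=\{c_0(e):e\in F\}$ for the set of their colors; recall $|F|\le k$. If two distinct edges of $F$ have the same color then every walk containing both has that color appearing twice, so no $f$-guided walk exists and we report it; hence assume $|C_F|=|F|$.

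The point that makes a $2^k n^{O(1)}$ bound (rather than $4^k n^{O(1)}$) possible is that we do not separately track which edges of $F$ have already been used: this information can be read off from the set of colors used so far, provided we make sure the colors of $C_F$ can only be ``consumed'' by the edges of $F$ themselves. Concretely, maintain a table $D[\cdot]\subseteq 2^{[k]}$ indexed by $V$, where $U\in D[w]$ will mean: there is a walk from $u$ to $w$ and an assignment of a distinct color of $[k]$ to each of its edges such that (a) a colored edge $e$ is assigned $c_0(e)$, (b) every traversed edge whose $c_0$-color lies in $C_F$ belongs to $F$, (c) every uncolored edge is assigned a color outside $C_F$, and (d) $U$ is exactly the set of colors assigned, so $|U|$ is the length of the walk. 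Initialize $D[u]=\{\emptyset\}$ and $D[w]=\emptyset$ for $w\ne u$, and then perform $k$ rounds of relaxation: for each vertex $w$, each $U\in D[w]$ and each edge $wx$, if $wx$ is colored with color $i$, $i\notin U$ and ($i\notin C_F$ or $wx\in F$), add $U\cup\{i\}$ to $D[x]$; if $wx$ is uncolored, add $U\cup\{i\}$ to $D[x]$ for every $i\in[k]\setminus(U\cup C_F)$. After $k$ rounds, answer ``yes'' iff some $U\in D[v]$ satisfies $C_F\subseteq U$.

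Correctness is checked in two directions. For soundness, any $U\in D[v]$ arises from a walk $W$ together with a distinct-color assignment to its edges; then the $c_0$-colors of the colored edges of $W$ are pairwise distinct, so each color appears at most once on $W$, and by (b)--(c) the colors of $C_F$ enter $U$ only via edges of $F$, so $C_F\subseteq U$ forces $F\subseteq E(W)$; thus $W$ is $f$-guided of length $|U|\le k$. For completeness, take an $f$-guided $u$--$v$ walk $W$ of length $\le k$; since $W$ is $f$-guided, its colored edges carry pairwise distinct colors, each edge of $F$ occurs exactly once, and no edge of $W$ outside $F$ has a color in $C_F$ (otherwise its color would coincide with that of the corresponding edge of $F$ and occur twice). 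Hence the colored edges of $W$ already use distinct colors, exactly $C_F$ on the $F$-edges and a set $N'\subseteq[k]\setminus C_F$ on the others; giving the remaining (uncolored) edges pairwise distinct colors from $[k]\setminus(C_F\cup N')$ — possible since $W$ has at most $k$ edges — produces a distinct-color assignment satisfying (a)--(c), which is exactly a run of the DP reaching $v$ with color set $\supseteq C_F$.

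For the running time, $D[\cdot]$ has at most $n\,2^k$ entries; each of the $k$ rounds, for every pair $(w,U)$ with $U\in D[w]$, scans the edges at $w$ in time $O(1)$ per colored edge and $O(k)$ per uncolored one, so the total is $O(2^k k^2 (n+m))=2^k n^{O(1)}$. To output the walk itself and not merely decide existence, store with each pair $(w,U)$ a backpointer to the pair it was created from together with the traversed edge, and trace back from any $(v,U)$ with $C_F\subseteq U$; this costs an extra $O(k)$. The only delicate point in the whole argument is the bookkeeping around the required edges $F$: naively remembering the visited subset of $F$ would multiply the state space by $2^{|F|}$ and give only a $4^k n^{O(1)}$ bound, whereas the restrictions (b)--(c) let the color set $U$ carry that information for free.
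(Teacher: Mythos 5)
Your proposal is correct and follows essentially the same route as the paper: a dynamic program over states (vertex, subset of $[k]$) of size $2^k n^{O(1)}$, with the key observation that one need not separately track which edges of $F=f(\{u,v\})$ have been traversed, because once edges carrying colors of $C_F$ other than those of $F$ are excluded (you do this via conditions in the transition, the paper by deleting such edges up front), the color set already encodes that information. The only cosmetic difference is that you fold the walk length into $|U|$ by also assigning colors to uncolored edges, whereas the paper keeps the length $\ell$ as a separate DP coordinate; the two formulations are equivalent.
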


\begin{proof}
 The algorithm is as follows.
 We can assume that $f(\{u,v\})$ does not contain two edges of the same color, for otherwise the requested walk does not exist.
 For every $e\in f(\{u,v\})$ we remove all the edges of color $c_0(e)$.
 Next, we put back edges of $f(\{u,v\})$.
 Then it suffices to find in the resulting graph $G'$ any $u$-$v$ path of length at most $k$ and with no repeated colors that visits all the colors of the edges in $f(\{u,v\})$.
 This is done using dynamic programming.
 For every vertex $x\in V$, subset $X\subseteq [k]$ and integer $\ell = 0,\ldots,k$ we find the boolean value $T[x,X,\ell]$ which is true iff there is a $u$-$x$ walk of length $\ell$ which does not repeat colors and visits all the colors from $X$, but not more. 
 We initialize $T[u,\emptyset,0]=\mathtt{true}$ and $T[x,\emptyset,0]=\mathtt{false}$ for every $x\ne u$.
 Next we iterate through the remaining triples $(x,X,\ell)$, in the nondecreasing order of $\ell$ and $X$'s cardinalities.
 The value of $T[x,X,\ell]$ is then computed using the formula 
 \[T[x,X,\ell] = \bigvee_{yx\in c_0^{-1}(X\cup\{\bot\}) \cap E(G')}T[y,X\setminus\{c_0(yx)\},\ell-1].\]
 The requested walk exists iff $T[v,X,\ell]=\mathtt{true}$ for any $\ell=0,\ldots,k$ and $X$ such that $c_0(f(\{u,v\}))\subseteq X$.
 The walk is retrieved using standard DP methods.	
\end{proof}

Now we are ready to describe our branching algorithm. 
Let $(G=(V,E),S)$ be the input instance. Our algorithm consists of a recursive procedure {\sc FindColoring} which gets three parameters: $S_0$ (a set of requests), $c_0:E\rightarrow [k]\cup\{\bot\}$ (a partial coloring), and a guide function $f : S \rightarrow {\Dom(c_0) \choose \le k}$. It is assumed that for every request $r\in S$, every pair of different edges $e_1,e_2\in f(r)$ is colored differently by $c_0$. The goal of the procedure {\sc FindColoring} is to find an $(f,S_0)$-rainbow coloring $c:E\rightarrow [k]$ which extends $c_0$. 
Thus the whole problem is solved by invoking $\text{\sc FindColoring}(S,c_\bot,g_{S,\emptyset})$.
A rough description of {\sc FindColoring} is as follows.
We pick any pair $\{u,v\}\in S_0$ and we find any $f$-guided $u$-$v$ walk $W$ of length at most $k$ using Lemma~\ref{lem:alg-f-guided}.
Let $c_1$ be obtained from $c_0$ by coloring the uncolored edges of $W$ to get a rainbow walk.
If $\text{\sc FindColoring}(S_0\setminus\{u,v\},c_1,f|_{S_0\setminus\{u,v\}})$ returns a coloring, we are done.
But if no such coloring exists then we know that we made a wrong decision: coloring some of the uncolored edges $e$ of $W$ into $c_1(e)$ (instead of some color $\alpha$) makes some other request $r\in S_0 \setminus \{\{u,v\}\}$ impossible to satisfy.
For every possible triple $(e,\alpha,r)$ we invoke {\sc FindColoring} with the same set of requests $S_0$, partial coloring $c_0$ extended by coloring $e$ with $\alpha$, and the guide function $f$ extended by putting $f(r) := f(r) \cup \{e\}$.

{
\begin{algorithm}[t]
\caption{\textsc{FindColoring}$(S_0,c_0,f)$}
\label{alg:findcoloring}
\footnotesize
  \If{$S_0 = \emptyset$}{
    \Return{$c_0$}
  }  
  \If{for some $r\in S_0$ there are edges $e_1,e_2\in f(r)$ with $c_0(e_1)=c_0(e_2)$}{
    \Return{$\mathtt{null}$}
  }  
  Pick any $\{u,v\}\in S_0$\;
  Find any $f$-guided $u$-$v$ walk $W$ of length at most $k$ using Lemma~\ref{lem:alg-f-guided}\;
  \If{$W$ does not exist}{
    \Return{$\mathtt{null}$}
  }  
  Let $c_1$ be obtained from $c_0$ by coloring the uncolored edges of $W$ to get a rainbow walk\;
  {\bf if} $\text{\sc FindColoring}(S_0\setminus\{u,v\},c_1,f|_{S_0\setminus\{u,v\}})\ne \mathtt{null}$ {\bf then return} the coloring found\;
  \For{$e\in E(W) \setminus \Dom(c_0)$}{
    \For{$\alpha\in [k]$}{
      \For{$r\in S_0 \setminus \{\{u,v\}\}$}{
        Let $c_{e,\alpha}$ be obtained from $c_0$ by coloring $e$ with $\alpha$\;
        Let $f_{e,r}$ be obtained from $f$ by putting $f(r) := f(r) \cup \{e\}$\;
        {\bf if} $\text{\sc FindColoring}(S_0,c_{e,\alpha},f_{e,r})\ne \mathtt{null}$ {\bf then return} the coloring found\;\label{line:2nd-recurse}
      }
    }
  }  
  \Return{$\mathtt{null}$}
\end{algorithm}
}

A precise description of procedure {\sc FindColoring} can be found in Pseudocode~\ref{alg:findcoloring}. 
The following lemma proves its correctness.

\begin{lemma}
\label{lem:FindColoring}
 Procedure {\sc FindColoring} invoked with parameters $(S_0,c_0,f)$ finds an $(f,S_0)$-rainbow coloring $c:E\rightarrow [k]$ which extends $c_0$, whenever it exists. 
\end{lemma}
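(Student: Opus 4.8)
The plan is to prove, by induction on the pair $(|S_0|,\ |E|-|\Dom(c_0)|)$ ordered lexicographically, two statements simultaneously: \emph{soundness}, that any value returned by {\sc FindColoring}$(S_0,c_0,f)$ is a total colouring $c\colon E\to[k]$ that extends $c_0$ and is $(f,S_0)$-rainbow; and \emph{completeness}, that if an $(f,S_0)$-rainbow colouring extending $c_0$ exists then {\sc FindColoring}$(S_0,c_0,f)$ does not return $\mathtt{null}$. The lemma is the conjunction of the two. First I would check that the measure strictly decreases at each recursive call: the call {\sc FindColoring}$(S_0\setminus\{u,v\},c_1,f|_{S_0\setminus\{u,v\}})$ drops the first coordinate, while each call {\sc FindColoring}$(S_0,c_{e,\alpha},f_{e,r})$ keeps the first coordinate but colours the previously uncoloured edge $e\in E(W)\setminus\Dom(c_0)$, dropping the second. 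One also checks that the standing assumption ``distinct edges of $f(r)$ receive distinct colours under the current partial colouring'' is inherited by the recursive calls one actually invokes; for the branch calls this will follow because the colour $\alpha$ picked for $e$ will be the colour of $e$ on a rainbow walk that also carries all of $f(r)$.

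Soundness is a routine unwinding of the recursion. In the base case $S_0=\emptyset$ any completion of $c_0$ to a total colouring is vacuously $(f,\emptyset)$-rainbow and extends $c_0$. A value returned by the first recursive call extends $c_1\supseteq c_0$, is $(f,S_0\setminus\{u,v\})$-rainbow by induction, and keeps the walk $W$ rainbow (it extends $c_1$, and $c_1$ was chosen so that $W$ is rainbow) with $f(\{u,v\})\subseteq E(W)$, hence is $(f,S_0)$-rainbow. A value returned by a branch call on a triple $(e,\alpha,r)$ extends $c_{e,\alpha}\supseteq c_0$ and is $(f_{e,r},S_0)$-rainbow by induction; since $f_{e,r}(r')\supseteq f(r')$ for all $r'$, an $f_{e,r}$-guided walk is also $f$-guided, so the colouring is $(f,S_0)$-rainbow.

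For completeness I would fix an $(f,S_0)$-rainbow colouring $c^{*}$ extending $c_0$; being total, all its $f$-guided walks are rainbow and hence of length at most $k$. If $S_0=\emptyset$ we are done. Otherwise the early ``conflict'' test cannot fire, since if some $f(r)$ had distinct $e_1,e_2$ with $c_0(e_1)=c_0(e_2)$ these would occur with equal colour on any $f$-guided walk for $r$ in $c^{*}$, contradicting $c^{*}$ being $(f,\{r\})$-rainbow. Picking $\{u,v\}\in S_0$, the colouring $c^{*}$ supplies an $f$-guided walk of length $\le k$ for $\{u,v\}$, so Lemma~\ref{lem:alg-f-guided} returns a walk $W$, which is completed to a rainbow walk giving $c_1$. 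If some $(f,S_0\setminus\{u,v\})$-rainbow colouring extends $c_1$, the first recursive call succeeds by induction and we are done. Otherwise, consider $\tilde c$ obtained from $c^{*}$ by recolouring every edge of $E(W)\setminus\Dom(c_0)$ as in $c_1$: then $\tilde c$ extends $c_1$, so $\tilde c$ is not $(f,S_0\setminus\{u,v\})$-rainbow, i.e.\ some $r\in S_0\setminus\{u,v\}$ has no $f$-guided walk in $\tilde c$. Take an $f$-guided walk $W_r^{*}$ for $r$ in $c^{*}$; as $c^{*}$ and $\tilde c$ differ only on $E(W)\setminus\Dom(c_0)$ and $W_r^{*}$ is rainbow in $c^{*}$ but not $f$-guided in $\tilde c$, some edge $e\in E(W_r^{*})\cap(E(W)\setminus\Dom(c_0))$ has $c^{*}(e)\ne c_1(e)$. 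Setting $\alpha=c^{*}(e)$, one verifies that $c^{*}$ extends $c_{e,\alpha}$ and is $(f_{e,r},S_0)$-rainbow (use $W_r^{*}$ for $r$, which contains $f(r)\cup\{e\}$; the old walks serve the other requests), and that the standing assumption holds for $(S_0,c_{e,\alpha},f_{e,r})$ because $e$ and all edges of $f(r)$ lie on the rainbow walk $W_r^{*}$. Since $(e,\alpha,r)$ is one of the triples the loop enumerates and the measure has strictly dropped, the induction hypothesis gives that this branch call returns non-$\mathtt{null}$, so {\sc FindColoring}$(S_0,c_0,f)$ does too.

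The hard part will be exactly this last case of completeness: once the ``greedy'' recursive call has failed, one must pin the failure on a single uncoloured edge $e$ of $W$ --- an edge lying on the witnessing walk of some request $r$ that $c^{*}$ satisfies but that is destroyed when $W$ is recoloured as in $c_1$ --- and argue that reverting $e$ to its value $c^{*}(e)$ both repairs $r$ and preserves the guide-function invariant, so that $c^{*}$ itself certifies feasibility of the corresponding enumerated branch call. Everything else (the termination measure, soundness, the base case, and the easy case where the greedy call already works) is bookkeeping.
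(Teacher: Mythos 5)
Your proposal is correct and follows essentially the same argument as the paper: induction on a measure combining $|S_0|$ and the number of uncolored edges, with the key step being that a failure of the greedy call must be witnessed by an uncolored edge $e$ of $W$ lying on the rainbow walk of some other request $r$ in the hypothetical solution, so that the enumerated branch $(e,c^*(e),r)$ remains feasible. The paper phrases this step contrapositively (if all branch calls fail, then no realizing walk uses any uncolored edge of $W$, so recoloring $W$ as in $c_1$ preserves the solution and the greedy call succeeds), but the content is identical.
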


\begin{proof}
The proof is by induction on the sum of $|S_0|$ and the number of uncolored edges.
It is clear that if $|S_0|=0$ or all the edges are colored then the algorithm behaves correctly.
In the induction step, the only non-trivial thing to check is whether any of the calls in lines 10 or 16 returns a coloring, provided that there is a solution, i.e., an $(f,S_0)$-rainbow coloring $c:E\rightarrow [k]$ which extends $c_0$.
Assume that no coloring is returned in Line 16.
Then for every edge $e\in E(W) \setminus \Dom(c_0)$, and request $r\in S_0 \setminus \{\{u,v\}\}$ coloring $c$ is not a $(f_{e,r},S_0)$-rainbow coloring, for otherwise the call $\text{\sc FindColoring}(S_0,c_{e,c(e)},f_{e,r})$ returns a coloring.
If follows that for every edge $e\in E(W) \setminus \Dom(c_0)$ and request $r\in S_0 \setminus \{\{u,v\}\}$ the walk that realizes the request $r$ in the coloring $c$ does not contain $e$.
Hence, the following coloring
\[
c'(e)=\begin{cases}
    c(e) & \text{if $e\not\in E(W)$,}\\
    c_1(e) & \text{if $e\in E(W)$.}\\
   \end{cases}
\]
is another $(f,S_0)$-rainbow coloring, and it extends $c_1$. 
It follows that the call in Line 10 returns a coloring, as required.
\end{proof}

\begin{theorem}
\label{thm:partRC-is-FPT}
 For every integer $k$, there is an FPT algorithm for \probPartRC parameterized by $|S|$.
 The algorithm runs in time $(k^2|S|)^{k|S|}2^kn^O(1)$, in particular in $|S|^{O(|S|)}n^{O(1)}$ time for every fixed $k$.
\end{theorem}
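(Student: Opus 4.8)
The algorithm is the single call $\textsc{FindColoring}(S,c_\bot,g_{S,\emptyset})$ of Pseudocode~\ref{alg:findcoloring}, so the correctness part is already done: Lemma~\ref{lem:FindColoring} guarantees that this call returns a $(g_{S,\emptyset},S)$-rainbow coloring $c:E\rightarrow[k]$ extending $c_\bot$ whenever one exists, and as observed above $(G,S)$ is a YES-instance of \probPartRC exactly when such a coloring exists (a rainbow walk always contains a rainbow path). Hence the whole proof reduces to bounding the running time, which I would do by bounding separately the size of the recursion tree and the work spent at a single node.

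For the depth of the recursion I plan to use the strictly decreasing integer potential $\mu(S_0,c_0,f)=(k+1)|S_0|-\sum_{r\in S_0}|f(r)|$. First I would note that whenever a call actually recurses we have $|f(r)|\le k$ for every $r\in S_0$: otherwise, by pigeonhole over the $k$ colors two edges of some $f(r)$ are equally colored by $c_0$, and the call returns in line~4 without recursing. Consequently $\mu\ge|S_0|\ge 0$ at every internal node, whereas at the root $\mu=(k+1)|S|$. The recursive call of line~10 removes one request from $S_0$ and leaves the remaining guide values untouched, so $\mu$ drops by $(k+1)-|f(\{u,v\})|\ge 1$; each recursive call made inside the nested loops keeps $S_0$ fixed while some $f(r)$ gains the edge $e$, and since $e\notin\Dom(c_0)\supseteq f(r)$ this genuinely increases $\sum_{r\in S_0}|f(r)|$ by one, so $\mu$ drops by exactly $1$. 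Thus $\mu$ strictly decreases along every edge of the recursion tree; in particular the procedure terminates and the recursion depth is $O(k|S|)$.

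Next I would bound the branching. A single invocation makes at most one recursive call in line~10, together with one call per triple $(e,\alpha,r)$ with $e\in E(W)\setminus\Dom(c_0)$, $\alpha\in[k]$ and $r\in S_0\setminus\{\{u,v\}\}$; since $W$ has at most $k$ edges, this is at most $k\cdot k\cdot(|S|-1)+1\le k^2|S|$ children. Together with the depth bound, the recursion tree has at most $(k^2|S|)^{O(k|S|)}$ nodes. At each node the dominant cost is the one application of Lemma~\ref{lem:alg-f-guided} used to find the $f$-guided walk $W$, running in $2^kn^{O(1)}$ time; all the remaining work --- the consistency test of line~4, extending $c_0$ on the at most $k$ edges of $W$ to a rainbow walk (possible since the already coloured edges of $W$ carry pairwise distinct colors and there are $k$ colors in total, hence enough unused ones), and preparing the at most $k^2|S|$ child instances --- is polynomial. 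Multiplying the node count by the per-node cost yields total running time $(k^2|S|)^{O(k|S|)}\,2^kn^{O(1)}$, which is the claimed $(k^2|S|)^{k|S|}2^kn^{O(1)}$ up to the constant hidden in the exponent, and in particular $|S|^{O(|S|)}n^{O(1)}$ for every fixed $k$.

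The only genuinely delicate step is the potential argument applied to the branching in lines~11--16, which does not shrink $S_0$; this is where I expect a careless proof to fail. The resolution is the observation made above: that branch always colors a previously uncolored edge $e$ and inserts it into some $f(r)$ in which it was not yet present (because $f(r)\subseteq\Dom(c_0)$ while $e\notin\Dom(c_0)$), and since each set $f(r)$ is capped at $k$ edges --- any larger and line~4 fires --- each request can be the target of at most $k$ such steps over its whole lifetime; together with the at most $|S|$ line~10 steps this is precisely what $\mu$ accounts for.
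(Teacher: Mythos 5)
Your proposal is correct and follows essentially the same route as the paper: correctness via Lemma~\ref{lem:FindColoring}, a recursion depth of $O(k|S|)$ because each recursive step either removes a request or enlarges some $f(r)$ (which is capped at $k$ by the test in line~3), a branching factor of at most $k^2|S|$, and a per-node cost of $2^kn^{O(1)}$ from Lemma~\ref{lem:alg-f-guided}. Your explicit potential $\mu(S_0,c_0,f)=(k+1)|S_0|-\sum_{r\in S_0}|f(r)|$ is just a cleaner formalization of the paper's depth argument, and your remark that the constant in the exponent differs from the literal $(k^2|S|)^{k|S|}$ is accurate but harmless for the stated $|S|^{O(|S|)}n^{O(1)}$ bound.
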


\begin{proof}
By Lemma~\ref{lem:FindColoring} \probPartRC is solved by invoking $\text{\sc FindColoring}(S,c_\bot,g_{S,\emptyset})$.
Note that whenever we go deeper in the recursion either some request of $S_0$ gets satisfied, or $|f(r)|$ increases for some $r\in S_0$.
When $|f(r)|$ increases to $k+1$, the corresponding recursive call returns $\mathtt{null}$ immediately (because the condition in Line 3 holds).
It follows that the depth of the recursion is at most $|S|k$. 
Since in every call of \probPartRC the algorithm uses time $2^kn^{O(1)}$ (by Lemma~\ref{lem:FindColoring}) and branches into at most $1+k^2(|S|-1)\le k^2|S|$ recursive calls, the total time is $(k^2|S|)^{k|S|}2^kn^O(1)$, as required.
\end{proof}

\section{Algorithms for \probMaxRC}
\label{sec:alg:max}

We start from observing a simple approximation algorithm for \probMaxRC. 
Actually, it works also for a more general problem, which we call \probMaxPartRC.
In this problem we are given an input as in \probPartRC, i.e., graph $G$ and a set of anti-edges $S$, and an additional parameter $q\in\mathbb{N}$.
The goal is to find a $k$-coloring of $E(G)$ that satisfies at least $q$ pairs of $S$. 
Thus, \probMaxPartRC is a generalization of \probMaxRC studied by Ananth \emph{et al.}~\cite{Ananth:fsttcs11}, where $S = \barE$.
Recall from the previous section that $F(G)$ denotes the set of feasible pairs of vertices in $G$, i.e., pairs at distance at most $k$.

\begin{proposition}
\label{prop:approx}
 If all the pairs in $S$ are feasible, then for every $k \ge 2$ \probMaxPartRC admits a deterministic polynomial time algorithm which finds a coloring that satisfies at least $\frac{k!}{k^k} |S|$ pairs from $S$.
 In particular, this is a $k!/k^k$-approximation algorithm.
\end{proposition}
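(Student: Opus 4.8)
The plan is to analyze a uniformly random $k$-coloring and then derandomize by the method of conditional expectations. First I would fix, for every pair $\{u,v\}\in S$, a \emph{witness path} $P_{uv}$ of length $\ell_{uv}\le k$; such a path exists because $\{u,v\}$ is feasible, and all of them can be computed by one BFS per pair in polynomial time. Then I would consider a coloring $c\colon E(G)\to[k]$ obtained by coloring each edge independently and uniformly at random. For the fixed path $P_{uv}$ of length $\ell=\ell_{uv}$, the probability that $c$ makes $P_{uv}$ rainbow equals exactly $k^{\underline{\ell}}/k^{\ell}$, since out of $k^{\ell}$ equally likely color assignments to its edges, exactly $k^{\underline{\ell}}$ use pairwise distinct colors. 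The elementary fact I would use next is that $\ell\mapsto k^{\underline{\ell}}/k^{\ell}$ is non-increasing for $\ell\in\{1,\dots,k\}$, because the ratio of consecutive terms is $(k-\ell)/k\le 1$; hence for every $\{u,v\}\in S$ the probability that $P_{uv}$ is rainbow is at least $k^{\underline{k}}/k^{k}=k!/k^{k}$. Letting $X$ be the number of pairs $\{u,v\}\in S$ whose witness path $P_{uv}$ is rainbow under $c$, linearity of expectation gives $\mathbb{E}[X]\ge\frac{k!}{k^{k}}|S|$, and a coloring satisfies a pair $\{u,v\}$ whenever its witness path $P_{uv}$ is rainbow, so the number of satisfied pairs is at least $X$.

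Second, I would derandomize. Only the edges in $\bigcup_{\{u,v\}\in S}E(P_{uv})$ influence $X$, so I would fix an arbitrary order $e_1,\dots,e_t$ of these edges (note $t\le k|S|$) and determine $c$ greedily: having committed to $c(e_1),\dots,c(e_{i-1})$, pick $c(e_i)\in[k]$ maximizing the conditional expectation of $X$ given the colors fixed so far, the remaining edges still being uniform and independent. The point that needs checking is that this conditional expectation is computable in polynomial time; indeed it equals $\sum_{\{u,v\}\in S}p_{uv}$, where $p_{uv}=0$ if two already-colored edges of $P_{uv}$ have the same color, and otherwise $p_{uv}=(k-a)^{\underline{\ell_{uv}-a}}/k^{\ell_{uv}-a}$ with $a$ the number of already-colored edges of $P_{uv}$. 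Since the average over the $k$ choices for $c(e_i)$ of the resulting conditional expectation equals the previous one, the best choice does not decrease it; iterating from $\mathbb{E}[X]\ge\frac{k!}{k^{k}}|S|$ down to a fully determined coloring produces a coloring with $X\ge\frac{k!}{k^{k}}|S|$. Coloring all other edges arbitrarily yields in polynomial time a $k$-coloring of $E(G)$ satisfying at least $\frac{k!}{k^{k}}|S|$ pairs of $S$. Since no coloring can satisfy more than $|S|$ pairs of $S$, this is also a $k!/k^{k}$-approximation for \probMaxPartRC, and in particular for \probMaxRC (the special case $S=\barE$).

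I do not expect a real obstacle here: the argument is the standard first-moment/conditional-expectation scheme. The only two spots that need genuine (but easy) care are the monotonicity of $\ell\mapsto k^{\underline{\ell}}/k^{\ell}$ on $\{1,\dots,k\}$, which is what makes the longest admissible path $\ell=k$ the worst case and produces the exact constant $k!/k^{k}$, and writing down the closed-form expression for the conditional expectation so that each greedy step runs in polynomial time. If one is willing to settle for a randomized guarantee, the derandomization step can be skipped entirely, keeping only the expectation bound together with a standard amplification argument.
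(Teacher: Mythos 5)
Your proposal is correct and follows essentially the same route as the paper: a uniformly random $k$-coloring, the bound $k^{\underline{\ell}}/k^{\ell}\ge k!/k^k$ for a fixed witness path of length $\ell\le k$, linearity of expectation, and derandomization by conditional expectations applied to the count of rainbow \emph{witness paths} rather than of satisfied pairs (exactly the workaround the paper also notes). Your added details --- the monotonicity of $\ell\mapsto k^{\underline{\ell}}/k^{\ell}$ and the closed form for the conditional expectation --- are correct and merely make explicit what the paper leaves implicit.
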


\begin{proof}
 Consider an algorithm which returns a random $k$-edge-coloring $c$. 
 Fix any anti-edge $\{u,v\}\in S$. 
 Since $\{u,v\}$ is feasible we can pick a path $P$ of length at most $k$ between $u$ and $v$.
 Let $\ell = |E(P)|$.
 The probability that $P$ is rainbow equals $\frac{k^{\underline{\ell}}}{k^\ell} \ge \frac{k!}{k^k}$.
 Hence, the probability that $uv$ is satisfied is at least $\frac{k!}{k^k}$.
 By the linearity of expectation, it follows that the expected number of satisfied anti-edges from $S$ is at least $\frac{k!}{k^k} |S|$.
 
 Using the standard method of conditional expectation (see e.g.,~\cite{Vazirani-book}) we can derandomize the algorithm above so that it also runs in polynomial time and always returns a coloring which satisfies at least $\frac{k!}{k^k} |S|$ anti-edges from $S$. (Note: it might be hard to compute exact conditional expected value of the number of satisfied anti-edges. Instead, it is sufficient to choose one path of length at most $k$ between every pair in $S$ and compute exact expected value of the chosen paths which became rainbow.)
 Since in the optimum solution at most $|S|$ anti-edges from $S$ are satisfied, the claim follows.
\end{proof}

Ananth \emph{et al.}~\cite{Ananth:fsttcs11} showed that \probMaxRTwoC has a kernel of $4q$ vertices, thus proving that the problem is FPT.
They do not state the running time of the resulting FPT algorithm, but since the best known rainbow 2-coloring algorithm runs in $2^{|E(G)|}n^{O(1)}$ time, we can conclude that it implies an FPT algorithm running in time $2^{8q^2}n^{O(1)}$.
We can get a different, faster, and more general (i.e., solving \probMaxRC for any $k\ge 3$) FPT algorithm by combining Theorem~\ref{thm:partRC-is-FPT} and Corollary~\ref{cor:alg-num-2-cols}.
For $k=2$ the algorithm gets even faster if we replace Theorem~\ref{thm:partRC-is-FPT} by Proposition~\ref{prop:approx}. The claim follows.

\begin{theorem}
 For any fixed $k\ge 2$, \probMaxRC parameterized by $q$ is in FPT and can be solved in $2^{O(q\log q)}n^{O(1)}$ deterministic time and polynomial space.
 For the special case of $k=2$ it can be done in time $8^q n^{O(1)}$.
\end{theorem}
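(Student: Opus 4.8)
The plan is to turn the problem into a bounded (in $q$) number of instances of \probPartRC, solved for $k\ge 3$ by Theorem~\ref{thm:partRC-is-FPT} and for $k=2$ by Corollary~\ref{cor:alg-num-2-cols}, while using Proposition~\ref{prop:approx} to keep the number of produced instances under control.

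First I would compute, in polynomial time, the set $S := \barE \cap F(G)$ of feasible anti-edges, i.e., the only anti-edges that any coloring can possibly satisfy. If $|S| < q$, the instance is a trivial NO-instance. If on the other hand $|S| \ge \tfrac{k^k}{k!}\, q$, then Proposition~\ref{prop:approx} applied to $(G,S,q)$ (whose pairs are all feasible by construction) produces, in polynomial time, a $k$-coloring satisfying at least $\tfrac{k!}{k^k}|S| \ge q$ anti-edges of $S$, so the answer is YES. Hence it remains to handle the case $q \le |S| < \tfrac{k^k}{k!}\, q$, where, since $k$ is a fixed constant, we have $|S| = O(q)$.

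Next I would use the following equivalence: $G$ admits a $k$-coloring satisfying at least $q$ anti-edges if and only if there is a set $S'\subseteq S$ with $|S'| = q$ such that $(G,S')$ is a YES-instance of \probPartRC. For the forward direction, restrict to any $q$ of the satisfied feasible anti-edges; for the backward direction, the coloring witnessing $(G,S')$ already satisfies all of $S'$. Consequently it suffices to enumerate every $q$-element subset $S'$ of $S$ --- there are $\binom{|S|}{q} \le \binom{O(q)}{q} = 2^{O(q)}$ of them --- and test each with the appropriate \probPartRC solver. For $k\ge 3$ each test costs $q^{O(q)}n^{O(1)} = 2^{O(q\log q)}n^{O(1)}$ by Theorem~\ref{thm:partRC-is-FPT}, so the overall running time is $2^{O(q)}\cdot 2^{O(q\log q)}n^{O(1)} = 2^{O(q\log q)}n^{O(1)}$. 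For $k=2$ we have $\tfrac{k^k}{k!} = 2$, hence $|S| < 2q$ and there are at most $\binom{2q}{q}\le 4^q$ subsets to try; each test then costs only $2^q n^{O(1)}$ by Corollary~\ref{cor:alg-num-2-cols}, for a total of $4^q\cdot 2^q n^{O(1)} = 8^q n^{O(1)}$. In all cases the subset enumeration is streaming and each subroutine runs in polynomial space, so the whole algorithm does.

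There is no deep obstacle beyond assembling these already-established pieces; the delicate (though routine) part is making the reduction to size-$q$ request sets exact --- restricting to $F(G)$ and checking that ``satisfied anti-edge'' and ``satisfied request'' coincide --- and tracking the constants $k^k/k!$ so that the $k=2$ bound comes out as $8^q$ rather than something larger. The $2^{O(q\log q)}$ factor for $k\ge 3$ is inherited unavoidably from the branching algorithm of Theorem~\ref{thm:partRC-is-FPT}.
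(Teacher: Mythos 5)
Your proposal is correct and follows essentially the same route as the paper: use Proposition~\ref{prop:approx} to answer YES outright unless $|\barE\cap F(G)|=O(q)$ (with $<2q$ for $k=2$), then enumerate all $q$-element request sets and invoke Theorem~\ref{thm:partRC-is-FPT} (resp.\ Corollary~\ref{cor:alg-num-2-cols} for $k=2$). Your handling of the threshold constants $k^k/k!$ versus $k!/k^k$ is in fact stated more carefully than in the paper's own write-up, which inverts the fraction in two places.
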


\begin{proof}
 Our algorithm first verifies whether $q\le \frac{k^k}{k!}|\barE \cap F(G)|$ and if that is the case it answers YES, which is justified by Proposition~\ref{prop:approx} (with $S=\barE \cap F(G)$).
 Otherwise $|\barE \cap F(G)| < \frac{k!}{k^k}q = O(q)$.
 Then the algorithm just applies Theorem~\ref{thm:partRC-is-FPT} for every subset $S \in {\barE \cap F(G) \choose q}$.
 Since there are $O(2^{|\barE \cap F(G)|})=2^{O(q)}$ such subsets, and each of them is processed in $2^{O(q\log q)} n^{O(1)}$ time by Theorem~\ref{thm:partRC-is-FPT}, the claim follows.
 For the special case of $k=2$, there are $O(2^{|\barE \cap F(G)|})=O(2^{2q})$ such subsets, and each of them is processed in $2^q n^{O(1)}$ time by Corollary~\ref{cor:alg-num-2-cols}, which gives the total time of $8^q n^{O(1)}$.
\end{proof}

%

The linear kernel of Ananth \emph{et al.}~\cite{Ananth:fsttcs11} for $k=2$ raises a natural question whether there is a polynomial kernel for larger $k$.
We answer this question in the affirmative, by using a different approach (namely, Proposition~\ref{prop:approx} again).

%
\begin{theorem}
For any $k \ge 2$, \probMaxRC has a kernel of size $O(q)$ when parameterized by $q$.
\end{theorem}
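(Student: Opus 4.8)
The plan is to reuse the approximation of Proposition~\ref{prop:approx}, exactly as in the FPT proof above, to split into two cases, and then to argue that in the remaining case the instance is, after pruning, already of size $O(q)$.

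First I would run the derandomized algorithm of Proposition~\ref{prop:approx} with $S=\barE\cap F(G)$. If it returns a coloring satisfying at least $q$ pairs, then $(G,q)$ is a YES-instance, so the algorithm outputs a fixed trivial YES-instance of constant size. Otherwise, since that algorithm always satisfies at least $\frac{k!}{k^k}|S|$ pairs of $S$, we get $|\barE\cap F(G)|<\frac{k^k}{k!}q$, which is $O(q)$ since $k$ is fixed; also, if $|\barE\cap F(G)|<q$ we output a trivial NO-instance. In particular the set $T$ of vertices incident to a feasible anti-edge has $|T|=O(q)$.

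Next I would restrict $G$ to the part relevant to the feasible anti-edges. Delete every connected component of $G$ containing no feasible anti-edge (it contributes nothing to any coloring), and inside the rest keep only the set $R$ of vertices lying on some $u$–$v$ path of length at most $k$ for some feasible anti-edge $\{u,v\}$. Since passing to $G[R]$ cannot decrease any distance and retains, for each feasible anti-edge, all of its witnessing paths of length at most $k$, the feasible anti-edges and the set of them satisfied by any given coloring are the same in $G$ and in $G[R]$, so $(G,q)$ and $(G[R],q)$ are equivalent (the direction $(G[R],q)\Rightarrow(G,q)$ being the trivial one: extend the coloring arbitrarily and reuse the witnessing paths). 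The genuine difficulty is that $R$ may still be arbitrarily large: a single feasible anti-edge can be the tip of an unbounded bundle of internally vertex-disjoint short paths (two vertices joined through a large independent set when $k=2$). I would therefore add a pruning step that, for every feasible anti-edge $\{u,v\}$ and every admissible length $\ell\in\{\dist_G(u,v),\dots,k\}$, retains only a number of $u$–$v$ paths of length $\ell$ bounded by a function of $k$ alone, chosen by a sunflower / representative-set argument so that the surviving paths still ``capture'' all relevant colorings: whenever a $k$-coloring satisfies a feasible anti-edge, some recoloring supported on the surviving paths (and equal to the original coloring elsewhere) satisfies the same anti-edge via a surviving path. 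Letting $W$ be $T$ together with the vertices of all surviving paths, we get $|W|=O(q)$, and the kernel is $G'=G[W]$ with the same parameter $q$.

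Finally I would check the equivalence of $(G',q)$ with $(G,q)$. One direction is free because $G'$ is an induced subgraph: a coloring of $E(G')$ satisfying $q$ anti-edges of $G'$ extends to a coloring of $E(G)$ (color deleted edges arbitrarily) satisfying those same $q$ pairs, which are anti-edges of $G$. For the converse, from a coloring of $E(G)$ satisfying $q$ feasible anti-edges the capturing property hands us a coloring of $E(G')$ satisfying the corresponding $q$ pairs. I expect the capturing/pruning lemma to be the main obstacle: unlike in the FPT argument, the kernel must preserve the existence of a good coloring of the whole edge set, not merely the behavior of one fixed coloring, so one has to argue quantitatively how many parallel short paths per anti-edge must survive and why discarding the others cannot turn a YES-instance into a NO-instance.
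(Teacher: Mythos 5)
Your first step (running the derandomized algorithm of Proposition~\ref{prop:approx} to either answer YES or conclude $|\barE\cap F(G)|=O(q)$, hence $|V_1|=O(q)$ for the set $V_1$ of vertices incident to a feasible anti-edge) matches the paper. But the rest of your argument has a genuine gap exactly where you place it yourself: the ``capturing/pruning lemma'' that would let you keep only boundedly many short witnessing paths per feasible anti-edge is never proved, and it is not clear it can be proved in the form you state. The difficulty is not merely quantitative. A local recoloring that reroutes one anti-edge onto a surviving path can destroy the rainbow paths of other anti-edges sharing edges with it, so you cannot treat the $q$ satisfied pairs independently; a representative-set statement strong enough to handle all $k$-colorings simultaneously is precisely the hard part, and your proof stops at announcing it.

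The paper sidesteps this entirely with a structural observation you miss: if a vertex $x$ is \emph{not} incident to any feasible anti-edge, then $x$ is adjacent to \emph{every} other vertex of its connected component. (Take a closest non-neighbor $y$ of $x$ in the component; minimality forces $\dist(x,y)=2$, so $\{x,y\}$ would be a feasible anti-edge, contradiction.) Consequently the vertices outside $V_1$ are universal in their components, so there is no ``unbounded bundle of internally vertex-disjoint short paths'' to prune; instead, if a component $H$ contains at least as many such universal vertices as it has feasible anti-edges, one can satisfy \emph{all} feasible anti-edges of $H$ by vertex-disjoint $2$-paths through distinct universal vertices, delete $H$, and decrease $q$ accordingly. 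After this reduction each remaining component has fewer universal vertices than feasible anti-edges, so $|V_2|<|\barE\cap F(G)|$ and $|V|\le 3|\barE\cap F(G)|=O(q)$. No path-pruning or representative-set machinery is needed. To repair your proof you would either have to supply this structural claim (at which point your $G[R]$ and pruning steps become unnecessary) or actually prove the simultaneous-capturing lemma, which you have not done.
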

\begin{proof}
If $q\le k!/k^k |\barE \cap F(G)|$, the algorithm can answer YES by Proposition~\ref{prop:approx}. 
Hence, in what follows we assume $|\barE \cap F(G)| < qk^k / k!$. Define

\[V_1 = \{u\ :\ \text{there is a vertex $v$ such that $\{u,v\} \in \barE \cap F(G)$}\}.\]

Let $V_2 = V \setminus V_1$.
Now consider any connected component $H$ of $G$.
Denote $H_1 = V_1 \cap V(H)$ and $H_2 = V_2 \cap V(H)$.
We claim that every pair of different vertices $x \in H_2$ and $y \in V(H)$ is adjacent.
Assume for the contradiction, that there is a pair $x \in H_2$ and $y \in V(H)$ such that $xy\not\in E(G)$ and among such pairs $\dist_G(x,y)$ is minimal.
Denote $d=\dist_G(x,y)$.
Let $x=x_0, x_1, \ldots, x_d=y$ be a path between $x$ and $y$ of length $d$. 
By minimality of $\dist_G(x,y)$, $xx_{d-1}\in E(G)$.
It follows that $d=2$.
Hence, $\{x,y\}\in \barE \cap F(G)$, which implies $x \in H_1$, a contradiction.

Our kernelization algorithm verifies whether there is a connected component $H$ such that $|H_2| \ge |\barE \cap F(G) \cap {V(H)\choose 2}|$.
If this is the case, we can satisfy every feasible anti-edge in $H$. 
Indeed, for every such anti-edge $\{u,v\}$ we choose a different vertex $x_{uv}$ in $H_2$.
Recall that by the claim proved above $x_{uv}$ is adjacent to both $u$ and $v$.
Thus every feasible anti-edge in $H$ gets its $2$-vertex path, and the paths are edge disjoint (even vertex disjoint), so we can color each of them independently into two colors. Hence the algorithm can remove $H$ from the graph and decrease $q$ by $\min\{q,|\barE \cap F(G) \cap {V(H)\choose 2}|\}$, obtaining an equivalent instance. This process is continued until no such complement exists. Let $C(G)$ be the set of connected components of $G$. 
Then \[|V_2|=\sum_{H\in C(G)}|H_2| < \sum_{H\in C(G)}\left|\barE \cap F(G) \cap {V(H)\choose 2}\right| = |\barE \cap F(G)|.\]
By the definition of $V_1$, we get also $|V_1| \le 2|\barE \cap F(G)|$. It follows that $|V| \le 3|\barE \cap F(G)| \le 3qk^k / k! = O(q)$, as required.
\end{proof}



\section{Further Work}
\label{sec:further}

We believe that this work only initiates the study of fine-grained complexity of variants of \probRC.
In particular, many open questions are still unanswered.
The ultimate goal is certainly to get tight bounds. 
We pose the following two conjectures.

\begin{conjecture}
\label{con:2^E}
For any integer $k\ge 2$, there is no $2^{o(|E|)}n^{O(1)}$-time algorithm for \probRC, unless ETH fails.
\end{conjecture}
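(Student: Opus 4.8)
The plan is to sharpen the reduction chain of Section~\ref{sec:hardness} so that the \probRC instance it produces has only $O(m)$ edges instead of the $O(m\log m)$ of Corollary~\ref{lem:from-TSAT-to-RTwoC}; combined with Lemma~\ref{lem_transformation} and the Sparsification Lemma (Lemma~\ref{lem_sparsification}) exactly as in Theorem~\ref{thm:RC-3/2}, this immediately yields the conjectured $2^{o(|E|)}$ lower bound. Tracing the blow-up, the extra $\log m$ factor is introduced in a single place: the last reduction (Lemma~\ref{lem:from-PartRTwoC-to-RTwoC} for $k=2$, Lemma~\ref{lem:from-PartRC-to-RC} for $k\ge 3$), where for every biclique $B_i=(U_i,V_i;E_i)$ of the cover of $\overline{G_S}$ supplied by Lemma~\ref{lem:bliclique_covering} we spend $|U_i|+|V_i|$ edges attaching the gadget $C_i$ to $U_i$ and $V_i$. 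So the only thing one needs is a biclique cover of $\overline{G_S}$ of total vertex-weight $\sum_i(|U_i|+|V_i|)=O(m)$, still subject to the constraint that no request of $S$ has both endpoints in one biclique. Since in Corollary~\ref{lem:from-TSAT-to-PartRTwoC} we have $|V|=O(m^{2/3})$, $\Delta_1(G_S)=O(m^{1/3})$ and $G_S$ is $O(1)$-colorable, the target $O(m)$ is exactly $O(|V|\cdot\Delta_1(G_S))$, i.e.\ one wants to shave precisely the $\log|V|$ factor from the bound of Lemma~\ref{lem:bliclique_covering}.

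The next step is to localize the $\log$ inside Lemma~\ref{lem:bliclique_covering}. The clique part (Proposition~\ref{prop:clique}) contributes, for each of the $O(1)$ color classes, $O(\log|V|)$ bicliques of weight at most $|V|$, for a total weight $O(|V|\log|V|)=O(m^{2/3}\log m)=o(m)$ — harmless. Everything that is $\omega(m)$ comes from the \emph{Jukna part}: for each of the $O(1)$ pairs of color classes we cover the bipartite complement of a bipartite graph of maximum degree $\le\Delta_1(G_S)$ with $O(\Delta_1(G_S)\log|V|)$ bicliques (Theorem~\ref{thm:jukna}), each of weight up to $|V|$, for a total of $O(|V|\Delta_1(G_S)\log|V|)=O(m\log m)$. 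Thus the conjecture reduces to a purely combinatorial question: does the bipartite complement $\bibarG$ of every $n$-vertex bipartite graph $G$ of maximum degree $\Delta$ admit a biclique cover of total vertex-weight $O(n\Delta)$? I would attack this by re-examining Jukna's randomized construction: a sampled biclique picks each left vertex with probability $1/\Delta$ (expected left-weight $\Theta(n/\Delta)$) but keeps \emph{all} right vertices that are $\bibarG$-adjacent to the chosen left set (expected right-weight $\Theta(n)$), so the weight is dominated by the right side. Natural attempts are to additionally sub-sample the right side at a suitable rate, or to alternate the roles of the two sides across rounds, and to show that over the $O(\Delta\log n)$ rounds the per-round weights form a geometric-type sum telescoping to $O(n\Delta)$.

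The main obstacle — and the reason this is still only a conjecture — is that this last hope is very likely unattainable along this route. Even the innocuous special case $\bibarG=K_{n/2,n/2}$ minus a perfect matching (so $\Delta=1$) appears to require biclique covers of total weight $\Omega(n\log n)$: the recursive-halving construction realizes exactly $\Theta(n\log n)$, and assigning to each index $i$ its left-membership set $S_i=\{t:i\in A_t\}$ and right-membership set $T_i=\{t:i\in B_t\}$ forces $S_i\cap T_i=\emptyset$ and $S_i\cap T_j\neq\emptyset$ for $i\neq j$, which in turn forces the $S_i$ to be $n$ distinct sets forming an antichain and drives the total weight up to $\Omega(n\log n/\log\log n)$, plausibly to $\Omega(n\log n)$ with more care. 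If such an $\Omega(n\log n)$ weight lower bound holds, then every reduction that satisfies each non-request pair via a private short rainbow path through a separate biclique gadget is doomed to produce $\omega(m)$ edges, and proving Conjecture~\ref{con:2^E} would require either (i) a fundamentally different gadget attaching a biclique to the graph with only $o(|U_i|+|V_i|)$ edges — which looks impossible, since each of the $|U_i|\cdot|V_i|$ covered pairs still needs its own rainbow path through the gadget — or (ii) bypassing \probPartRC entirely and reducing \probTSAT directly to \probRC with $O(m)$ edges, building the ``satisfy every other pair for free'' mechanism into the variable/clause gadgets of Lemma~\ref{lem:from-TSAT-to-PartRTwoCExt} from the outset. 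I currently see no way to do either, which is why the statement is posed only as a conjecture.
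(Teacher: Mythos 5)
The statement you were asked to prove is Conjecture~\ref{con:2^E}, which the paper itself leaves \emph{open}: there is no proof of it anywhere in the paper, and the authors explicitly remark in Section~\ref{sec:further} that they only obtain the weaker $2^{o(|E|/\log|E|)}$ bound for \probRC and that ``avoiding this $\log|E|$ factor seems to constitute a considerable technical challenge.'' You correctly recognize this, and your proposal is not a proof but a diagnosis of the obstruction --- which is accurate and consistent with the paper. Your tracing of the $\log m$ loss is right: Corollary~\ref{lem:from-TSAT-to-RTwoC} produces $O(m\log m)$ edges, the clique part of Lemma~\ref{lem:bliclique_covering} contributes only $O(m^{2/3}\log m)=o(m)$ weight, and the dominant term $O(|V|\,\Delta_1(G_S)\log|V|)=O(m\log m)$ comes from attaching the $O(\Delta_1(G_S)\log|V|)$ Jukna bicliques, each of weight up to $|V|$. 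Your suspicion that this is not merely an artifact of Jukna's randomized construction is also well founded: for the crown graph ($K_{n,n}$ minus a perfect matching) the set-pair system you describe ($S_i\cap T_i=\emptyset$, $S_i\cap T_j\neq\emptyset$ for $i\neq j$) is subject to Hansel/Katona--Szemer\'edi-type weight lower bounds of order $n\log n$, so any reduction that realizes each uncovered pair by a private short path through a per-biclique gadget is indeed stuck at $\Omega(m\log m)$ edges. In short: there is nothing to compare against, you have proven nothing, and you do not claim otherwise; your analysis of why the conjecture resists the paper's technique is sound.
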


\begin{conjecture}
\label{con:2^V2}
For any integer $k\ge 2$, there is no $2^{o(n^2)}n^{O(1)}$-time algorithm for \probRC, unless ETH fails.
\end{conjecture}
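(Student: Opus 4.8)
To prove Conjecture~\ref{con:2^V2} within the framework of this paper it suffices to strengthen Corollary~\ref{lem:from-TSAT-to-RTwoC}, improving its compression ratio from $m^{2/3}$ to $m^{1/2}$: from a \probTSAT formula with $m$ clauses, construct in polynomial time an equivalent \probRC instance on $O(\sqrt m)$ vertices. Indeed, combining this with Lemma~\ref{lem_sparsification} and Lemma~\ref{lem_transformation} exactly as in the proof of Theorem~\ref{thm:RC-3/2}, a hypothetical $2^{o(N^2)}n^{O(1)}$-time algorithm for \probRC would, on an $O(\sqrt n)$-vertex instance, run in $2^{o(n)}$ time; summing over the $2^{\eps n}$ sparsified sub-formulas, with $\eps$ chosen small relative to the ETH constant, this yields a sub-$2^{cn}$ algorithm for \probTSAT, contradicting ETH. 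It is enough to treat $k=2$ and then lift via Lemma~\ref{lem:from-PartRTwoCExt-to-PartRCExt} and Lemma~\ref{lem:from-PartRCExt-to-PartRC}, which multiply $|V|$ only by $O(k)$. Note this is not ruled out information-theoretically: an $O(\sqrt n)$-vertex graph carries $\Theta(n)$ potential edges, hence $2^{\Theta(n)}$ $2$-colourings, enough to encode a $\Theta(n)$-variable assignment.

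\textbf{The road map and a first sub-obstacle.} One would keep the four-step structure of Figure~\ref{fig:roadmap}. Steps two and three (Lemmas~\ref{lem:from-PartRTwoCExt-to-PartRCExt} and~\ref{lem:from-PartRCExt-to-PartRC}) already inflate $|V|$ by only constant (resp.\ $\mathrm{poly}(k,\ell)$) factors. The last step (Lemmas~\ref{lem:bliclique_covering},~\ref{lem:from-PartRTwoC-to-RTwoC},~\ref{lem:from-PartRC-to-RC}) adds $O(p^2\Delta_1(G_S)\log|V|)$ vertices, and with $|V|=O(\sqrt n)$ one is forced into $|S|=\Theta(m)=\Theta(n)$ and hence $\Delta(G_S)=\Theta(\sqrt n)$, so this term is $\Theta(\sqrt n\log n)$ --- a $\log n$ factor too large, since it would only give a $2^{o(n\log^2 n)}$ lower bound, consistent with ETH. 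Removing this logarithm is a genuine (though, I believe, secondary) task: one would need either a biclique cover of $\overline{G_S}$ by $O(\sqrt n)$ \emph{$G_S$-independent} bicliques, exploiting the very regular structure of the particular $G_S$ produced by step one rather than just its bounded degree and $O(1)$-colourability, or a redesign of the last step avoiding biclique covers altogether.

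\textbf{The main obstacle.} The real difficulty is that step one, Lemma~\ref{lem:from-TSAT-to-PartRTwoCExt}, is built around ``variable $\leftrightarrow$ $2$-path $u\,\middle(x)\,l$'' gadgets that are pairwise edge-disjoint ($(P_4)$) and whose requests $p_x$ have a \emph{unique} connecting $2$-path ($(P_3)$), the latter enforced by property $(P_2)$ (variables in a common layer receive distinct middle vertices). Property $(P_2)$ alone forces at most $|M|$ variables per layer, hence $n\le(\#\text{layers})\cdot|M|$, and since $|V|\ge|M|+2(\#\text{layers})\cdot(\text{layer width})$ with $\text{layer width}=\Omega(\sqrt{n/\#\text{layers}})$, minimising over the number of layers yields exactly $|V|=\Omega(n^{2/3})$; the same $n^{2/3}$ barrier affects the clause gadgets and the connection gadgets. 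To reach $O(\sqrt n)$ one must therefore either find a combinatorial design that lets variables in one layer share middle vertices while still guaranteeing a controllable (not necessarily unique, but still decodable) connecting walk --- e.g.\ an explicit incidence structure with $O(1)$-bounded pairwise overlaps packing $\Theta(n)$ gadgets into $\Theta(\sqrt n)$ vertices --- or abandon the layered $2$-path encoding and the edge-disjointness of the variable gadgets entirely, at the price of controlling the colour conflicts created by shared edges. I expect this to be the crux: with $|V|=\Theta(\sqrt n)$ the degree budget is tight up to constants, so there is essentially no slack, and taming the mutual interference of the $\Theta(n)$ overlapping variable, clause and connection gadgets --- and of the precolouring conflict graph $CG_I$ --- under that constraint appears to require an idea well beyond the colour-class partitioning used here. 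A plausible intermediate milestone would be to first settle Conjecture~\ref{con:2^E}: a reduction to \probRC with only $O(n)$ edges (but no bound on $|V|$), from which one would then try to additionally compress the vertex set.
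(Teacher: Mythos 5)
The statement you were asked to prove is not a result of the paper at all: it is Conjecture~\ref{con:2^V2}, posed in the ``Further Work'' section precisely because the authors cannot prove it --- the strongest bound the paper establishes is $2^{o(n^{3/2})}$ (Theorem~\ref{thm:RC-3/2}). Your submission, read honestly, is not a proof either: it reduces the conjecture to a hypothetical strengthening of Corollary~\ref{lem:from-TSAT-to-RTwoC} (compressing an $m$-clause \probTSAT formula into an equivalent \probRC instance on $O(\sqrt m)$ vertices) and then argues that the paper's own machinery cannot deliver this. That analysis is largely accurate: the ETH accounting in your first paragraph is correct; your optimisation showing that the layered variable gadgets, constrained by $(P_2)$ and $(P_3)$, bottom out at $|V|=\Theta(n^{2/3})$ is essentially right; and the observation that with $|V|=O(\sqrt n)$ and $|S|=\Theta(n)$ the biclique-cover step of Lemma~\ref{lem:bliclique_covering} adds $\Theta(\sqrt n\log n)$ vertices --- so that even a successful first step would only rule out $2^{o(N^2/\log^2 N)}$-time algorithms rather than $2^{o(N^2)}$ --- is also correct.

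The genuine gap is therefore simply that no construction is supplied: everything after ``it suffices to'' is a reduction of one open problem to another, and the crux you yourself name --- packing $\Theta(n)$ variable, clause and connection gadgets into $\Theta(\sqrt n)$ vertices while keeping the encoded assignment decodable from rainbow walks, and simultaneously replacing or refining the Jukna-cover step so that $\overline{G_S}$ is covered by $O(\sqrt n)$ bicliques that are independent with respect to $E(G)$ --- is exactly what remains unknown. Since the paper likewise offers no proof, your text should be judged as a correct assessment of the obstacles (and of why the conjecture is information-theoretically plausible), not as a proof; it would only be wrong if you claimed the conjecture settled. One further caution for any future attempt: improving only the first step is not enough, because the subsequent lemmas propagate $\Delta(G_S)$, $|\Dom(c_0)|$ and the conflict-graph colourings, so a $\sqrt n$-vertex intermediate instance with $\Theta(n)$ requests forces $\Delta(G_S)=\Theta(\sqrt n)$ and every later step must be re-examined under that much tighter degree budget, as you partially acknowledge.
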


Note that in this work we have settled Conjecture~\ref{con:2^E} for \probPartRC, and for \probRC we showed a slightly weaker, $2^{o(|E|/\log|E|)}n^{O(1)}$ bound.
However, avoiding this $\log|E|$ factor seems to constitute a considerable technical challenge.

In this paper we gave two algorithms for \probPartRC parameterized by $|S|$, one working in $2^{|S|}n^{O(1)}$ time for $k=2$ and another, working in time $|S|^{O(|S|)}n^{O(1)}$ for every fixed $k$.
We conjecture that there exists an algorithm running in time $2^{O(|S|)}n^{O(1)}$ for every fixed $k$.

Finally, we would like to propose yet another parameterization of \probRC.
Assume we are given a graph $G=(V,E)$ and a subset of vertices $S\subseteq V$.
In the \probSteinerRC problem the goal is to determine whether there is a rainbow $k$-coloring such that every pair of vertices in $S$ is connected by a rainbow path.
By our Theorem~\ref{thm:RC-3/2}, \probSteinerRC has no algorithm running in time $2^{o(|S|^{3/2})}$, under ETH.
On the other hand, our algorithm for \probPartRC implies that \probSteinerRC parameterized by $|S|$ admits an FPT algorithm with running time of $2^{O(|S|^2\log |S|)}n^{O(1)}$.
It would be interesting make the gap between these bounds smaller.

\bibliographystyle{abbrv}

\end{document}